\newtheorem{thm}{Theorem}[section]
\newtheorem{propo}[thm]{Proposition}
\newtheorem{lemme}[thm]{Lemma}
\newtheorem{corro}[thm]{Corollary}
\newtheorem{defi}[thm]{Definition}
\newtheorem{remarque}[thm]{Remark}
\newtheorem{example}[thm]{Example}
\newtheorem{Hyp}{Assumption}
\newenvironment{prooff}[1]{\begin{trivlist}
\item {\it  Proof}\quad} {\qed\end{trivlist}}
\def\R{\mathbb R}
\def\C{\mathbb P}
\def\C{\mathbb C}
\def\E{\mathbb E}
\def\P{\mathbb P}
\def\shd{{\cal D}}
\def\shf{{\cal F}}
\def\shl{{\cal L}}
\def\halb{{\frac{1}{2}}}
\author{{\sc St\'ephane GOUTTE $^*$}\footnote{Universit\'e Paris 13,
    Math\'ematiques LAGA, Institut Galil\'ee, 99 Av. J.B. Cl\'ement 93430
    Villetaneuse. E-mail:{\tt
      goutte@math.univ-paris-diderot.fr}} \thanks{Luiss Guido Carli - Libera
    Universit\`a Internazionale degli Studi Sociali Guido Carli di Roma}
  {\sc,}\ {\sc Nadia OUDJANE $^*$}\thanks{EDF R\&D, 
Universit\'e Paris 13, FiME (Laboratoire de Finance des March\'es de l'Energie
(Dauphine, CREST,  EDF R\&D) www.fime-lab.org).
E-mail:{\tt 
nadia.oudjane@edf.fr}}
\ {\sc and}\ {\sc Francesco RUSSO }  
\footnote{ENSTA ParisTech, UMA, 
 Unit\'e de Math\'ematiques appliqu\'ees,
828, boulevard des Mar\'echaux,
F-91120 Palaiseau (France)
}
\thanks{INRIA Rocquencourt 
and Cermics Ecole des Ponts, Projet MATHFI.
  E-mail:{\tt  francesco.russo@ensta-paristech.fr}}
}
\date{December 3rd 2012}
\title{\bf Variance optimal hedging for continuous time additive processes 
         and applications}
\newcommand{\MBFigure}[6]{
$\left. \right.$ \\
\refstepcounter{figure}
\addcontentsline{lof}{figure}{\numberline{\thefigure}{\ignorespaces #5}}
\begin{center}
\begin{minipage}{#1cm}
\centerline{\includegraphics[width=#2cm,angle=#3]{#4}}
\begin{center}
\upshape{F\textsc{ig} \normal
\end{center}
size{\thefigure}. $-$} #5
%\upshape{F\textsc{igure} \normalsize{\thefigure} $-$} #5
\end{center}
\label{#6}
\end{minipage}
\end{center}
$\left. \right.$ \\}
\begin{document}
\maketitle

\begin{abstract} For a large class of vanilla contingent claims, 
we establish an explicit F\"ollmer-Schweizer 
decomposition when the underlying is 
an exponential of an additive process.
This allows to provide an efficient algorithm for solving the
mean variance hedging problem.
Applications to models derived from the electricity market are performed.
\end{abstract}

\bigskip\noindent {\bf Key words and phrases:}  Variance-optimal
hedging, F\"ollmer-Schweizer decomposition, L\'evy's processes, 
Electricity markets, Processes
 with independent increments, Additive processes.

\bigskip\noindent  {\bf 2000  AMS-classification}: 60G51, 60H05, 60J25, 60J75

%\medskip
%\noindent   {\bf JEL-classification}: C02, G11, G12, G13 

\bigskip

%\newpage

%\tableofcontents

%\newpage

%%%%%%%%%%%%%%%%%%%%%%%%%%%%%%%%%%%%
\section{Introduction}

\setcounter{equation}{0}

%%%%%%%%%%%%%%%%%%%%%%%%%%%%%%%%%%%
%\subsection{Hedging and Pricing approaches}
%%%%%%%%%%%%%%%%%%%%%%%%%%%%%%%%%%%%%%%%%%
There are basically two main approaches to define the
 \textit{mark to market} of a contingent claim: one relying on the
 \textit{no-arbitrage assumption} and the other related to a
 \textit{hedging portfolio}, those two approaches converging in the
 specific case of complete markets.
In this paper we focus on the hedging approach.
 A simple introduction to the different hedging and pricing models in incomplete markets can be found in chapter~10 of~\cite{livreTankovCont}.\\
When the market is not complete,  it is not possible, in general, to
hedge perfectly an option.
 One has to specify  risk criteria,  and
 consider the hedging strategy that minimizes the distance (in terms of
 the given criteria) between the payoff of the option and the terminal
 value of the hedging portfolio.
In practice  the price of the option is
related to two components: first, the initial-capital value
and  second the quantitative evaluation of
 the residual risk induced by this imperfect hedging strategy
 (due to incompleteness). \\
% to the cost of
% this imperfect hedging strategy.
%In practice, that price has to take into account both the 
%initial capital value and the
%quantitative evaluation of
% to which is added in practice another prime related to
% the residual risk induced by incompleteness. \\
%
Several criteria can be adopted. 
%\begin{itemize}
%		\item 
The aim of super-hedging is to hedge all cases. This approach yields in general prices that are too  expensive to be realistic~\cite{elkaroui-quenez}. 
%		\item 
Quantile hedging modifies this approach allowing for a limited probability of loss~\cite{fl99}. 
%		\item 
Indifference utility pricing introduced in~\cite{hodges-neuberger} defines the price of an option to sell (resp. to buy) as the minimum initial value
 s.t. the hedging portfolio with the option sold (resp. bought) is equivalent
 (in term of utility) to the initial portfolio. 
%		\item 
Global quadratic hedging approach was developed by M. Schweizer
(\cite{S94},~\cite{S95bis}):
  the  distance defined by the expectation of the square 
of the difference
 between the hedging portfolio and the payoff is 
minimized. Then, contrarily to the case of utility maximization,
in general that approach  provides
linear prices  and hedge ratios with respect to the payoff. 

%losses and gains are treated in a symmetric manner, which yields a
%\textit{fair price} for both the buyer and the seller of the option. 

%		\item \ldots
%\end{itemize}  
%How to choose the risk criteria?\\
%How to price the residual risk?
%
In this paper, we follow this last approach
% and our developments 
%can be used in both the \textit{no-arbitrage value} 
%and the \textit{hedging value}  framework.
%  in both cases under the objective measure.
%\begin{itemize}
%	\item 
either to derive the hedging strategy minimizing the
 \textit{global quadratic hedging error} 
%under the objective measure,
 for a given 
 %pricing rule,
 initial capital, 
%	\item 
or to derive both the initial capital and the hedging strategy
 minimizing the same error.
%\textit{global quadratic hedging error}.
%under  the objective measure. 
Both actions are referred to the objective measure.
Moreover we also derive explicit formulae for
the global quadratic hedging error which together with the initial
capital allows the practitioner to define his option price.

%\end{itemize}
We spend now some words related to the global quadratic hedging approach
 which is also 
called {\it mean-variance hedging} or {\it global risk minimization}.
Given a square integrable r.v. $H$, we say that the pair $(V_0, \varphi)$
is optimal if $(c, v) = (V_0, \varphi)$ minimizes the functional
$\E \left (H-c- \int_0^T v dS\right)^2$. The quantity
 $V_0$  and process  $\varphi$ represent 
 the initial capital and the optimal hedging strategy
of the contingent claim $H$.

% and $\varphi$ is the optimal strategy. \\
Technically speaking, the global risk minimization problem  
 is based on the local risk minimization one which is
 strictly related to  the so-called {\it F\"ollmer-Schweizer} decomposition 
(or FS decomposition) of a square integrable random variable
 (representing the contingent claim)
with respect to an $(\shf_t)$-semimartingale $S = M + A$  
modeling the asset price: $M$ is an $(\shf_t)$-local martingale
and $A$ is a bounded variation process with $A_0 = 0$.
%In fact, this decomposition provides  the solution to the so called
%local risk minimization problem, see \cite{FS91}. \\
%CONTRIBUTION:  RECENTLY BIAGINI, ... 
%KALLSEN
 Mathematically, the FS decomposition, constitutes the generalization of the
 martingale representation theorem (Kunita-Watanabe representation), which is valid 
when $S$ is a Brownian motion or a martingale. Given a square integrable 
random variable $H$, the problem consists in expressing 
$H$ as $H_0 + \int_0^T \xi dS + L_T$ where $\xi$ is predictable and
$L_T$ is the terminal value of an orthogonal martingale $L$ to $M$,
i.e. the martingale part of $S$.
In the seminal paper  \cite{FS91},  the problem is treated 
for an  underlying process  $S$  with continuous paths.
In the general case, $S$ is said to satisfy the {\bf structure condition}
(SC)  if there is a predictable process $\alpha$ such that 
$A_t = \int_0^t \alpha_s d\langle M\rangle_s$ and 
$\int_0^T \alpha_s^2 d\langle M\rangle_s < \infty $ a.s.
In the sequel, most of the contributions were produced
in the multidimensional case. Here, for simplicity, we will
formulate all the results in the one-dimensional case. 
%%%%%%%%%%%%%%%%%%%%%%%%%%%%%%%%%%%%%%%%%%%%%%%%%%%%%%%%%%%
%BSDE
%%%%%%%%%%%%%%%%%%%%%%%%%%%%%%%%%%%%%%%%%%%%%%%%%%%%%%%%%%%%
% An interesting connection with the theory of backward stochastic differential
% equations (BSDEs)  in the sense of \cite{pardouxpeng}, was proposed in 
% \cite{S94}.  \cite{pardouxpeng} considered BSDEs driven by Brownian motion;
% in \cite{S94} the Brownian motion is in fact replaced by $M$.
% % in the case of Brownian motion. 
% The first author who considered
% a BSDE driven by a martingale was \cite{buckdahn}.
% Suppose that $V_t = \int_0^t \alpha_s d \langle M\rangle_s$. The BSDE problem consists
% in finding a triple $(V,\xi, L)$ where
% $$ V_t = H - \int_t^T \xi_s dM_s - \int_t^T \xi_s \alpha_s d \langle M\rangle_s
% - (L_T - L_t), $$
% and $L$ is an $(\shf_t)$-local martingale
% orthogonal to $M$. 
% \\
%%%%%%%%%%%%%%%%%%%%%%%%%%%%%%%%%%%%%%%%%%%%%%%%%%%%%%%%%%%
%BSDE
%%%%%%%%%%%%%%%%%%%%%%%%%%%%%%%%%%%%%%%%%%%%%%%%%%%%%%%%%%%%
%In fact, this decomposition provides  the solution to the so called
%local risk minimization problem, see \cite{FS91}. 
%In this case, $V_t$ represents the {\it hedging portfolio value}
% of the contingent claim
%at time $t$ and

 $H_0$ constitutes in fact the initial capital and it is given by
 the expectation of $H$ under
the so called {\it variance optimal signed measure} (VOM). 
 Hence, in full generality, the initial capital $V_0$ is not guaranteed to be
 an arbitrage-free price. For continuous processes, the variance optimal
 measure 
is proved to be non-negative under  a  mild no-arbitrage
condition~\cite{S96}.
 Arai (\cite{Arai052} and~\cite{Arai05}) provides sufficient conditions
 for the variance-optimal martingale measure to be a probability
 measure, even 
for discontinuous semimartingales.\\
In the framework of FS decomposition, a
 process which plays a significant role is the so-called
{\it mean variance trade-off} (MVT) process $K$. This notion is 
inspired by the theory in
discrete time started by \cite{Schal94}; under condition (SC), in the continuous time case
$K$ is defined as $K_t = \int_0^t \alpha^2_s d\langle M\rangle_s, \ t \in [0,T]$.
 In fact, in \cite{S94} also appear a slight more general  condition, called  (ESC), together with a corresponding  EMVT process;   we will
nevertheless not discuss here further  details.
If the MVT process is deterministic, \cite{S94} solves the mean-variance hedging problem 
and also provides  an efficient 
relation between  the solution of the global risk minimization problem and 
 the FS decomposition, see Theorem~\ref{mainthm}.
We remark that, in the  continuous case, treated by \cite{FS91},
 no need of any condition on $K$ is required.  It also shows that, for obtaining
 the mentioned relation, previous 
condition is not far from being optimal.   
%When the MVT process is deterministic, \cite{S94} 
% to solve
% the global risk minimization problem and 
%also provides  an efficient 
%relation between  the solution of the global risk minimization problem and 
% the FS decomposition, see Theorem~\ref{mainthm}. It also shows that, for obtaining
% the mentioned relation, previous 
%condition is not far from being optimal.   
The next important step was done in 
\cite{MS95} where, under the only condition that $K$ is uniformly bounded,
the FS decomposition of any square integrable random variable 
exists, it is unique and the global minimization
problem admits a solution. \\
More recently has appeared an incredible amount of papers in
the framework of global (resp. local) risk minimization, so that 
it is impossible 
to list all of them and it is beyond our scope. 
Four significant papers containing a good list of references
are \cite{S01}, \cite{nunno}, \cite{CK08} and \cite{S10}.
%%%%%%%%%%%%%%%%%%%%%%%%%%%%%%%%%%%%%%%%%%%%%%%%%%%%%%%%%%%%
%LINK WITH VARIANCE OPTIMAL MEASURE
%The price $V_0$ constitutes in fact the expectation under
%the so called {\it variance optimal measure} (VOM), 
%as it will be explained at Section \ref{sec:linkVOM},
%with references to
% \cite{S01},
%\cite{Arai052} and 
%\cite{Arai05}. \\
% CHECK provides sufficient conditions so that
%the variance optimal measure is a true probability.
%The general mean-variance minimization problem was faced using duality 
%techniques by \cite{karatzas} CHECK
%%%%%%%%%%%%%%%%%%%%%%%%%%%%%%%%%%%%%%%%%%%%%%%%%%%%%%%%%%%%%%%%%%%%%

%%%%%%%%%%%%%%%%%%%%%%%%%%%%%%%%%%%%%%%%%%%%%%%%%%%%%%%%%%%%%%%%%%%%%%%%%%%%%%%%%%%%%%%%%%%
% The FS decomposition is a decomposition relative to a semimartingale which generalizes the
% Kunita watanabe decomposition in the martingale case. 
%For the sake of financial applications, one would like to find an expression for the FS decomposition  as \textit{explicit} as possible. 
In this paper, 
we are not interested in generalizing the conditions
under which the FS decomposition exists.
% Besides, the numerical computation of BSDE (and therefore of FS decomposition) is a real issue in applied probability and mathematical finance. 
% We recall that Clark-Ocone formula 
%  provides an  explicit form
%  for the Kunita Watanabe decomposition (in the Brownian case).
 The present article aims, in the spirit of a simplified 
 Clark-Ocone formula, at providing  an explicit form for the FS
 decomposition for a large class of European payoffs $H$, when the process $S$ 
%  $S$ is a process with independent increments (PII) or
is  an exponential of
 additive process which is not necessarily a martingale. 
% In the case of L\'evy processes, 
% there are some Clark-Ocone type formula, but they are 
% in a different spirit than ours. We acknowledge
% for instance \cite{NOP, lokka}. 
%
From a practical point of view, this serves to compute efficiently 
the variance optimal hedging strategy which is directly related to 
the FS decomposition, since the mean-variance trade-off
 is for that type of processes deterministic. 
%
%%%%%%%%%%%%%%%%%%%%%%%%%%%%%%%%%%%%%%%%%%%%%%%%%%%%%%%%%%%%%%
%was applied for instance in \cite{filipovic} in the framework of time inhomogeneous
%affine processes;  
One major idea proposed by Hubalek, Kallsen and Krawczyk in
 \cite{Ka06}, in the case where the log price is a
L\'evy process, consists in determining an explicit 
expression for the  variance optimal hedging strategy
for exponential payoffs and then deriving, by linear combination
the corresponding optimal strategy for a large class of payoff functions
(through Laplace type transform).
Using the same idea, this paper  extends results of~\cite{Ka06} 
considering prices that are exponential of additive processes  and  
contingent claims that are Laplace-Fourier transform of a
finite measure.   In this generalized framework, 
we could  formulate assumptions as general as possible.  
%Generalizing the framework of \cite{Ka06} to non-stationary processes
%we could  formulate assumptions as general as possible.  
In particular, 
our results do not require any assumption 
on the
 absolute continuity of the cumulant generating function of
 $\log(S_t)$, thanks to the use of a natural {\it reference variance
   measure}   instead of the usual Lebesgue measure, see Section
\ref{SSRVM}. 
In the context of non stationary processes, the idea to represent payoffs
 functions as Laplace transforms  was applied by  \cite{kluge} 
(that we discovered after
finishing our paper)   to derive explicit pricing formulae  and by 
\cite{filipovic}  to investigate time inhomogeneous affine processes.
%was applied for instance in \cite{filipovic} in the framework of time inhomogeneous
%affine processes;  
However, the \cite{kluge} generalization
was limited to additive processes with absolutely continuous characteristics
and to the pricing application: hedging strategies were not addressed.

One practical motivation for considering processes with independent and 
possibly non stationary increments came from 
hedging problems in the electricity market. Because of non-storability of electricity, the hedging instrument 
is in that case, 
a forward contract with value $S^{0}_t = e^{-r(T_d-t)}(F_t^{T_d} - F_0^{T_d})$ where $F_t^{T_d}$ 
is the forward price given at time $t\leq T_d$ for delivery of 1MWh at time $T_d$. Hence, the dynamics of the underlying $S^0$ is directly related to the dynamics of forward prices.  Now, forward prices are known to exhibit both heavy tails (especially on the short term) and a volatility term structure according to the \textit{Samuelson hypothesis}~\cite{Sam65}. More precisely, 
as the delivery date $T_d$ approaches, the forward price is more sensitive to the information arrival concerning the electricity supply-demand balance for the given delivery date. This phenomenon causes great variations in the forward prices close to delivery and then increases the volatility. Hence, those features require the use of forward prices models with both non Gaussian and non stationary increments in the stream of the model proposed 
by Benth and Saltyte-Benth, see
\cite{BS03} and also \cite{Benth-book}.

The paper is organized as follows. After this introduction 
%and 
%some generalities
%about semimartingales, 
we introduce the notion of FS decomposition and 
describe  global risk minimization. 
 Then, we examine at Section~3  the explicit FS decomposition 
 for exponential of additive processes. Section~4 is devoted to the
solution to the global minimization problem, Section~5 to 
theoretical examples and Section 6 to
the case of a model intervening in the electricity market. Section 7 
is devoted to simulations. 
%This work has a companion paper, i. e. \cite{GOR}
%which concentrates on the discrete time case leaving more
% space to numerical implementations.

%NADIA: quelques mots sur la nature ``objective'' acheteur-vendeur de la methode?

% ON TRAITE SITUATION LOIN DU CAS GAUSSIEN, CONTRAIREMENT AU CAS DE KALLSENET ALIA

%----------------------------------------------------------------
%%%%%%%%%%%%%%%%%%%%%%%%%%%%%%%%%%%%%%%%%%%%%%%%%%%%%%%%%%%%%%%%%%%%%%%%%%%%

\section{Preliminaries on additive processes and
F\"ollmer-Schweizer decomposition}
%%%%%%%%%%%%%%%%%%%%%%%%%%%%%%%%%%%%%%%%%%%%%%%%%%%%%%%%%%%%%%%%%%%%%%%%%%%%
\label{Generalities}

\setcounter{equation}{0}

In the whole paper, $T>0$, will be a fixed terminal time and we will 
denote by $(\Omega,\mathcal{F},(\mathcal{F}_t)_{t \in [0,T]},\P)$ a
 filtered probability space,
fulfilling the usual conditions.
 In the whole paper, without restriction of generality 
$\shf$ will stand for the $\sigma$-field $\shf_T$.
%$\shl^2$ will stand for the space of square integrable random variables
%$\shl^2(\Omega,\mathcal{F}, \P)$ 

\subsection{Generating functions}
%%%%%%%%%%%%%%%%%%%%%%%%%%%%%%%%
Let $X=(X_t)_{t \in [0,T]}$ be a real valued stochastic process.
%
% \begin{defi}
% %[Characteristic Function]
% \label{defPhi}
% The {\bf characteristic function} of (the law of) $X_t$ 
% is the continuous mapping 
% $$
% \varphi_{X_t}:\mathbb{R} \rightarrow \mathbb{C} \quad \textrm{with}\quad \varphi_{X_t}(u)=\mathbb{E}[e^{iu X_t}]\ .
% $$

% \end{defi}
% %
\begin{defi}
%[Cumulant generating function]
\label{defKappa}
The {\bf cumulant generating function} of (the law of) $X_t$ is the
% continuous
mapping $z \mapsto {\rm Log} (\mathbb{E}[e^{zX_t}])$ 
where ${\rm Log}(w) = \log(\vert w \vert ) + i {\rm Arg (w)}$
where ${\rm Arg (w)}$ is the Argument of $w$, chosen in $]-\pi,
\pi]$; ${\rm Log}$ is the principal value logarithm.
In particular we have
$$
\kappa_{X_t}:D \rightarrow \mathbb{C} \quad \textrm{with} \quad e^{\kappa_{X_t}(z)}=\mathbb{E}[e^{zX_t}]\ ,
$$
where $D:=\{z \in \mathbb{C}\ \vert\ \mathbb{E}[e^{Re(z)X_t}]<\infty, \ 
 \forall t \in [0,T]\}$.
 In the sequel, when there will be no ambiguity on the underlying
 process
 $X$, we will use the shortened notations $\kappa_t$ for
 $\kappa_{X_t}$.
%and $\varphi_t$ for $\varphi_{X_t}$. 
We observe that $D$ includes the imaginary axis.
\end{defi}
\begin{remarque}\label{remarkR2}
\begin{enumerate}
\item For all $z \in D$, $\kappa_t(\bar{z})=\overline{\kappa_t(z)}\ ,$ where   $\bar z$ denotes the conjugate complex of $z\in \mathbb{C}$.
%Indeed, for any $z\in D$, 
%$$
%\exp(\kappa_t(\bar{z}))=\mathbb{E}[\exp(\bar{z}X_t)]=\mathbb{E}[\overline{\exp(zX_t)}]=\overline{\mathbb{E}[\exp(zX_t)]}=\overline{\exp(\kappa_t(z))}=\exp(\overline{\kappa_t(z)})\ .
%$$
\item For all $z \in D\cap\mathbb{R}\ ,\ \kappa_t(z) \in \mathbb{R}\ .$
\end{enumerate}
\end{remarque}
In the whole paper $\R^\star$ will stand for $\R - \{0\}.$

\subsection{Semimartingales}
%%%%%%%%%%%%%%%%%%%%%%%%%%%%
An $(\mathcal{F}_t)$-semimartingale $X=(X_t)_{t \in [0,T]}$ is a process
of the form $X=M+A$, where $M$ is an $(\mathcal{F}_t)$-local martingale
and $A$ is a bounded variation adapted process vanishing at
zero. $||A||_T$ will denote the total variation of $A$ on $[0,T]$.
If $A$ is $(\mathcal{F}_t)$-predictable then $X$ is called an 
 $(\mathcal{F}_t)$-special semimartingale.
The decomposition of an  $(\mathcal{F}_t)$-special semimartingale
is unique, see Definition 4.22 of \cite{JS03}.
 Given two  $(\mathcal{F}_t)$-
 locally square integrable martingales $M$ and $N$,
 $\left\langle M,N\right\rangle$
 will denote the angle bracket of $M$ and $N$, i.e. the 
unique bounded variation predictable process vanishing at zero such that $MN-\left\langle M,N\right\rangle$ is an $(\mathcal{F}_t)$-local martingale. If $X$ and $Y$ are $(\mathcal{F}_t)$-semimartingales, $\left[X,Y\right]$ denotes the square bracket of $X$ and $Y$, i.e. the quadratic covariation of $X$ and $Y$.  
In the sequel, if there is no confusion about the underlying
 filtration $(\shf_t)$, we will simply speak about
semimartingales, special semimartingales, local martingales, martingales.

%%%%%%%%%% Peut-etre enlever
 All along this paper we will consider $\mathbb{C}$-valued
  martingales (resp. local martingales, semimartingales). Given two
  $\mathbb{C}$-valued local martingales $M^1,M^2$ then
  $\overline{M^1},\overline{M^2}$ are still local martingales. Moreover
  $\langle \overline{M^1},\overline{M^2}\rangle=\overline{\langle
    M^1,M^2 \rangle}\ .$
 If $M$ is a $\C$-valued martingale then $\langle M, \overline{M} \rangle$
is a real valued increasing process.

All the local martingales admit a cadlag version. By default, 
when we speak about local martingales we always refer to their
cadlag version.
Given a real cadlag stochastic process $X$, the quantity 
$\Delta X_t$ will represent the jump $X_t - X_{t-}$. 
More details about previous notions are given in chapter I  of~\cite{JS03}.
%
% \begin{remarque}\label{remarque17}
% \begin{enumerate}
% \item %%%%%%%% Eventuellement dans les preuves
% %All along this paper we will consider $\mathbb{C}$-valued
%   martingales (resp. local martingales, semimartingales). Given two
%   $\mathbb{C}$-valued local martingales $M^1,M^2$ then
%   $\overline{M^1},\overline{M^2}$ are still local martingales. Moreover
%   $\langle \overline{M^1},\overline{M^2}\rangle=\overline{\langle
%     M^1,M^2 \rangle}\ .$
% \item If $M$ is a $\C$-valued martingale then $\langle M, \overline{M} \rangle$
% is a real valued increasing process.
% \end{enumerate}
% \end{remarque}
%
%\begin{thm}
%\label{th:XVT}
%$(X_t)_{t \in [0,T]}$ is a real semimartingale iff the characteristic function, $t \mapsto \varphi_t (u)$, has bounded variation over all finite intervals,
%for all $u \in \R$.
%\end{thm}

For any special semimartingale X we define
$|| X||^2_{\delta^2}=\mathbb{E}\left[[M,M]_T\right]+\mathbb{E}\left(||A||_T^2\right)\ .
$
The set $\delta^2$ is the set of $(\mathcal{F}_t)$-special semimartingale $X$ for which $|| X||^2_{\delta^2}$ is finite.

%%% APRES
%A \textbf{truncation function} defined on $\mathbb{R}$ is a bounded function $h: \mathbb{R} \rightarrow \mathbb{R}$ with compact support such that $h(x)=x$ in a neighbourhood of $0$.  

%An important notion, in the theory of semimartingales, is the notion of characteristics, defined in definition II.2.6 of~\cite{JS03}. Let $X=M+A$ be a real-valued semimartingale. A \textbf{characteristic} is a triplet, $(b,c,\nu)$, depending on a fixed truncation function, where 

%\begin{enumerate}
%\item $b$ is a predictable process with bounded variation;
%\item $c=\left\langle M^c,M^c\right\rangle$, $M^c$ 
%being the continuous part of $M$ according to  Theorem I.4.18 of~\cite{JS03}.
%\item $\nu$ is a predictable random measure on $\mathbb{R}^+ \times \mathbb{R}$, namely the compensator of the random measure $\mu^X$ associated to the jumps of X.
%\end{enumerate}
% FIN APRES
%Given a real cadlag stochastic process $X$, the quantity 
%$\Delta X_t$ will represent the jump $X_t - X_{t-}$. 

%%%%%%%%%%%%%%%%%%%%%%%%%%%%%%%%%%%%%%%%%%%%%%%%%
\subsection{F\"ollmer-Schweizer Structure Condition}
%%%%%%%%%%%%%%%%%%%%%%%%%%%%%%%%%%%%%%%%%%%%%%%%%%%
\label{sec:FMStruc}

Let $X=(X_t)_{t \in [0,T]}$ be a real-valued
 special semimartingale with canonical decomposition, 
$
X= M+A.
$
For simplicity, we will  just suppose in the sequel that $M$
is a square integrable martingale.
For the clarity of the reader, we formulate in dimension one,
 the concepts appearing in the literature, see e.g. \cite{S94}
%introduced by F\"ollmer and Schweizer~\cite{FS91}
 in the multidimensional case.
%
%\begin{defi}\label{defiL2ML2A}
For a given local martingale $M$, the space $L^2(M)$ consists of all predictable $\mathbb{R}$-valued processes $v=(v_t)_{t \in [0,T]}$ such that 
$
\mathbb{E}\left[\int_0^T|v_s|^2d\left\langle M\right\rangle_s\right]<\infty,
$
where $\langle M \rangle := \langle M, M \rangle$.
For a given predictable bounded variation process $A$, the space $L^2(A)$ consists of all predictable $\mathbb{R}$-valued processes $v=(v_t)_{t \in [0,T]}$ such that 
$
\mathbb{E}\left[(\int_0^T|v_s|d||A||_s)^2\right]<\infty\ .
$
Finally, we set
\begin{equation} \label{Theta}
\Theta:=L^2(M)\cap L^2(A), 
\end{equation}
which will be the class of {\bf admissible strategies}.
For any $v \in \Theta$, the stochastic integral process
$
G_t(v):=\int_0^tv_sdX_s,\quad \textrm{for all}\ t \in [0,T]\ ,
$
is therefore well-defined and is a semimartingale in $\delta^2$.
We can view this stochastic integral process as the gain process
associated 
with strategy $v$ on the underlying process $X$. 

%\begin{defi}
The \textbf{minimization problem} we aim to study is the following. 
Given $H \in \mathcal{L}^2(\Omega,\shf,\P)$, 
%an admissible strategy pair
a pair $(V_0,\varphi)$, where $V_0 \in \R$ and $\varphi \in \Theta$
is called \textbf{optimal} if $(c,v)=(V_0,\varphi)$ minimizes the 
expected squared hedging error
\begin{equation}\mathbb{E}[(H-c-G_T(v))^2]\ ,\label{problem2}\end{equation}
over all 
%admissible strategy
 pairs $(c,v) \in \mathbb{R} \times \Theta$.
 $V_0$ will represent the {\bf initial capital} of the hedging portfolio 
for the contingent claim $H$ at time zero. The definition below
introduces an important technical condition, see \cite{S94}.

\begin{defi}\label{defSC}
Let $X=(X_t)_{t \in [0,T]}$ be a real-valued special semimartingale. 
$X$ is said to satisfy the
 \textbf{structure condition (SC)} if there is a predictable
 $\mathbb{R}$-valued process $\alpha=(\alpha_t)_{t \in [0,T]}$ such that 
the following properties are verified.
\begin{enumerate}
 \item  
 $A_t=\int_0^t \alpha_s d\left\langle M\right\rangle_s\
 ,\quad \textrm{for all}\ t \in [0,T];$ in particular 
$dA$ is absolutely continuous with respect to $d \langle M \rangle$,
in symbols we denote    $dA\ll d\left\langle M\right\rangle $. 
\item  ${\displaystyle \int_0^T \alpha^2_s d\left\langle 
M\right\rangle_s<\infty\ ,\quad P-}$a.s.
\end{enumerate}

%\begin{eqnarray}\sigma_t\widehat{\lambda}_t=\gamma_t& P-as &\forall t\in [0,T].\label{defsigma}\end{eqnarray}
%and
%\begin{eqnarray}\widehat{K}_t:=\int_0^t\widehat{\lambda}_t^*\gamma_sdB_s<\infty& P-as for& t\in [0,T].\label{defKhat}\end{eqnarray}
%This condition is linked with the AOA (ie: existence of a martingale change probability).
%We then choose an cadlag version of $\widehat{K}$ and call it the \textit{mean-variance trade-off (MVT) process} of X.
\end{defi}
%
%\begin{defi}
From now on, we will denote by $K=(K_t)_{t \in [0,T]}$ the cadlag process 
$K_t=\int_0^t \alpha^2_s d\left\langle M\right\rangle_s\ ,\ \textrm{for all}\ t\in [0,T]\ .$
This process will be called the \textbf{mean-variance trade-off} (MVT) 
 process.
%\end{defi}
%The process $(K_t)_{t \in [0,T]}$ is denoted
% by $(\widehat{K}_t)_{t \in [0,T]}$. 
%\begin{remarque}
%In~\cite{S94}, the process $(K_t)_{t \in [0,T]}$ is denoted by $(\widehat{K}_t)_{t \in [0,T]}$.
%\end{remarque}
Lemma 2 of \cite{S94} states the following.
\begin{propo}\label{lemmaTheta}
If $X$ satisfies (SC) such that $K_T$ is a bounded r.v.,
 then $\Theta=L^2(M)$.
\end{propo}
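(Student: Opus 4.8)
The plan is to notice that the inclusion $\Theta \subseteq L^2(M)$ is immediate from the definition $\Theta = L^2(M)\cap L^2(A)$, so the whole content of the statement is the reverse inclusion $L^2(M)\subseteq L^2(A)$; once this is established one gets $\Theta = L^2(M)\cap L^2(A) = L^2(M)$. Thus I would fix an arbitrary predictable process $v\in L^2(M)$, i.e. one satisfying $\mathbb{E}\big[\int_0^T |v_s|^2 d\langle M\rangle_s\big]<\infty$, and aim to show $\mathbb{E}\big[(\int_0^T |v_s| d\|A\|_s)^2\big]<\infty$.

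The first step is to rewrite the total variation of $A$ in terms of the data of (SC). Since $\langle M\rangle$ is nondecreasing and $dA_s = \alpha_s\, d\langle M\rangle_s$ by the structure condition, the total variation measure is $d\|A\|_s = |\alpha_s|\, d\langle M\rangle_s$, so that for any predictable $v$
\begin{equation*}
\int_0^T |v_s|\, d\|A\|_s = \int_0^T |v_s|\,|\alpha_s|\, d\langle M\rangle_s .
\end{equation*}
Next I would apply the Cauchy--Schwarz inequality pathwise, with respect to the (random) measure $d\langle M\rangle_s$ on $[0,T]$, to the product $|v_s|\cdot|\alpha_s|$:
\begin{equation*}
\int_0^T |v_s|\,|\alpha_s|\, d\langle M\rangle_s \;\le\; \left(\int_0^T |v_s|^2\, d\langle M\rangle_s\right)^{1/2}\left(\int_0^T |\alpha_s|^2\, d\langle M\rangle_s\right)^{1/2} = K_T^{1/2}\left(\int_0^T |v_s|^2\, d\langle M\rangle_s\right)^{1/2},
\end{equation*}
where the last equality uses the definition $K_T = \int_0^T \alpha_s^2\, d\langle M\rangle_s$ of the MVT process. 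Both factors are a.s.\ finite: the first because $v\in L^2(M)$, the second because (SC) guarantees $K_T<\infty$ a.s.

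The final step invokes the hypothesis that $K_T$ is a bounded random variable, say $K_T\le C$ a.s.\ for a deterministic constant $C$. Squaring the previous bound and taking expectations then gives
\begin{equation*}
\mathbb{E}\left[\left(\int_0^T |v_s|\, d\|A\|_s\right)^2\right] \;\le\; C\,\mathbb{E}\left[\int_0^T |v_s|^2\, d\langle M\rangle_s\right] \;<\;\infty,
\end{equation*}
so $v\in L^2(A)$, which proves $L^2(M)\subseteq L^2(A)$ and hence the claim. I expect no serious obstacle here, since the argument is a single application of Cauchy--Schwarz; the one delicate point to keep in mind is that the inequality is applied $\omega$ by $\omega$, so that $K_T(\omega)^{1/2}$ enters as a \emph{random} factor. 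It is precisely the boundedness of $K_T$ that allows this factor to be replaced by a constant \emph{before} taking the expectation; without it the random prefactor could not be controlled and the inclusion would generally fail.
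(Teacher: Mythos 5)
Your proof is correct. The paper itself gives no argument for this proposition --- it simply cites Lemma~2 of Schweizer (1994) --- and your pathwise Cauchy--Schwarz estimate, followed by the use of the deterministic bound on $K_T$ to pull the random prefactor out of the expectation, is exactly the standard proof of that cited lemma; nothing is missing.
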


The structure condition (SC) appears  naturally
in applications to financial mathematics. In fact, it is mildly related
to the no arbitrage condition at least when $X$ is a continuous process. 
%In fact (SC) is a natural extension of 
%the existence of an equivalent martingale measure from the case where
% $X$ is continuous. 
Indeed, in the case where $X$ is  a continuous  martingale under 
 an equivalent probability measure, then 
 (SC) is fulfilled.
%,  it is not difficult to show 
 %that every adapted
%  if $X$ is a martingale under 
% an equivalent probability measure, then 
 %(SC) is fulfilled.
 
% \begin{propo}\label{NAandSC}
% Let $X$ be a $(P,\mathcal{F}_t)$ continuous semimartingale. Suppose the existence of a locally equivalent probability $Q\,\sim\, P$ under which $X$ is an $(Q,\mathcal{F}_t)$-local martingale, then (SC) is verified.
% \end{propo}
% \begin{proof}
% Let $(D_t)_{t \in [0,T]}$ be the strictly positive continuous $Q$-local
% martingale such that $dP=D_TdQ$. By Theorem~VIII.1.7  of~\cite{RY},
% $M=X-\langle X,L \rangle$ is a continuous $P$-local martingale,
%  where $L$ is the continuous $Q$-local martingale associated to the
%  density process i.e. 
% $$
% D_t=\exp\{L_t-\frac{1}{2}\langle L \rangle_t\}\ ,\quad 
% \textrm{for all}\ t\in [0,T]\ .
% $$
% According to Lemma~IV.4.2 in~\cite{RY}, there is a progressively
%  measurable process $R$ such that for all $t\in [0,T]$, 
% $$
% L_t=\int_0^tR_sdX_s+O_t \quad\textrm{and}\quad \int_0^T R^2_sd\left
%  \langle X \right \rangle_s<\infty \ , \quad Q-\textrm{a.s.} \ , 
% $$
% where $O$ is a $Q$-local martingale such that $\langle X,O\rangle=0$. Hence,
% $$
% \langle X,L\rangle_t = \int_0^t R_s d\langle X \rangle_s  \quad\textrm{and}\quad 
% X_t=M_t+\int_0^t R_s d\langle X \rangle_s\ ,\quad \textrm{for all}\ t\in [0,T]. 
% $$
% We end the proof by setting
% ${\displaystyle 
% \alpha_t=\frac{d\langle X,L \rangle_t}{d\langle X \rangle_t}=R_t\ .
% }$
% \end{proof}

\subsection{F\"ollmer-Schweizer Decomposition 
and variance optimal hedging}
%%%%%%%%%%%%%%%%%%%%%%%%%%%%%%%%%%%%%%%%%%%%
\label{sec:FMDecomp}

Throughout this section, as in Section~\ref{sec:FMStruc}, $X$ is supposed to be an $(\mathcal{F}_t)$-special semimartingale fulfilling the (SC) condition.
%We recall here the definition stated in Chapter~IV.3 p.~179 of~\cite{Pr92}. 
%\begin{defi}
Two $(\mathcal{F}_t)$-martingales $M,N$ are said to be \textbf{strongly 
orthogonal} if $MN$ is a uniformly integrable martingale, see
 Chapter~IV.3 p.~179 of~\cite{Pr92}. 
%\end{defi}
%\begin{remarque}\label{R126}
%If $M,N$ are strongly orthogonal,  then they are (weakly) orthogonal
% in the sense that $\mathbb{E} [M_TN_T]=0\ .$
%\end{remarque}
%\begin{lemme}
%\label{lem:ortho}
If $M,N$ are two square integrable martingales,
then $M$ and $N$ are strongly orthogonal if and only if
 $\left\langle M,N\right\rangle=0$. This can be proved using
 Lemma~IV.3.2 of~\cite{Pr92}.

\begin{defi}\label{DefFSDecomp}
%We say that 
A random variable 
$H \in \mathcal{L}^2(\Omega,\mathcal{F},\P)$ admits a \textbf{F\"ollmer-Schweizer (FS) decomposition}, if 
%it  can be written as
\begin{eqnarray}H=H_0+\int_0^T\xi_s^HdX_s+L_T^H\ ,
 \quad P-a.s.\label{FSdecompo}\ ,\end{eqnarray}
where $H_0\in \mathbb{R}$ is a constant, $\xi^H \in \Theta$ and $L^H=(L^H_t)_{t \in [0,T]}$ is a square integrable martingale,  
with $\mathbb{E}[L_0^H]=0$ and  strongly orthogonal to $M$.
\end{defi}
We summarize now some fundamental results stated in
 Theorems 3.4 and 4.6, of~\cite{MS95} on the existence and uniqueness of the FS decomposition and of solutions for the optimization problem~\eqref{problem2}. 
%Those mention the uniqueness of the FS decomposition and existence and uniqueness for the optimization problem \eqref{problem2}.
\begin{thm} \label{ThmExistenceFS} 
%\label{ThmGferme} 
%\label{thm128}
We suppose that $X$  satisfies (SC) and that the MVT process $K$
 is uniformly bounded in $t$ and $\omega$.
Let $H \in \mathcal{L}^2(\Omega,\mathcal{F},\P)$.
\begin{enumerate}
\item $H$ admits a FS decomposition.
 It is unique in the sense that $H_0 \in \R$, 
$\xi^H \in L^2(M)$ and $L^H$ is uniquely
 determined by $H$.
\item For every $H \in \mathcal{L}^2(\Omega,\mathcal{F}, \P)$ and every 
$c \in \mathcal{L}^2(\mathcal{F}_0)$, there exists a unique strategy
 $\varphi^{(c,H)}\in \Theta$ such that
\begin{eqnarray}\mathbb{E}[(H-c-G_T(\varphi^{(c,H)}))^2]=\min_{v\in \Theta}\mathbb{E}[(H-c-G_T(v))^2]\label{problem1}\ .\end{eqnarray}
\item For every $H \in \mathcal{L}^2(\Omega,\mathcal{F},\P)$
  there exists a unique couple
 $(c^{(H)},\varphi^{(H)}) \in \mathcal{L}^2(\mathcal{F}_0) \times
  \Theta$
 such that
\begin{eqnarray*}\mathbb{E}[(H-c^{(H)}-G_T(\varphi^{(H)}))^2]=
\min_{(c,v)\in \mathcal{L}^2(\mathcal{F}_0) \times \Theta}\mathbb{E}[(H-c-G_T(v))^2]\ .\end{eqnarray*}
\end{enumerate}
\end{thm}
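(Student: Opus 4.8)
The plan is to reduce all three assertions to a single functional-analytic fact: that the space of attainable gains $G_T(\Theta) = \{\int_0^T v_s\, dX_s : v \in \Theta\}$ is a \emph{closed} linear subspace of the Hilbert space $L^2(\Omega,\mathcal{F},\P)$. Once this is granted, the three statements become instances of the orthogonal projection theorem. Since $K$ is uniformly bounded, Proposition~\ref{lemmaTheta} gives $\Theta = L^2(M)$, so I may work with the single strategy space $L^2(M)$. For assertion~2, fixing $c \in \mathcal{L}^2(\mathcal{F}_0)$, minimizing $\E[(H-c-G_T(v))^2] = \| (H-c) - G_T(v)\|_{L^2}^2$ over $v \in \Theta$ is exactly the projection of $H-c$ onto $G_T(\Theta)$; for assertion~3 one projects $H$ onto $\mathcal{L}^2(\mathcal{F}_0) + G_T(\Theta)$, after checking this sum is again closed. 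Uniqueness of the strategy $\varphi$ itself, and not merely of its gain $G_T(\varphi)$, will follow from the coercivity estimate below, which renders $v \mapsto G_T(v)$ injective on $L^2(M)$.

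The core of the argument is therefore the closedness of $G_T(\Theta)$, which I would establish through a \emph{coercivity} (bounded-below) estimate: there is a constant $C = C(\|K\|_\infty) < \infty$ with
\[
\E\left[\int_0^T v_s^2\, d\langle M\rangle_s\right] \le C\,\E[G_T(v)^2], \qquad \forall\, v \in \Theta .
\]
Writing $G_T(v) = N_T + B_T$ with $N = \int v\, dM$ the martingale part and $B = \int v\,\alpha\, d\langle M\rangle$ the bounded variation part, the pathwise Cauchy--Schwarz inequality together with $dA \ll d\langle M\rangle$ gives, on any stochastic interval $(\sigma,\tau]$ over which $K$ increases by at most $\epsilon$, the bound $(B_\tau - B_\sigma)^2 \le (K_\tau - K_\sigma)\int_\sigma^\tau v_s^2\, d\langle M\rangle_s \le \epsilon \int_\sigma^\tau v_s^2\, d\langle M\rangle_s$, hence $\E[(B_\tau-B_\sigma)^2] \le \epsilon\,\E[(N_\tau - N_\sigma)^2]$. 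From $N = G - B$ one then gets $\E[(N_\tau-N_\sigma)^2] \le 2\E[(G_\tau-G_\sigma)^2] + 2\epsilon\,\E[(N_\tau-N_\sigma)^2]$, which for $\epsilon < 1/2$ is a local coercivity estimate. In general one chooses stopping times $0 = \tau_0 \le \cdots \le \tau_n = T$ so that each increment of $K$ stays below $\epsilon$ — possible in \emph{finitely} many steps precisely because $K$ is uniformly bounded — applies the local estimate on each $(\tau_k,\tau_{k+1}]$, and patches the pieces. The estimate makes $v\mapsto G_T(v)$ a bounded, injective, bounded-below operator $L^2(M) \to L^2(\Omega)$, hence one with closed range, which is the desired closedness.

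With closedness in hand, assertions~2 and~3 follow from the projection theorem as described, and the coercivity estimate upgrades uniqueness of the optimal gain to uniqueness of $\varphi$. For the F\"ollmer--Schweizer decomposition (assertion~1), I would take $H_0 + G_T(\xi^H)$ to be the projection of $H$ onto $\mathcal{L}^2(\mathcal{F}_0)\oplus G_T(\Theta)$ and set $L_T^H := H - H_0 - G_T(\xi^H)$; the variational characterization of the projection states precisely that $L^H$, viewed as a square-integrable martingale, is orthogonal in $L^2$ to every $G_T(v)$, and from there one deduces the strong orthogonality $\langle L^H, M\rangle = 0$ by localizing in the integrand $v$. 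Uniqueness is obtained by subtracting two decompositions: the difference satisfies $\mathrm{const} + \int_0^T \psi\, dX + N_T = 0$ with $N$ strongly orthogonal to $M$, and testing against $M$ through the angle bracket together with (SC) forces $\psi = 0$ in $L^2(M)$ and then $N \equiv 0$.

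The hard part will be the coercivity estimate, and within it the patching step: the increments $G_{\tau_{k+1}} - G_{\tau_k}$ of the gain process are \emph{not} orthogonal in $L^2$, so controlling $\sum_k \E[(G_{\tau_{k+1}}-G_{\tau_k})^2]$ by $\E[G_T(v)^2]$ requires an inductive argument that absorbs the cross terms, and the Cauchy--Schwarz bound must be applied in its general-semimartingale form since $\langle M\rangle$ may carry atoms and $M$, $A$ may jump. By comparison, the closedness of $\mathcal{L}^2(\mathcal{F}_0)\oplus G_T(\Theta)$ and the measurable-selection and orthogonality verifications in the FS step are routine once the coercivity estimate is secured.
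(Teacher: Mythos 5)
The paper does not prove this theorem; it imports it verbatim from Monat and Stricker \cite{MS95} (their Theorems 3.4 and 4.6). Your treatment of items 2 and 3 is faithful to that reference: the norm-equivalence (coercivity) estimate $\E[\int_0^T v_s^2\,d\langle M\rangle_s]\le C\,\E[G_T(v)^2]$, proved by partitioning $[0,T]$ with stopping times into finitely many pieces on which $K$ increases little, is exactly their key lemma, and closedness of $G_T(\Theta)$ plus the projection theorem then yields existence and uniqueness of the optimal strategies. (One caveat you gloss over: since $\langle M\rangle$ may have atoms, $K$ may have jumps exceeding $\epsilon$, so the partition cannot make every increment small; the jump times must be isolated and handled by a separate one-step estimate. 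This is fixable and is dealt with in \cite{MS95}.)

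Your derivation of item 1, however, contains a genuine error. The F\"ollmer--Schweizer decomposition is \emph{not} the orthogonal projection of $H$ onto $\mathcal{L}^2(\mathcal{F}_0)\oplus G_T(\Theta)$. The residual $R$ of that projection satisfies $\E[R\,(\int_0^T v\,dM+\int_0^T v\,dA)]=0$ for all $v\in\Theta$, and this is not equivalent to strong orthogonality of the martingale $L_t=\E[R\mid\mathcal{F}_t]$ to $M$; conversely, the true FS residual $L^H_T$ (which satisfies $\langle L^H,M\rangle=0$) is in general \emph{not} $L^2$-orthogonal to $G_T(\Theta)$, because $\E[L^H_T\int_0^T v_s\alpha_s\,d\langle M\rangle_s]$ need not vanish when $A\neq 0$. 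If your identification were correct, the FS integrand $\xi^H$ would itself solve the global minimization problem, contradicting the feedback formula \eqref{xisolution} of Theorem~\ref{ThmSolutionPb1}, whose whole point is that $\varphi^{(c,H)}$ differs from $\xi^H$ by the term $\frac{\alpha_t}{1+\Delta K_t}(H_{t-}-c-G_{t-})$. The existence of the FS decomposition in \cite{MS95} is instead obtained by a fixed-point argument: one iterates the Galtchouk--Kunita--Watanabe decomposition of $H-\int_0^T\xi_s\,dA_s$ with respect to $M$, shows the resulting map on integrands is a contraction when the increment of $K$ is small, and pieces together the solution over the finite partition of $[0,T]$. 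This is a separate argument from the closedness of $G_T(\Theta)$, and your proposal is missing it.
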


Next theorem gives the explicit form of the optimal strategy 
under some restrictions on $K$.
\begin{thm}\label{ThmSolutionPb1}
Suppose that X satisfies (SC) and that the MVT process $K$ of X is
deterministic and let  $\alpha$ be the process appearing in 
Definition~\ref{defSC} of~(SC).  Let $H\in \mathcal{L}^2(\Omega,\shf,\P)$ 
with FS decomposition (\ref{FSdecompo}).
\begin{enumerate}
\item 
For any $c \in \R$, the solution of the  minimization problem~(\ref{problem1})
 verifies  
$\varphi^{(c,H)} \in \Theta$, such that
\begin{eqnarray}\label{xisolution}
\varphi^{(c,H)}_t=\xi^H_t+\frac{\alpha_t}{1+\Delta K_t}(H_{t-}-c-G_{t-}
(\varphi^{(c,H)}))\ ,\quad\textrm{for \ all}\ t \in [0,T]
\end{eqnarray}
where the process $(H_t)_{t \in [0,T]}$ is defined by
$H_t:=H_0+\int_0^t\xi_s^HdX_s+L_t^H$.
 \item  The solution of the minimization problem~(\ref{problem2}) is
 given by the pair $(H_0,\varphi^{(H_0,H)})\ .$
\item If $\left \langle M \right\rangle$ is continuous, 
\begin{eqnarray*}
\min_{v\in \Theta}\mathbb{E}[(H-c-G_T(v))^2]&=&\exp(-K_T)\left((H_0-c)^2+\mathbb{E}[(L_0^H)^2]\right)\\
&&+\mathbb{E}\left[\int_0^T\exp\{-(K_T-K_s)\}d\left\langle L^H\right\rangle_s\right]\ .\label{corro9Sc2}\end{eqnarray*}
\end{enumerate}
\end{thm}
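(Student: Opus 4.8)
The plan is to establish the three assertions of Theorem~\ref{ThmSolutionPb1} by reducing everything to the abstract machinery of~\cite{S94}, exploiting the crucial hypothesis that the MVT process $K$ is \emph{deterministic}, which places us exactly in the framework where~\cite{S94} gives a closed-form relation between the FS decomposition and the solution of the global risk minimization problem. Recall from Theorem~\ref{mainthm} (the result of Schweizer cited earlier) that when $K$ is deterministic, the optimal strategy for problem~\eqref{problem1} is obtained from the FS decomposition data $(H_0,\xi^H,L^H)$ together with the \emph{adjustment process} that corrects for the drift $A$. So the skeleton of the argument is: first invoke existence and uniqueness of the FS decomposition (Theorem~\ref{ThmExistenceFS}, legitimate since a deterministic $K$ with $K_T<\infty$ is in particular uniformly bounded, so $\Theta=L^2(M)$ by Proposition~\ref{lemmaTheta}); then identify the feedback form~\eqref{xisolution}; then specialize $c=H_0$ for part~2; and finally carry out the variance computation for part~3.

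For part~1, I would define the cumulative cost / value process associated with a candidate strategy and write the gain process $G_t(\varphi^{(c,H)})=\int_0^t\varphi_s\,dX_s$ against $X=M+A$ with $dA_s=\alpha_s\,d\langle M\rangle_s$ by~(SC). The key identity to verify is that the process given recursively by~\eqref{xisolution} reproduces Schweizer's optimal feedback control; the correction term $\frac{\alpha_t}{1+\Delta K_t}\bigl(H_{t-}-c-G_{t-}(\varphi^{(c,H)})\bigr)$ is precisely the ``risk-adjusted'' drift compensation, where the denominator $1+\Delta K_t$ accounts for the jump of the MVT process and guarantees the discrete-time-in-the-jumps consistency of the recursion. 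I would check that the process $H_t:=H_0+\int_0^t\xi^H_s\,dX_s+L^H_t$ is the value process of the hedge under the FS decomposition, and then verify that $\varphi^{(c,H)}$ as defined solves the stochastic equation characterizing the minimizer; uniqueness follows from Theorem~\ref{ThmExistenceFS}(2). Part~2 is then immediate: since $H_0=\E[H]$ under the variance-optimal measure plays the role of the optimal initial capital, plugging $c=H_0$ into the solution of~\eqref{problem1} yields the solution of the unconstrained problem~\eqref{problem2}, and uniqueness comes from Theorem~\ref{ThmExistenceFS}(3).

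For part~3, the approach is a direct computation of the minimal expected squared error. Starting from the FS decomposition $H-c=(H_0-c)+\int_0^T\xi^H\,dX+L^H_T$ and substituting the optimal strategy, I would write the hedging error $H-c-G_T(\varphi^{(c,H)})$ and expand its square. The orthogonality of $L^H$ to $M$ (Definition~\ref{DefFSDecomp}) kills the cross terms between the martingale part and $L^H$, and the continuity hypothesis on $\langle M\rangle$ forces $\Delta K_t=0$, which simplifies the feedback coefficient to $\alpha_t$ and turns the recursion into a linear SDE. Solving that linear SDE driven by $dK_s=\alpha_s^2\,d\langle M\rangle_s$ produces the exponential weights $\exp\{-(K_T-K_s)\}$: the term $\exp(-K_T)\bigl((H_0-c)^2+\E[(L^H_0)^2]\bigr)$ captures the initial mismatch discounted over the whole horizon, while $\E\bigl[\int_0^T\exp\{-(K_T-K_s)\}\,d\langle L^H\rangle_s\bigr]$ accumulates the irreducible orthogonal risk, each increment discounted from its accrual time $s$ to $T$.

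The main obstacle I anticipate is part~3, specifically the careful bookkeeping in solving the linear stochastic recursion for the residual error and extracting the exponential discount factor $\exp\{-(K_T-K_s)\}$. One must apply an integration-by-parts / It\^o product rule to the process $e^{-(K_T-K_t)}\bigl(H_t-c-G_t(\varphi^{(c,H)})\bigr)^2$ (or its expectation), controlling the interplay between the martingale increments of $M$, the orthogonal increments of $L^H$, and the deterministic drift $dK$. The determinism of $K$ is exactly what allows the factor $\exp\{-(K_T-K_s)\}$ to be pulled outside the expectation as a deterministic weight, and the continuity of $\langle M\rangle$ is what lets the discrete jump corrections disappear so that the resulting formula is clean; verifying that all the error-cancellation and integrability conditions (guaranteed by $\xi^H\in\Theta=L^2(M)$ and $L^H$ square integrable) hold throughout the computation is the delicate part.
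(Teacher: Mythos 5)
Your proposal is correct and takes essentially the same route as the paper: the paper's proof consists precisely of the reduction you describe, citing Theorem~3 of~\cite{S94} for item~1, Corollary~10 of~\cite{S94} for item~2, and Corollary~9 of~\cite{S94} for item~3 (noting that $K$ inherits the continuity of $\langle M\rangle$), all made applicable because the deterministic MVT process is in particular uniformly bounded. The additional sketches you give of the feedback identification and of the exponential-weight computation are consistent with Schweizer's arguments but are not carried out in the paper, which relies entirely on the citations.
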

\begin{proof} \
Item 1. is stated in Theorem 3 of \cite{S94}. 
Item 2. is a consequence of Corollary 10 of \cite{S94}.
Item 3. is a consequence of Corollary 9 of 
\cite{S94} taking into account that $K $ 
inherits the continuity property
of  $\langle M \rangle$. We remark that 
$\tilde K = K$, 
where  $\tilde K$ is a process appearing in the statement
of the mentioned corollary.
\end{proof}

In the sequel, we will find an explicit expression of the
 FS decomposition for a large class of square integrable random
 variables, 
when the underlying process is an  exponential of additive process.
%  For
%  this, the first step will consist in verifying (SC) and the boundedness condition on  the MVT process, see Assumption \ref{Hypo1}.

\subsection{Additive processes}

This subsection deals with     processes with independent
increments which are continuous in probability.
From now on $(\shf_t)$ will always be the canonical filtration
associated with $X$.

%%%%%%%%%%%%%%%%%%%%%%%%
\begin{defi}\label{defPAI}
A cadlag process $X=(X_t)_{t \in [0,T]}$ is a (real) 
\textbf{additive process} iff
 $X_0=0$,
$X$ is continuous in probability, i.e. 
 $X$ has no fixed time of discontinuities and it has independent
 increments in the following sense:
%\begin{enumerate}
%\item $X$ 
%is adapted to the filtration $(\mathcal{F}_t)_{t \in [0,T]}$ and
 %has cadlag paths. 
%EN FAIT TOUJOURS VRAI SI SEMIMARTINGALE. CADLAG UTILISE 
%EXPLICITEMENT ICI?
%\item $X_0=0$.
%\item
 $X_t-X_s$ is independent of $\mathcal{F}_s$ for $0 \leq s < t \leq
  T$. 
%where $(\mathcal{F}_t)$ is the canonical filtration associated
 % with $X$.
% Moreover we will also suppose
% \item $X$ is continuous in probability, i.e. 
%  $X$ has no fixed time of discontinuities.
% \
%\end{enumerate}
 \\  $X$ is called {\bf L\'evy process}  if it is additive and 
the distribution of $X_t-X_s$ only depends on $t-s$ 
 for $0 \leq s \leq t \leq T$.
\end{defi}
 An important notion, in the theory of semimartingales,
 is the notion of characteristics, introduced in definition II.2.6 
of~\cite{JS03}.  A triplet of  \textbf{characteristics} 
 $(b,c,\nu)$, depends on a fixed truncation function  
 $h: \mathbb{R} \rightarrow \mathbb{R}$ with
 compact support such that $h(x)=x$ in a neighborhood of $0$;
 $\nu$ is some random $\sigma$-finite Borel measure on
$[0,T] \times \R$.  
If $X$ is a semimartingale additive process 
the triplet $(b,c,\nu)$ admits a deterministic version,
 see  Theorem II.4.15 of~\cite{JS03}. 
Moreover $(b_t)$, $(c_t)$ and 
 $t\mapsto \int_{[0,t] \times B} (\vert x \vert^2 \wedge 1) \nu(ds,dx)  $ 
 have bounded variation for any Borel real subset $B$.
Generally in this paper  $\mathcal{B}(E)$ denotes the Borel
 $\sigma$-field associated
with a topological space $E$.

\begin{propo}\label{propoJSA}
Suppose $X$ is a semimartingale additive process 
%with  independent increments 
 with
characteristics $(b,c,\nu)$, where $\nu$ is
a non-negative Borel measure on $[0,T] \times \R$.
Then  $t\mapsto a_t$ given by 
\begin{eqnarray}a_t=||b||_t+c_t+\int_{\mathbb{R} }
 (|x|^2\wedge 1)\nu([0,t],dx) \label{171}\end{eqnarray}
fulfills
%there exists an increasing
%function
% $t\mapsto a_t$ such that
\begin{eqnarray}
db_t\ll da_t\ ,\quad
dc_t\ll da_t \quad \textrm{and}\quad 
\nu(dt,dx)= {F}_t(dx)da_t\ ,
\label{170}\end{eqnarray}
where $ {F}_t(dx)$ is a non-negative kernel from $\big([0,T],
\mathcal{B}([0,T])\big )$ into $(\mathbb{R} ,\mathcal{B}(\R) )$
verifying
\begin{equation} \label{E1000}
\int_\R  (\vert x \vert^2 \wedge 1)  F_t(dx) \le 1\,, \quad \forall t
\in [0,T].
\end{equation}
% For instance, one can take
%and
%\begin{eqnarray}a_t=||b||_t+c_t+\int_{\mathbb{R} } (|x|^2\wedge 1)\nu([0,t],dx)\ .\label{171}\end{eqnarray}
\end{propo}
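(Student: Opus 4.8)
The plan is to construct the function $t \mapsto a_t$ explicitly as in~\eqref{171} and then verify the absolute-continuity claims and the existence of the kernel $F_t$. First I would observe that $a$ is by construction a non-decreasing, hence bounded-variation, deterministic function on $[0,T]$: indeed each of the three summands is non-decreasing. The term $||b||_t$ is the total variation of the deterministic bounded-variation function $b$, the term $c_t$ is non-decreasing because $c$ is the continuous local-martingale-part bracket and is therefore a non-decreasing characteristic, and the map $t \mapsto \int_{\R}(|x|^2\wedge 1)\,\nu([0,t],dx)$ is non-decreasing in $t$ since $\nu$ is a non-negative measure and we integrate a fixed non-negative integrand. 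Thus $da_t$ defines a non-negative (finite) measure on $[0,T]$.

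Next I would establish the three relations in~\eqref{170}. For $db_t \ll da_t$ and $dc_t \ll da_t$, the point is that any set $N \subseteq [0,T]$ with $da_t(N)=0$ must have $||b||_t$-measure zero and $c$-measure zero, simply because $||b||$ and $c$ are among the non-negative summands defining $a$; a set carrying no $a$-mass carries no mass for any summand dominated by $a$. Formally, $da \geq d||b||$ and $da \geq dc$ as measures, so both $b$ (via $|db|\le d\|b\|$) and $c$ are absolutely continuous with respect to $a$. For the disintegration $\nu(dt,dx)=F_t(dx)\,da_t$, I would invoke a measurable disintegration (regular conditional measure / kernel) theorem for the $\sigma$-finite measure $\nu$ on the product space $[0,T]\times\R$ relative to its projection onto the time axis. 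The projected measure of $\nu$ weighted by $(|x|^2\wedge 1)$ is exactly the third summand of $a$, which is dominated by $da_t$; the existence of a non-negative kernel $F_t(dx)$ from $([0,T],\mathcal{B}([0,T]))$ to $(\R,\mathcal{B}(\R))$ realizing this disintegration then follows from the standard existence theorem for transition kernels on Polish spaces (here $\R$ is Polish), applied after normalizing by the dominating measure $da_t$.

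Finally I would check the bound~\eqref{E1000}. By the very definition of the disintegration, for $da_t$-almost every $t$ one has $\int_{\R}(|x|^2\wedge 1)\,F_t(dx)\,da_t = \nu(dt,dx)$ integrated in $x$ against $(|x|^2\wedge 1)$, which equals the differential of the third summand $t\mapsto\int_{\R}(|x|^2\wedge 1)\nu([0,t],dx)$ of $a_t$. Since that summand is one of the non-negative pieces adding up to $a_t$, its differential is dominated by $da_t$, giving $\int_{\R}(|x|^2\wedge 1)\,F_t(dx)\le 1$ for $da_t$-almost all $t$. On the $da_t$-null set one is free to redefine $F_t$ (for instance set $F_t=0$) to make the inequality hold for every $t\in[0,T]$, which yields the stated pointwise bound.

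The main obstacle I expect is the rigorous construction of the kernel $F_t$, that is, turning the heuristic ``$\nu(dt,dx)=F_t(dx)\,da_t$'' into a genuine measurable family of measures satisfying the normalization. The subtlety is that $\nu$ is only $\sigma$-finite and the disintegration must be carried out with respect to the dominating measure $da_t$ rather than Lebesgue measure; this is precisely the place where working with the intrinsic reference measure $da_t$ (rather than $dt$) avoids any absolute-continuity assumption on the characteristics, and where one must be careful that the resulting kernel is defined up to $da_t$-null sets and then extended to all of $[0,T]$ to obtain~\eqref{E1000} everywhere.
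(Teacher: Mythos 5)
Your proof is correct and is in substance identical to the paper's, which disposes of the whole statement in one line by citing Proposition II.2.9 of Jacod and Shiryaev \cite{JS03} --- the argument you write out (domination of each summand of $a$ by $da$, disintegration of $(|x|^2\wedge 1)\nu(dt,dx)$ over its time marginal, and adjustment of $F_t$ on the $da$-null set to get \eqref{E1000} everywhere) is precisely the content of that cited proof. The only additional point the paper records is that \eqref{171} forces $(a_t)$ to be deterministic, since the characteristics of a semimartingale additive process admit a deterministic version (Theorem II.4.15 of \cite{JS03}); you use this implicitly by treating $b$, $c$, $\nu$ as non-random throughout.
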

\begin{proof}
%[\textbf{Proof of Proposition \ref{propoJSA}}]
The existence of $(a_t)$ as a process fulfilling $(\ref{171})$ 
and $ F$ fulfilling \eqref{E1000} 
is provided by the statement and the proof of  Proposition II.
2.9 of~\cite{JS03}. 
$(\ref{171})$
% and Theorem \ref{ThmCharaDeter} 
guarantees that $(a_t)$ is deterministic.
\end{proof}

We come back to the cumulant generating function $\kappa$ and its domain $D$.
\begin{remarque} \label{CumGenPAI}
In the case where the underlying process $X$ is an additive process, then 
\begin{eqnarray*}
 D := \{z \in \mathbb{C}\ \vert \ \mathbb{E}[e^{Re(z)X_t}]<\infty, \  \forall t \in
 [0,T]\} =\{z \in \mathbb{C}\ \vert \ \mathbb{E}[e^{Re(z)X_T}]<\infty\}\ .
\end{eqnarray*}
In fact, for given $t \in [0,T], \gamma \in \mathbb{R} $ we have
$   \mathbb{E}(e^{\gamma X_T}) = \mathbb{E}(e^{\gamma X_t}) 
\mathbb{E}(e^{\gamma (X_T - X_t)})  < \infty.$
Since each factor is positive,   if the left-hand side is finite,
then $\mathbb{E}(e^{\gamma X_t})$ is also 
 finite.
 \end{remarque}

% %%%%%%%%
%%%%%%%%%%%%%%%%%%%%%%%%%%%%%%%%%%%%%%%%%%%%%%%%%%%%%%%%%%%%%%%%% 
\section{ F\"ollmer-Schweizer decomposition for exponential of additive processes}
%%%%%%%%%%%%%%%%%%%%%%%%%%%%%%%%%%%%%%%%%%%%%%%%%%%%%%%%%%%%%%%%%%%%%%%%%
\label{sec:expPII}

\setcounter{equation}{0}

%%In this section, we consider the case of exponential of additive processes
%%corresponding to geometric models (such as the Black-Scholes model) much
%%more used in finance than arithmetic models (such as the Bachelier
%%model).
%
 The aim of this section is to derive a quasi-explicit formula of the FS decomposition for exponential of additive processes with possibly non stationary increments. \\
 %generalizing the results of~\cite{Ka06}. 
 %Here again, this generalization is motivated by applications to energy derivatives where forward prices show a volatility term structure that requires the use of models with non stationary increments.  
%
%%We come back to the main optimization problem which was formulated in 
%%Section \ref{Generalities}.
 We assume that the process $S$ is the 
discounted price of the non-dividend paying stock which is supposed to be of the form,
%\begin{center}
 $S_t=s_0\exp(X_t),$ for all $ t\in[0,T],$
%\end{center}
where $s_0$ is a strictly positive constant and $X$ is a
semimartingale additive process,
 in the sense of Definition~\ref{defPAI},  but not necessarily
 with stationary increments.
In the whole paper, if $z$ is a complex number,
$S^z_t$ stands for $\exp ({\rm ln} (s_0)  + z X_t )$.
In particular if  $y$ is a real number, 
$S^y_t$ stands  for $s_0 \exp(y X_t)$.

\medskip

\subsection{On some properties of cumulant generating functions}

We need now a result which extends the classical 
L\'evy-Khinchine decomposition, 
see e.g. 2.1 in Chapter II and Theorem 4.15 of Chapter II,
\cite{JS03}, which is only defined in the imaginary axis to
the whole domain of the cumulant generating function.
Similarly to Theorem~25.17 of~\cite{Sa99}, applicable
for the L\'evy case, 
 for additive processes we have the following.
\begin{propo}\label{LKdecoKappa}
Let $X$ be a semimartingale additive process and set
 $D_0=\left\{c \in \mathbb{R}\ |\ 
\int_{[0,T]\times \left\{|x|>1\right\}}e^{cx}\nu(dt,dx)<\infty \right\}$. Then, 
\begin{enumerate}
\item $D_0$ is convex and contains the origin.
\item $D_0= D\cap\mathbb{R}$.
\item If $z\in \mathbb{C}$ such that $Re(z)\in D_0$, i.e. $z\in D$, then
\begin{eqnarray} \label{E3.5bis}
\kappa_t(z)=z b_t+\frac{z^2}{2}c_t+\int_{[0,t]
\times\mathbb{R}} (e^{zx}-1-z h(x))\nu(ds,dx)\ .\end{eqnarray}
\end{enumerate}
\end{propo}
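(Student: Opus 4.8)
The plan is to reduce all three claims to classical facts about infinitely divisible laws, exploiting that, for each fixed $t$, the random variable $X_t$ is infinitely divisible. Indeed, since $X$ has independent increments and is continuous in probability, the decomposition $X_t=\sum_{k=1}^n\big(X_{kt/n}-X_{(k-1)t/n}\big)$ exhibits $X_t$, for every $n$, as a sum of $n$ independent, asymptotically negligible summands, so its law is infinitely divisible. Its generating triplet, in the sense of \cite{Sa99} and relative to the truncation $h$, is $\big(b_t,c_t,\nu([0,t],\cdot)\big)$: this triplet reproduces the characteristic function $u\mapsto\kappa_t(iu)$ coming from the Jacod--Shiryaev characteristics on the imaginary axis, and the generating triplet of an infinitely divisible law is unique.

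For item~1, the origin belongs to $D_0$ because, by Proposition~\ref{propoJSA} together with \eqref{E1000}, one has $\nu([0,T]\times\{|x|>1\})=\int_{\{|x|>1\}}\nu([0,T],dx)<\infty$, so the defining integral at $c=0$ is finite. Convexity follows from the elementary inequality $e^{(\lambda c_1+(1-\lambda)c_2)x}\le \lambda e^{c_1x}+(1-\lambda)e^{c_2x}$, valid for all real $x$ and $\lambda\in[0,1]$ by convexity of $c\mapsto e^{cx}$: integrating over $[0,T]\times\{|x|>1\}$ shows that $c_1,c_2\in D_0$ forces $\lambda c_1+(1-\lambda)c_2\in D_0$.

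For item~2, recall from Remark~\ref{CumGenPAI} that $D\cap\mathbb{R}=\{c\in\mathbb{R}\ |\ \mathbb{E}[e^{cX_T}]<\infty\}$. Applying the classical exponential-moment criterion for infinitely divisible laws (Theorem~25.3 of \cite{Sa99}) to the law of $X_T$, whose L\'evy measure is $\nu([0,T],\cdot)$, gives that $\mathbb{E}[e^{cX_T}]<\infty$ if and only if $\int_{\{|x|>1\}}e^{cx}\nu([0,T],dx)<\infty$, i.e. if and only if $c\in D_0$; this is exactly $D_0=D\cap\mathbb{R}$. For item~3, fix $t$ and let $z\in\mathbb{C}$ with $Re(z)=c\in D_0$. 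By Remark~\ref{CumGenPAI}, $\mathbb{E}[e^{cX_T}]<\infty$ implies $\mathbb{E}[e^{cX_t}]<\infty$, so $\mathbb{E}[e^{zX_t}]$ is well defined and finite and $\kappa_t(z)$ makes sense. One then applies Sato's extension of the L\'evy--Khinchine formula to exponential moments (Theorem~25.17 of \cite{Sa99}) to the infinitely divisible law of $X_t$ with triplet $\big(b_t,c_t,\nu([0,t],\cdot)\big)$, obtaining $\kappa_t(z)=zb_t+\frac{z^2}{2}c_t+\int_{\mathbb{R}}(e^{zx}-1-z\,h(x))\,\nu([0,t],dx)$, which is \eqref{E3.5bis} after writing $\nu([0,t],dx)=\int_{[0,t]}\nu(ds,dx)$. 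The integral converges for every such $z$ since $|e^{zx}-1-z h(x)|$ is $O(x^2)$ near the origin (integrable against $\nu$ by \eqref{E1000}) and, for $|x|>1$ where $h\equiv0$, is dominated by $e^{cx}+1$, integrable by definition of $D_0$.

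The main obstacle is the careful identification of the generating triplet of the infinitely divisible law $X_t$ with the time-restricted Jacod--Shiryaev characteristics $\big(b_t,c_t,\nu([0,t],\cdot)\big)$, reconciling the truncation conventions, so that Theorem~25.17 of \cite{Sa99} may be invoked verbatim. The mechanism underlying that theorem, and the point one must double check at the boundary $\partial D_0$, is analytic continuation: both sides of \eqref{E3.5bis} are holomorphic in $z$ on the open tube $\{Re(z)\in \mathrm{int}\,D_0\}$ and coincide on the imaginary axis, hence everywhere on that tube, while boundary values of $c$ are recovered by dominated convergence using the explicit bounds on the integrand above.
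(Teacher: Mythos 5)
Your proof is correct and follows essentially the same route as the paper: both identify the law of $X_t$ as an infinitely divisible distribution with generating triplet $(b_t,c_t,\nu([0,t],\cdot))$ and then invoke Theorem~25.17 of \cite{Sa99}, the paper passing through an auxiliary L\'evy process $L^t$ with $L^t_1\sim X_t$ (Corollary~II.11.6 of \cite{Sa99}), which is precisely the device needed to apply Sato's theorem verbatim as you propose. The only divergences are cosmetic: you justify infinite divisibility by the null-array limit theorem applied to the non-i.i.d.\ increments $X_{kt/n}-X_{(k-1)t/n}$ (note this needs Khintchine's theorem, not the bare definition, since the summands are not identically distributed) rather than by matching characteristic exponents via Theorem~II.8.1(iii) of \cite{Sa99}; you obtain convexity of $D_0$ from convexity of $c\mapsto e^{cx}$ rather than from H\"older; and you cite the exponential-moment criterion (Theorem~25.3 of \cite{Sa99}) for item~2 where part~(i) of Theorem~25.17 would serve equally well.
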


\begin{proof}
%[\textbf{Proof of Proposition \ref{LKdecoKappa}}]
\begin{enumerate}
\item is a consequence of H\"older inequality similarly as i) in
 Theorem 25.17 of~\cite{Sa99}\ .
\item 
 The characteristic function of the law of $X_t$ is given through
the characteristics of $X$, i.e.
$$
\Psi_t(u)=iu b_t-\frac{u^2}{2}c_t+\int_{\mathbb{R}} 
(e^{iu x}-1-iu h(x))F^t(dx)\ ,\quad \textrm{for all}\ u\in\mathbb R,
$$ 
where we recall that for any $t \ge 0$, $c_t \ge 0$
and   $B\mapsto F^t(B) := \nu([0,t]\times B)$ 
is a positive measure which integrates $1 \wedge |x|^2$. 
%\label{LKFormule}.
% by~(\ref{LKFormule}).
Let $t \in [0,T]$.
 According to Theorem~II.8.1~(iii) of \cite{Sa99}, there is an
infinitely divisible distribution with characteristics 
$(b_t,c_t, F^t(dx))$. 
%$F_t(\{0\}) = 0$ and $\int (1 \wedge x^2) F_t(dx)< \infty$ and $c_t \ge
%0$.
By uniqueness of the characteristic function, that law is precisely the
law of $X_t$. By Corollary~II.11.6, in~\cite{Sa99}, there is a L\'evy process
$(L^t_s, 0 \le s \le 1)$ such that $L^t_1$ and $X_t$ are identically 
distributed. We define
$$
C_0^t = \{ c\in \mathbb{R} \ \vert \ 
\int_{\{\vert x \vert > 1\}} e^{cx} F_t(dx) < \infty \} \quad\textrm{and}\quad
C^t = \{z \in \mathbb C \ \vert\  \mathbb{E}\left[\exp (Re (z L_1^t)\right ] < \infty \}  \ .
$$
Remark~\ref{CumGenPAI} says that $C^T = D$, moreover clearly
$C_0^T = D_0$.
Theorem V.25.17 of~\cite{Sa99} implies $D_0 = D \cap \mathbb{R}$,
i.e. point 2. is established.
\item Let $t \in [0,T]$ be fixed; let $z \in D \subset  C^t$, in
particular
$Re(z) \in C^t_0$.
We apply point (iii) of Theorem V.25.17 of~\cite{Sa99} 
to the L\'evy process $L^t$.
\end{enumerate}
\end{proof}

\begin{propo}\label{corroR1}
Let $X$ be a semimartingale additive process.
For all $z \in D$, 
%$0\leq Re(z)\leq R$,
 $t\mapsto \kappa_t(z)$  has bounded variation and
$\kappa_{dt}(z)\ll da_t,$ where $t \mapsto a_t$ was defined in Proposition
\ref{propoJSA}.
\end{propo}

\begin{proof}

Using (\ref{E3.5bis}), we only have to prove that
$t\mapsto \int_{[0,T]\times \R}(e^{zx}-1-zh(x))\nu(ds,dx)$
is absolutely continuous w.r.t.  $(da_t)$. 
%Without restriction of generality we can suppose $h(x)=x1_{|x|\leq 1}$.
We can conclude
 \begin{eqnarray*}\kappa_t(z) = 
 z \int_0^t \frac{db_s}{da_s} da_s +\frac{z^2}{2}
 \int_0^t \frac{dc_s}{da_s} da_s  \nonumber +
 \int_0^tda_s\int_{\mathbb{R}}\left(e^{zx}-1-zh(x)\right) {F}_s(dx)\ ,
\end{eqnarray*}
if we show that
\begin{eqnarray}\int_0^T da_s\int_{\R}|e^{zx}-1-zh(x)| {F}_s(dx)<\infty\ . \label{A131}\end{eqnarray}
Without restriction of generality we can suppose 
$h(x)=x1_{|x|\leq 1}$.
(\ref{A131}) can be bounded by the sum $I_1+I_2+I_3$ where
%
%\begin{eqnarray*}I_1=\int_0^Tda_s\int_{|x|>1}|e^{zx}|\widetilde{F}_s(dx)\end{eqnarray*}
%\begin{eqnarray*}I_2=\int_0^Tda_s\int_{|x|>1}\widetilde{F}_s(dx)\end{eqnarray*}
%\begin{eqnarray*}I_3=\int_0^Tda_s\int_{|x|\leq1}|e^{zx}-1-zx|\widetilde{F}_s(dx)\end{eqnarray*}
$$
I_1=\int_0^Tda_s\int_{|x|>1}|e^{zx}| {F}_s(dx)\ ,\quad 
I_2=\int_0^Tda_s\int_{|x|>1} {F}_s(dx)\ ,\quad\textrm{and}\quad 
I_3=\int_0^Tda_s\int_{|x|\leq1}|e^{zx}-1-zx| {F}_s(dx)\ .
$$
Using Proposition~\ref{propoJSA}, we have
$$
I_1= \int_0^Tda_s\int_{|x|>1}|e^{zx}| {F}_s(dx)
= \int_0^Tda_s\int_{|x|>1}e^{Re(z)x} {F}_s(dx) 
= \int_{[0,T]\times \{ |x|>1\} }e^{Re(z)x}\nu(ds,dx);
$$
this quantity is finite because $Re(z)\in D_0$ taking into account
 Proposition~\ref{LKdecoKappa}. Concerning $I_2$ we have
$$
I_2 = \int_0^Tda_s\int_{|x|>1} {F}_s(dx) 
= \int_0^Tda_s\int_{|x|>1}(1\wedge |x^2|) {F}_s(dx) 
\leq a_T,
$$
because of \eqref{E1000}. As far as $I_3$ is concerned, we have
$$
I_3 \leq e^{Re(z)}\frac{\vert z\vert^2}{2}\int_{[0,T]\times \{|x|\leq
  1\} }da_s(x^2\wedge 1) {F}_s(dx)= e^{Re(z)}\frac{\vert z \vert^2}{2}a_T
$$
again because of \eqref{E1000}. This concludes the proof of the proposition.

\end{proof}

The converse of the first part of previous Proposition \ref{corroR1} also holds.
To show this, we formulate first a simple remark.

\begin{remarque}\label{R3}
\begin{enumerate}
\item For every $z \in D$, $\left(\exp(zX_t-\kappa_t(z))\right)$
 is a martingale.
In fact, for  all $0\leq s\leq t\leq T$, we have
$\mathbb{E}[\exp(z(X_t-X_s))]=\exp(\kappa_t(z)-\kappa_s(z)).$
%\label{F}
\item    $t \mapsto \kappa_t(0) \equiv 1$
and it has always bounded variation.
\end{enumerate}
\end{remarque}

\begin{propo} \label{P312}
Let $X$ be an additive process and $z \in D \cap \mathbb{R^\star}$.
$(X_t)_{t \in [0,T]}$ is a semimartingale if and only if
$t \mapsto \kappa_t(z)$ has bounded variation.
\end{propo}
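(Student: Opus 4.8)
The statement is a biconditional and one implication is already available: if $X$ is a semimartingale additive process, then Proposition~\ref{corroR1} asserts that $t\mapsto\kappa_t(z)$ has bounded variation for every $z\in D$, in particular for every $z\in D\cap\mathbb{R}^\star$. So all the genuine work lies in the converse, and my plan is to reconstruct $X$ explicitly as a pathwise transformation of a martingale, exploiting that for real $z$ the factorization of $e^{zX}$ is real-valued.

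Fix $z\in D\cap\mathbb{R}^\star$ and assume $t\mapsto\kappa_t(z)$ has bounded variation. By Remark~\ref{R3}(1), the process $L_t:=\exp(zX_t-\kappa_t(z))$ is a martingale; taken in its cadlag version it is strictly positive, hence in particular a cadlag semimartingale. Since $z$ is real, $\kappa_t(z)$ is real and finite (Remark~\ref{remarkR2}(2)), so I can write $e^{zX_t}=L_t\,A_t$ with $A_t:=e^{\kappa_t(z)}$. The function $A$ is deterministic and strictly positive; it is cadlag, being the quotient of the cadlag processes $e^{zX_t}$ and $L_t$; and it has bounded variation on $[0,T]$, because $\kappa_\cdot(z)$ is of bounded variation (hence bounded) and $\exp$ is Lipschitz on the relevant compact range.

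The key step will then be integration by parts. Writing $d(L_tA_t)=A_{t-}\,dL_t+L_{t-}\,dA_t+d[L,A]_t$, the first term is a local martingale (the integrand $A_{t-}$ is predictable and locally bounded) and $L_{t-}\,dA_t$ is of bounded variation. The only delicate point --- and the one the hypothesis is designed to control --- is the covariation $[L,A]$: since $A$ is a deterministic bounded-variation function it has no continuous martingale part, so $[L,A]_t=\sum_{s\le t}\Delta L_s\,\Delta A_s$, and this sum is absolutely convergent and of bounded variation because $\sum_{s\le t}|\Delta A_s|\le\|A\|_T<\infty$ while $L$ is locally bounded. Hence $e^{zX}=LA$ is a semimartingale; this is precisely why plain bounded variation of $\kappa_\cdot(z)$, with no continuity assumption on $\kappa$, is enough.

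Finally I would invert the exponential. Since $Y_t:=e^{zX_t}$ is a strictly positive semimartingale and, using $z\in\mathbb{R}^\star$, the map $y\mapsto\tfrac1z\log y$ is $C^2$ on $(0,\infty)$, It\^o's formula for semimartingales shows that $X_t=\tfrac1z\log Y_t$ is a semimartingale, completing the converse. The main obstacle is really this pair of uses of the hypotheses: $z$ must be real so that $A$ is a genuine real bounded-variation factor and so that the logarithm provides a real $C^2$ inverse, and nonzero so that dividing by $z$ is legitimate. The reality of $z$ is the structural reason the argument does not transfer verbatim to complex $z$.
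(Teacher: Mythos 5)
Your proposal is correct and follows essentially the same route as the paper: the forward direction is quoted from Proposition~\ref{corroR1}, and the converse writes $e^{zX_t}=M_t e^{\kappa_t(z)}$ via Remark~\ref{R3}, observes that the product of a martingale with a deterministic bounded-variation factor is a semimartingale, and recovers $X$ by taking the logarithm. You merely spell out the integration-by-parts and It\^o/localization details that the paper leaves implicit.
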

\begin{proof} 
Using Proposition~\ref{corroR1}, 
it remains to prove the converse implication.
%Suppose that $(X_t)_{t \in [0,T]}$ is a semimartingale. If $f$ is a function  of class $\mathbb{C}^2$ then $f(X_t)$ is a semimartingale.
%Therefore $(e^{-zX_t})$ is a semimartingale. For $z \in \mathbb{C}$, 
%\begin{eqnarray*}e^{-\kappa_t(z)}=e^{-\kappa_t(z)}e^{zX_t}e^{-zX_t}\end{eqnarray*}
%Since $(e^{-\kappa_t(z)}e^{zX_t})$ is a martingale then $(e^{-\kappa_t(z)})$ is a semimartingale. 
%Since it is deterministic, it must be of bounded variation. This implies that $t \mapsto \kappa_t(z)$ has bounded variation.\\
If $t \mapsto  \kappa_t(z)$ has  bounded variation then 
$ t \mapsto e^{\kappa_t(z)}$
 has the same property. Remark~\ref{R3} says that
 $e^{z X_t} = M_t e^{\kappa_t (z)}$ where $(M_t)$ is a martingale.
 Finally, $(e^{z X_t})$ is a semimartingale  
and taking the logarithm 
 $(zX_t)$  has the same property.
\end{proof}

\begin{remarque}\label{R10}
Let $z \in D$. If $(X_t)$ is a semimartingale additive process,
  then $(e^{zX_t})$
 is necessarily a special semimartingale
since it is the product of a martingale and a
 bounded variation continuous deterministic function and by use
of integration by parts.
\end{remarque}

\begin{propo}\label{proposition38}
The function $(t,z)\mapsto \kappa_t(z)$ is continuous. In
particular, $(t,z) \mapsto \kappa_t(z)$, $t \in [0,T]$, 
$z$ belonging to a compact real subset, is bounded.
\end{propo}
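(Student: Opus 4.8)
The plan is to argue directly from the explicit representation of the cumulant generating function obtained in Propositions~\ref{LKdecoKappa} and~\ref{corroR1}, namely
$$\kappa_t(z)=\int_0^t \phi_s(z)\,da_s,\qquad \phi_s(z):=z\,\frac{db_s}{da_s}+\frac{z^2}{2}\,\frac{dc_s}{da_s}+\int_{\R}\bigl(e^{zx}-1-zh(x)\bigr)F_s(dx),$$
where $t\mapsto a_t$ is the deterministic function of Proposition~\ref{propoJSA}. Joint continuity will follow by combining equicontinuity in $t$, uniform over compact sets of $z$, with the (already available) pointwise continuity in $z$, both obtained by dominated convergence from a single $da$-integrable majorant.

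First I would record that $t\mapsto a_t$ is continuous. This is where the standing hypothesis that $X$ is an additive process \emph{continuous in probability} enters: having no fixed time of discontinuity forces the deterministic characteristics $(b,c,\nu([0,\cdot]\times\cdot))$ to be continuous in $t$, so each of $\|b\|_t$, $c_t$ and $t\mapsto\int_\R(|x|^2\wedge1)\nu([0,t],dx)$ is continuous, hence so is $a$ by~\eqref{171} (cf. Theorem~II.4.15 and the surrounding discussion of continuity in probability in~\cite{JS03}).

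Next I would fix $z_0\in D$ and a compact set $K\subset D$ containing a (relative) neighbourhood of $z_0$, and produce a single majorant $\Phi_K\in L^1([0,T],da)$ with $\sup_{z\in K}|\phi_s(z)|\le\Phi_K(s)$. This is exactly the estimate in the proof of Proposition~\ref{corroR1}, now made uniform over $K$ through the split $I_1+I_2+I_3$. Setting $c_+=\max_{z\in K}Re(z)$ and $c_-=\min_{z\in K}Re(z)$, both lie in $D_0$ since each $Re(z)$ with $z\in K\subset D$ belongs to $D_0$ by Proposition~\ref{LKdecoKappa}; hence on $\{|x|>1\}$ one has $\sup_{z\in K}|e^{zx}|\le e^{c_+x}+e^{c_-x}$, which is $\nu$-integrable, while the constants $1$ and $zh(x)$ contribute integrably because $\nu(\{|x|>1\})\le a_T$ by~\eqref{E1000}. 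On $\{|x|\le1\}$ the Taylor bound $|e^{zx}-1-zx|\le\tfrac12|z|^2x^2e^{|z||x|}$ gives $\sup_{z\in K}|e^{zx}-1-zh(x)|\le C_K(x^2\wedge1)$ with $C_K=\tfrac12\bigl(\max_{z\in K}|z|^2\bigr)e^{\max_{z\in K}|z|}$, again integrable by~\eqref{E1000}; and the drift and Gaussian densities integrate $da_s$ to $\sup_{z\in K}|z|\,\|b\|_T$ and $\sup_{z\in K}\tfrac12|z|^2c_T$. Thus $M_K:=\int_0^T\Phi_K\,da_s<\infty$. I expect this uniform domination to be the most delicate step, the point being to keep the extreme real parts $c_\pm$ inside $D_0$ (so that the $I_1$-piece stays finite) while handling the truncation $h$ and the drift/diffusion densities simultaneously.

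With these ingredients the conclusion follows from the splitting, for $(t,z)\to(t_0,z_0)$ with $z\in K$,
$$|\kappa_t(z)-\kappa_{t_0}(z_0)|\le\Bigl|\int_{(t_0\wedge t,\,t_0\vee t]}\phi_s(z)\,da_s\Bigr|+|\kappa_{t_0}(z)-\kappa_{t_0}(z_0)|.$$
The first term is bounded by $\int_{(t_0\wedge t,\,t_0\vee t]}\Phi_K\,da_s$, which tends to $0$ as $t\to t_0$ uniformly in $z\in K$, by continuity of $a$ and absolute continuity of the integral ($\Phi_K\in L^1(da)$); the second tends to $0$ by dominated convergence, using pointwise continuity of $z\mapsto\phi_s(z)$ and the domination by $\Phi_K$. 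Hence $(t,z)\mapsto\kappa_t(z)$ is jointly continuous on $[0,T]\times D$. The boundedness assertion is then immediate, since for $K$ a compact real subset of $D_0$ one has directly $|\kappa_t(z)|\le M_K$ on $[0,T]\times K$.
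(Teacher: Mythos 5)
Your proof is correct, but it follows a genuinely different route from the paper's. The paper splits the problem into: (i) continuity of $z\mapsto\kappa_t(z)$ uniformly in $t$ (read off from Proposition~\ref{LKdecoKappa}); (ii) continuity of $t\mapsto\kappa_t(z)$ for fixed $z\in{\rm Int}(D)$, obtained probabilistically --- one picks $\gamma>1$ with $\gamma z\in D$, deduces uniform integrability of $(\exp(zX_t))$, and combines this with continuity in probability of $X$ to get continuity of $t\mapsto\mathbb{E}[e^{zX_t}]$ in $\mathcal{L}^1$; (iii) extension from ${\rm Int}(D)$ to all of $D$ by uniform approximation. You instead stay entirely at the level of the deterministic representation $\kappa_t(z)=\int_0^t\phi_s(z)\,da_s$, build a single $L^1(da)$ majorant uniform over a compact $K\subset D$ (the estimate is a uniformized version of the $I_1+I_2+I_3$ splitting already in the proof of Proposition~\ref{corroR1}, and your handling of the extreme real parts $c_\pm\in D_0$ is right), and conclude by dominated convergence plus atomlessness of $da$. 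What your route buys: it treats interior and boundary points of $D$ on the same footing, so you avoid both the dilation trick and the approximation step (and the paper's implicit use of $\bar D=\overline{{\rm Int}(D)}$), and you get the explicit bound $M_K$ for free. What it costs: you need the continuity in $t$ of the deterministic characteristics --- equivalently, that $a$ has no atoms --- as the analytic incarnation of continuity in probability. That fact is true and standard (Theorem~9.8 of~\cite{Sa99}, or the characterization of processes with independent increments without fixed times of discontinuity in Chapter~II of~\cite{JS03}), but it is \emph{not} asserted in Proposition~\ref{propoJSA}, which only records bounded variation of $a$; you should cite it precisely rather than lean on ``the surrounding discussion,'' since without it the term $\int_{(t_0\wedge t,\,t_0\vee t]}\Phi_K\,da_s$ need not vanish as $t\uparrow t_0$. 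With that reference pinned down, the argument is complete.
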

\begin{proof}
\begin{itemize}
\item Proposition~\ref{LKdecoKappa} implies that 
$z \mapsto \kappa_t(z)$ is continuous uniformly w.r.t.  $t \in [0,T]$.
\item We first prove  that 
$z \in {\rm Int}(D)$, $t\mapsto \kappa_t(z)$ is continuous. 
Since $z \in {\rm Int}(D)$, there is $\gamma > 1$ such that $\gamma z \in D$; so 
\begin{eqnarray*}\mathbb{E}[\exp(z\gamma X_t)]=\exp(\kappa_t(\gamma z))\leq \exp(\sup_{t\leq T}(\kappa_t(\gamma z))) \ ,\end{eqnarray*}
because $t\mapsto \kappa_t(\gamma z)$ is bounded, being of bounded
variation. This implies that $(\exp(zX_t))_{t \in [0,T]}$ is uniformly
integrable. Since $(X_t)$ is continuous in probability, then
$(\exp(zX_t))$ is continuous in $\mathcal{L}^1$. The partial 
result easily
follows.
\item To conclude it remains to show that  $t\mapsto \kappa_t(z)$ is
 continuous
for every $z \in D$.
Since $\bar D = \overline {\rm Int}(D)$, there is a sequence $(z_n)$ in the interior of $D$ 
converging to $z$. Since a uniform limit of continuous functions on $[0,T]$ 
is a continuous function, the result follows.
\end{itemize}
\end{proof}

\subsection{A reference variance measure} \label{SSRVM}
%%%%%%%%%%%%%%%%%%%%%%%%%%%%%%%%%%%%%%%%%%
%Let $X=(X_t)_{t \in [0,T]}$ be a real-valued semimartingale.
 % We come back to the main optimization problem which was formulated in 
% Section \ref{Generalities}. We assume that the process $S$ is the 
% discounted price of the non-dividend paying stock which is supposed to be of the form,
% \begin{center}
%  $S_t=s_0\exp(X_t)\ ,\quad \textrm{for all}\ t\in[0,T]\ ,$
% \end{center}
% where $s_0$ is a strictly positive constant and $X$ is a
% semimartingale additive process ,
%  in the sense of Definition~\ref{defPAI},  but not necessarily
%  with stationary increments.
% \medskip

For notational convenience we introduce the set
$\frac{D}{2} = \{ z \in \C \vert 2z \in D\} $.
\begin{remarque} \label{Rdivise2}
We recall that   $D$ is convex. Consequently we have.
\begin{enumerate}
\item If $y, z \in \frac{D}{2}$, then $y + z \in D$.
If $z \in \frac{D}{2}  $ then $\bar z \in \frac{D}{2} $
and $2 Re(z) \in D$.
\item Since  $0 \in D$, clearly
$\frac{D}{2} \subset D$.
\item Under Assumption~\ref{HypD} below, $2 \in D$
and so $\frac{D}{2} + 1 \subset D$.
\end{enumerate}
\end{remarque}

%\begin{remarque}\label{R1}
%Let $\gamma \in \R^\ast$.
%\begin{enumerate}
%\item $\mathbb{E}[\exp(\gamma (X_t-X_s))]>0$, since $X_t-X_s > - \infty$ a.s.
%\item $\exp(\gamma(X_t-X_s))$ has a strictly positive variance if
% $(X_t-X_s)$ is non-deterministic.
%\end{enumerate}
%\end{remarque}
%
We introduce a new function that will be useful in the sequel. 
%%%%%%%%%%%%%%%%%%%%%%%%%%%%%%%%%%%%%%%%%%%%%%%%%%%%%%%%%%%%%%%%%%%%%%%%%%%%%%%%%%%%%%%
%nouvelle definition
%%%%%%%%%%%%%%%%%%%%%%%%%%%%%%%%%%%%%%%%%%%%%%%%%%%%%%%%%%%%%%%%%%%%%%%%%%%%%%%%%%%%%%
\begin{defi}\label{defi:rho} 
\begin{itemize}
\item For any $t\in [0,T]$, if  $z,y \in  \frac{D}{2} $ 
we denote
%let $\rho_t$ denote the complex valued function such that for all $z,y \in D$
\begin{equation}
\label{eq:rho:zy}
\rho_t(z,y)=\kappa_t(z+y)-\kappa_t(z)-\kappa_t(y)\ .
\end{equation}
\item To shorten notations $\rho_t:  \frac{D}{2}
 \rightarrow \C$
 will  denote the real valued function  such that, 
\begin{equation}
\label{L1}
\rho_t(z)= \rho_t(z,\bar z)=\kappa_t(2Re(z))-2Re(\kappa_t(z))\ . 
\end{equation}
Notice that the latter equality results from Remark~\ref{remarkR2} 1. 
\end{itemize}
\end{defi}
%%%%%%%%%%%%%%%%%%%%%%%%%%%%%%%%%%%%%%%%%%%%%%%%%%%%%%%%%%%%%%%%%%%%%%%%%%%%%%%%%
%
An important technical lemma follows below.
\begin{lemme}\label{L2}
Let $z\in \frac{D}{2}$, with $Re(z) \neq 0$, then, 
% with $Re(z)>0$. 
$t\mapsto \rho_t(z)$ is strictly increasing if and only if $X$ has no deterministic increments.
\end{lemme}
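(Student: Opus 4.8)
The plan is to turn the abstract definition \eqref{L1} into an explicit, manifestly monotone formula by inserting the L\'evy--Khinchine representation \eqref{E3.5bis} of Proposition~\ref{LKdecoKappa}. Writing $z=a+ib$ with $a=Re(z)\neq 0$ and $b=Im(z)$, I would compute $\kappa_t(2a)$ and $2Re(\kappa_t(z))$ separately. In the subtraction $\kappa_t(2a)-2Re(\kappa_t(z))$ the linear drift terms $2ab_t$ cancel, and so do the truncation terms involving $2a\,h(x)$ under the integral; the Gaussian parts combine into $\bigl(2a^2-(a^2-b^2)\bigr)c_t=|z|^2 c_t$, while the jump parts combine into $\int\bigl(e^{2ax}-2e^{ax}\cos(bx)+1\bigr)\,\nu$. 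Since $e^{2ax}-2e^{ax}\cos(bx)+1=|e^{zx}-1|^2$, this yields the key identity
\begin{equation*}
\rho_t(z)=|z|^2 c_t+\int_{[0,t]\times\R}\left|e^{zx}-1\right|^2\,\nu(ds,dx)\ .
\end{equation*}
The recombination is legitimate because each of the two integrals converges absolutely, by the estimate \eqref{A131} established in the proof of Proposition~\ref{corroR1}, so that $Re$ may be exchanged with integration and the pieces collected.

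Both $t\mapsto c_t$ and $t\mapsto \nu([0,t]\times\R)$ are non-decreasing, hence $\rho_\cdot(z)$ is non-decreasing, and for $0\le s<t\le T$
\begin{equation*}
\rho_t(z)-\rho_s(z)=|z|^2(c_t-c_s)+\int_{(s,t]\times\R}\left|e^{zx}-1\right|^2\,\nu(du,dx)\ .
\end{equation*}
I would then analyse when this increment vanishes. The first term is non-negative and equals zero iff $c_t=c_s$, using $z\neq 0$. For the second, the decisive observation is that $|e^{zx}-1|^2>0$ for every $x\neq 0$: indeed $|e^{zx}|=e^{ax}\neq 1$ as soon as $x\neq 0$, precisely because $a=Re(z)\neq 0$. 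This is exactly the point where the hypothesis $Re(z)\neq 0$ is indispensable; were $Re(z)$ equal to $0$, a lattice-supported jump measure could make the integrand vanish on its support. Since $\nu$ charges no mass at $x=0$, a routine truncation argument then shows $\int_{(s,t]\times\R}|e^{zx}-1|^2\,\nu>0$ if and only if $\nu((s,t]\times\R)>0$.

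It remains to match this with the absence of deterministic increments, understood as: for all $s<t$, $X_t-X_s$ is not a.s. constant. Because $X$ has independent increments and deterministic characteristics $(b,c,\nu)$, the increment $X_t-X_s$ is infinitely divisible with Gaussian coefficient $c_t-c_s$ and L\'evy measure $\nu((s,t]\times\cdot)$. Such a law is degenerate, i.e. $X_t-X_s$ is a.s. constant, if and only if the modulus of its characteristic function is identically $1$, which by the L\'evy--Khinchine formula happens exactly when $c_t-c_s=0$ and $\nu((s,t]\times\R)=0$. Combining with the previous step, $\rho_t(z)-\rho_s(z)=0$ iff $X_t-X_s$ is deterministic. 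Hence $t\mapsto\rho_t(z)$ is strictly increasing iff no increment $X_t-X_s$, $s<t$, is deterministic, which is the asserted equivalence.

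The only genuinely delicate points are bookkeeping: justifying the absolute convergence that permits the term-by-term recombination of the two integrals (handled by \eqref{A131} and the commutation of $Re$ with integration), checking that $\nu(\,\cdot\times\{0\})=0$, and invoking the standard degeneracy criterion for infinitely divisible distributions. The conceptual heart of the argument, and the step I would emphasize, is the explicit representation $\rho_t(z)=|z|^2 c_t+\int|e^{zx}-1|^2\,\nu$ together with the strict positivity of $|e^{zx}-1|^2$ away from the origin, which is where the assumption $Re(z)\neq 0$ does all the work.
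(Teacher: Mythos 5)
Your proof is correct, but it follows a genuinely different route from the paper's. The paper never opens up the characteristics: it exponentiates the increment $\rho_t(z)-\rho_s(z)$ and, using only the martingale property of $e^{zX_t-\kappa_t(z)}$ (Remark~\ref{R3}), rewrites $e^{\rho_t(z)-\rho_s(z)}-1$ as the variance ratio
\begin{equation*}
\frac{Var\left[Re\big(e^{z(X_t-X_s)}\big)\right]+Var\left[Im\big(e^{z(X_t-X_s)}\big)\right]}{\left|\mathbb{E}\left[e^{z(X_t-X_s)}\right]\right|^2}\ ,
\end{equation*}
which vanishes exactly when $e^{z(X_t-X_s)}$ is a.s.\ constant, i.e.\ (taking moduli and using $Re(z)\neq 0$) when $X_t-X_s$ is deterministic. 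You instead substitute the L\'evy--Khinchine representation \eqref{E3.5bis} of Proposition~\ref{LKdecoKappa} into \eqref{L1} to obtain $\rho_t(z)=|z|^2c_t+\int_{[0,t]\times\R}|e^{zx}-1|^2\,\nu(ds,dx)$, and then invoke the degeneracy criterion for infinitely divisible laws. Both arguments are sound, and both isolate the same role of the hypothesis $Re(z)\neq 0$ (strict positivity of $|e^{zx}-1|$, resp.\ of the modulus $|e^{zx}|-1$, away from $x=0$). The trade-off: the paper's argument is more elementary and self-contained --- it needs neither the semimartingale characteristics nor the uniqueness of the L\'evy triplet, and the variance identity \eqref{E33} is reused verbatim in Remark~\ref{33bis} for the discrete-time mean-variance trade-off. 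Your argument requires the bookkeeping of \eqref{A131} to justify the recombination and the standard (but nontrivial) degeneracy criterion, but in exchange it produces an explicit density of $\rho_{dt}(z)$ with respect to $da_t$, namely $|z|^2\frac{dc_t}{da_t}+\int_\R|e^{zx}-1|^2F_t(dx)$, which makes the equivalence of $d\rho$ and $da_t$ (Proposition~\ref{remarque314Bis}) immediate without passing through the topological argument of Lemma~\ref{propoRusso}. That explicit formula is a genuinely useful by-product that the paper's proof does not deliver.
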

\begin{proof}
It is enough to show that $X$ has no deterministic increments if and only if for any $0\leq s<t\leq T$, the following quantity is positive, 
\begin{equation}
\label{eq1}
\rho_t(z)-\rho_s(z)=\big [\kappa_t\big(2Re(z)\big)-\kappa_s\big(2Re(z)\big)\big ]-2Re\big (\kappa_t(z)-\kappa_s(z)\big ) \ .
\end{equation}
By Remark~\ref{R3}, 
%for all $z\in D$, 
we have
$\exp[\kappa_t(z)-\kappa_s(z)]=\mathbb{E}[\exp(z\Delta_s^tX )],$
where  $\Delta_s^tX:=X_t-X_s.
$
%Taking the exponential of $\rho_t(z)-\rho_s(z)$ yields
%$$
%\exp[\rho_t(z)-\rho_s(z)]=\mathbb{E}[\exp(2Re(z)\Delta_s^t )]-\exp\left [-\big(\kappa_t(z)-\kappa_s(z)\big)-\overline{\big(\kappa_t(z)-\kappa_s(z)\big)}\right]\ . 
%$$
Applying this property and Remark~\ref{remarkR2} 1.,
 to the exponential of the first term on the right-hand side of \eqref{eq1} yields
\begin{equation*}
\exp\left [\kappa_t\big(2Re(z)\big)-\kappa_s\big(2Re(z)\big)\right ]
=\mathbb{E}[\exp(2Re(z)\Delta_s^tX )]
= 
\mathbb{E}[\exp((z+\bar z)\Delta_s^tX )]
=
\mathbb{E}[\left|\exp(z\Delta_s^tX )\right|^2]\ .
\end{equation*}
Similarly, for the exponential of the second term on the right-hand 
side of
\eqref{eq1}, one gets
\begin{eqnarray*}
\exp\left [2Re\big (\kappa_t(z)-\kappa_s(z)\big )\right ]
&=&
\exp\left [\big(\kappa_t(z)-\kappa_s(z)\big)+\overline{\big(\kappa_t(z)-\kappa_s(z)\big)}\right]
=
\left|\mathbb{E}[\exp(z\Delta_s^tX )]\right|^2\ .
\end{eqnarray*}
Hence taking the exponential of $\Delta_s^t\rho(z):=\rho_t(z)-\rho_s(z)$ yields
\begin{eqnarray}
\label{E33}
\exp[\Delta_s^t\rho(z)]-1
&=&\frac{\mathbb{E}[\left|\exp(z\Delta_s^tX
    )\right|^2]}{\left|\mathbb{E}[\exp(z\Delta_s^t X)]\right|^2}-1 
=\frac{\mathbb{E}[\left|\Gamma_s^tX(z)\right|^2]}{\left|\mathbb{E}[\Gamma_s^tX(z)]\right|^2}-1\ ,\quad \textrm{where} \ \Gamma_s^tX(z)=\exp(z\Delta_s^tX )\ ,
\nonumber \\ &&\\
&=&\frac{Var\left [Re\big (\Gamma_s^tX(z)\big )\right ]+Var\left [Im\big
    (\Gamma_s^tX(z)\big )\right
  ]}{\left|\mathbb{E}[\Gamma_s^tX(z)]\right|^2}. \nonumber
\end{eqnarray}
\begin{itemize}
\item If $X$ has a deterministic increment $\Delta_s^tX =X_t-X_s$, then $\Gamma_s^tX(z)$ is again deterministic and~ \eqref{E33} vanishes and hence
  $t \mapsto \rho_t(z)$ is not strictly increasing.
\item If $X$ has never deterministic increments, then the nominator is never zero,
 otherwise $Re\big (\Gamma_s^tX(z)\big) = \exp( Re(z) \Delta_s^tX)$,
% $Im\big (\Gamma_s^t(z)\big )$ 
and therefore $\Delta_s^tX$ 
%$  \Gamma_s^t(z)$ 
would be deterministic.
\end{itemize}
\end{proof}

\begin{remarque}\label{33bis}
%\begin{enumerate}
%\item
% In particular for $\gamma \in \frac{D}{2}$, $\gamma \neq 0$, the function $t \mapsto \rho_t(\gamma)$ 
% is strictly increasing.
%\item
 If $2\in D$, setting $z=1$ in~(\ref{E33}) implies  
that $\rho_t(1)>\rho_s(1)$ is equivalent to ${\displaystyle 
\frac{Var\big(\exp(\Delta_s^tX )\big)}{\big(\mathbb{E}[\exp(\Delta_s^tX ) ]
\big)^2}}>0$. 
Taking the process $S$ at discrete instants $t_0=0<\cdots <t_k<\cdots<t_n=T$, 
 one can define the discrete time process $(S^d_k)_{k=0,\cdots, n}$ such that  $S^d_k=S_{t_k}$ and derive the counterpart of Lemma~\ref{L2} in the discrete time setting. 
Indeed, the following assertions are equivalent: 
\begin{itemize}
	\item  $(\rho_{t_k}(1))_{k=0,\cdots n}$ is an increasing sequence;
	\item $\Delta_{t_k}^{t_{k+1}} X$ is never deterministic
 for any $k=0,\cdots, n-1$. 
	\end{itemize}
Moreover, accordingly to Proposition~3.10 in~\cite{GORDis}, we observe that,
 under one of the above equivalent conditions, the 
(discrete time) \textit{mean-variance trade-off process} associated with  $(S^d_k)_{k=0,\cdots, n}$ defined by 
$$
K^d_n:={\displaystyle 
\sum_{k=0}^{n-1}\frac{\big (\mathbb{E}[S_{k+1}-S_{k}  \vert \shf_k]\big)^2}
{\big( Var [S_{k+1}-S_{k} \vert \shf_k ]
\big)^2}}={\displaystyle \sum_{k=0}^{n-1}
\frac{\big (\mathbb{E}[\exp( \Delta_{t_k}^{t_{k+1}} X)-1\big)^2}
{\big( Var \left[ \exp(\Delta_{t_k}^{t_{k+1}} X )\right ]
\big)^2}}
$$ 
is always bounded. According to Proposition 2.6 of \cite{S95bis}, that condition
guarantees that every square integrable random variable admits a discrete F\"ollmer-Schweizer
decomposition.
 The process $K^d$ is the discrete analogous
of the MVT  process $K$;
 one can compare the mentioned result to  item 1. of Theorem \ref{ThmExistenceFS}.
%\end{enumerate}
\end{remarque}
%
%We denote by $\shd$ the set of $z \in D$ such that
From now on, we will always suppose the following assumption.
\begin{Hyp}\label{HypD}
\begin{enumerate}
\item $(X_t)$ has no deterministic increments.
\item $2 \in D$. 
\end{enumerate}
\end{Hyp}

We continue with a simple observation.
\begin{lemme}\label{lemme38a}
Let $I$ be a compact real interval included in $D$.
Then $\sup_{x \in I}
\sup_{t\leq T}\mathbb{E}[S_t^{x}]<\infty.$
\end{lemme}
\begin{proof}
%[\textbf{Proof of lemma \ref{lemme38a}}]
Let $t \in [0,T]$ and $x\in I$,  since $\kappa$ is continuous, we have \\
$
\mathbb{E}[S_t^{x}]=
s_0^{x}\exp\{\kappa_t(x)\}\leq 
\max(1,s_0^{\sup I})\exp(\sup_{t\leq T,x \in I} \vert \kappa_t(x)
\vert)\ .
$
\end{proof}
%
%\begin{remarque}
%Taking in account Corollary \ref{corroR1}, for all $z \in D$, we have
%\begin{eqnarray*}\kappa_{dt}(z)\ll da_t\end{eqnarray*}
%\end{remarque}
%

\begin{remarque}\label{remarque313}
From now on, in this section, $d\rho_t=\rho_{dt}$ will denote the measure 
\begin{equation}
\label{eq:rho}
 d\rho_t=\rho_{dt}(1)=d(\kappa_t(2)-2\kappa_t(1))\ .
\end{equation}
According to
% Remark~\ref{33bis} 1., 
Assumption \ref{HypD} and Lemma \ref{L2},  it is a positive measure which is strictly positive on each interval.
This measure will play a fundamental role. \\
We state below a result that will help us to show that $\kappa_{dt}(z)$ is absolutely continuous w.r.t.
  $\rho_{dt}(1)$.
%=\kappa_{dt}(2)-2\kappa_{dt}(1)$.
\end{remarque}

\begin{lemme}\label{propoRusso}
We consider two positive finite non-atomic Borel measures on $E \subset \mathbb{R}^n$, $\mu$ and $\nu$. We suppose the following:
\begin{enumerate}
\item $\mu \ll  \nu\ ;$
\item $\mu(I)\neq 0$ for every open ball $I$ of $E$.
\end{enumerate} 
Then $h := {\displaystyle \frac{d\mu}{d\nu} \neq 0}$ $\nu$ a.e. In particular
 $\mu$ and $\nu$ are equivalent.
\end{lemme}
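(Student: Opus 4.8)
The plan is to argue directly with the Radon--Nikodym derivative. Since $\mu \ll \nu$, there is a $\nu$-a.e. determined density $h = \tfrac{d\mu}{d\nu} \geq 0$ with $\mu(B) = \int_B h \, d\nu$ for every Borel $B \subseteq E$. Setting $N := \{x \in E : h(x) = 0\}$, the trivial half is immediate: $\mu(N) = \int_N h \, d\nu = 0$. The entire content of the lemma is therefore the reverse statement $\nu(N) = 0$, which is exactly $h \neq 0$ $\nu$-a.e.; once this is in hand, $\{h>0\}$ is both $\nu$-full and $\mu$-full, and $\nu(B) = \int_B h^{-1}\, d\mu$ yields $\nu \ll \mu$, hence the asserted equivalence.

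To reach $\nu(N)=0$ I would try to combine the two hypotheses with the regularity of finite Borel measures on $\mathbb{R}^n$. First, such measures are outer regular, so from $\mu(N)=0$ one gets, for each $\varepsilon>0$, an open set $U \supseteq N$ with $\mu(U)<\varepsilon$, and every open subset of $\mathbb{R}^n$ is a countable union of open balls. The intended mechanism is then to invoke hypothesis (2)---that $\mu$ charges every open ball---together with non-atomicity to transfer the smallness of $\mu$ on this ball cover into an estimate on $\nu(N)$.

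The main obstacle, and where I expect the argument to genuinely stall, is precisely this transfer step: hypothesis (2) says only that $\mu$ has full support, and full support does not by itself force $\nu \ll \mu$. Concretely, take $\nu$ to be Lebesgue measure on $[0,1]$, let $F$ be a nowhere dense set of positive Lebesgue measure, and set $h = \mathbf{1}_{F^c}$ and $\mu = h\,\nu$; then $\mu \ll \nu$, $\mu$ is non-atomic, and $\mu(I)>0$ for every open interval $I$ (because $F$ has empty interior), yet $h$ vanishes on the positive-$\nu$-measure set $F$. Thus the conclusion does not follow from (1)--(2) alone, and the proof must bring in the extra structure present in the paper's application: there the measures $\kappa_{dt}(z)$ and $\rho_{dt}(1)$ arise from continuous functions of bounded variation (\emph{cf.} Proposition~\ref{proposition38}), so that in the underlying one-dimensional, continuous setting the vanishing set of the relevant density would have nonempty interior if it had positive measure, contradicting (2). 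I would therefore route the final implication through this continuity/one-dimensional input rather than through (1)--(2) in full generality.
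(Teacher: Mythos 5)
Your assessment is correct: Lemma~\ref{propoRusso} is false as stated, and your fat Cantor set construction is a genuine counterexample. Take $E=[0,1]$, let $\nu$ be Lebesgue measure, let $F\subset[0,1]$ be closed, nowhere dense, with $\nu(F)>0$, and set $\mu(B)=\nu(B\cap F^c)$. Both measures are finite, non-atomic, Borel; $\mu\ll\nu$ with density $h=\mathbf{1}_{F^c}$; and $\mu(I)=\nu(I\cap F^c)>0$ for every open ball $I$ of $E$, since $F^c$ is open and dense. Yet $h$ vanishes on $F$, which has positive $\nu$-measure, so the conclusion fails and $\nu\not\ll\mu$. Hypotheses (1)--(2) only assert that $\mu$ has full support, and full support does not force equivalence. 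For the record, the paper's own argument collapses at the Urysohn step: it requires a continuous $\varphi$ equal to $1$ on $K_{\epsilon}$ and to $0$ on the closure of $K_{\frac{\epsilon}{2}}^c$, which needs the separation $K_{\epsilon}\subset \mathrm{int}(K_{\frac{\epsilon}{2}})$; inner regularity provides no such separation, and when $B=\{h=0\}$ is closed and nowhere dense one has $\mathrm{int}(K_{\frac{\epsilon}{2}})=\emptyset$, so no such $\varphi$ exists. The proof's final assertion --- that $B$ must contain an open ball --- is exactly what your example violates.

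One caveat on the repair you sketch in your last paragraph: continuity, or even absolute continuity with respect to Lebesgue measure, of the underlying distribution functions does not rescue the statement. In the example above both $\mu$ and $\nu$ have Lipschitz distribution functions on $[0,1]$, and the vanishing set $F$ of the density still has positive measure and empty interior; so the implication ``positive-measure vanishing set $\Rightarrow$ nonempty interior'' that you propose to invoke is not available in the continuous one-dimensional setting either. A genuine repair must either strengthen hypothesis (2) to something quantitative (for instance a lower bound of the form $\mu(I)\geq c\,\nu(I)$ over balls, or a hypothesis bearing directly on the density $h$), or else verify the needed equivalence $da_t\sim d\rho_t$ by hand from the structure of the characteristics in the application (Proposition~\ref{remarque314Bis}). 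As stated, the lemma cannot be proved because it is not true, and your proposal correctly identifies both the irreducible step and the reason it fails.
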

\begin{proof}
%[\textbf{Proof of Lemma \ref{propoRusso}}]
We consider the Borel set
$
B=\{x\in E|h(x)=0\}.$
We want to prove that $\nu(B)=0$. So we suppose that there exists a
constant $c>0$ such that $\nu(B)=c>0$
and take another constant $\epsilon$ such that $0<\epsilon<c$. 
Since $\nu$ is a Radon measure, there are compact subsets 
 $K_{\epsilon}$ and $K_{\frac{\epsilon}{2}}$ of $E$ such that
$
K_{\epsilon} \subset K_{\frac{\epsilon}{2}} \subset B$ and 
$ \nu(B - K_{\epsilon})<\epsilon,\quad 
\nu(B - K_{\frac{\epsilon}{2}})<\frac{\epsilon}{2}.$
Setting $\epsilon=\frac{c}{2}$, we have
$
\nu(K_{\epsilon})>\frac{c}{2}$
and $ \nu(K_{\frac{\epsilon}{2}})>\frac{3c}{4}.
$
By Urysohn lemma, there is a continuous function 
$\varphi:E\rightarrow \mathbb{R}$ such that, $0 \leq \varphi \leq 1 $ with
$\varphi=1$ on $  K_{\epsilon}$ 
and $ \varphi=0 $
on the closure of $  K_{\frac{\epsilon}{2}}^c.$
Now
$
\int_E \varphi(x)\nu(dx) \ge \nu(K_{\epsilon})>\frac{c}{2}>0.
$
By continuity of $\varphi$ there is an open set $O \subset E$ with
$\varphi(x)>0$ for $x \in O$. Clearly $O \subset K_{\frac{\epsilon}{2}}
\subset B$; since $O$ is relatively compact, it is a countable union of
balls, and so $B$ contains a ball $I$. The fact that $h=0$ on $I$
implies $\mu(I)=0$ and this 
contradicts Hypothesis 2. of the statement.
Hence the result follows.

\end{proof}

\begin{remarque}\label{remarque3123}
\begin{enumerate}
\item If $E=[0,T]$, then point 2. of Lemma~\ref{propoRusso} becomes $\mu(I)\neq 0$ for every open interval $I\subset [0,T]$.
\item The result holds for every normal metric locally connected space
 $E$, provided  $\nu$ are Radon measures.
\end{enumerate} 
\end{remarque}

\begin{propo}\label{remarque314Bis}
Under Assumption~\ref{HypD} 
\begin{equation}
 d(\kappa_t(z))\ll  d\rho_t\ , \quad \textrm{for all} \ 
z \in D\label{AncHyp1}\ .
\end{equation}
\end{propo}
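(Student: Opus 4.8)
The plan is to factor the desired absolute continuity through the reference measure $a$ of Proposition~\ref{propoJSA}, for which control is already available from Proposition~\ref{corroR1}. The key observation is that Proposition~\ref{corroR1} gives $\kappa_{dt}(z)\ll da_t$ for every $z\in D$; hence it suffices to prove the single relation $da_t\ll d\rho_t$ and then chain the two, since $\kappa_{dt}(z)\ll da_t\ll d\rho_t$ yields the claim.

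First I would record that $d\rho_t\ll da_t$. Indeed, by Assumption~\ref{HypD} we have $2\in D$, and by convexity of $D$ together with $0\in D$ also $1\in D$; applying Proposition~\ref{corroR1} to $z=2$ and to $z=1$ gives $d\kappa_t(2)\ll da_t$ and $d\kappa_t(1)\ll da_t$, so that $d\rho_t=d(\kappa_t(2)-2\kappa_t(1))\ll da_t$.

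The core step is then to upgrade this one-sided relation to an equivalence $a\sim\rho$ by invoking Lemma~\ref{propoRusso} with $\mu=\rho$ and $\nu=a$ on $E=[0,T]$. The absolute continuity $\mu\ll\nu$ is exactly the relation just established. The second hypothesis, that $\rho(I)\neq 0$ for every open interval $I\subset[0,T]$ (in the form given by Remark~\ref{remarque3123}), follows from Lemma~\ref{L2}: under Assumption~\ref{HypD} the process $X$ has no deterministic increments, so $t\mapsto\rho_t(1)$ is strictly increasing, whence $d\rho_t$ charges every nonempty subinterval. Lemma~\ref{propoRusso} then gives that the density $d\rho/da$ is nonzero $a$-a.e. and that $\rho$ and $a$ are equivalent; in particular $da_t\ll d\rho_t$, which closes the chain $\kappa_{dt}(z)\ll da_t\ll d\rho_t$.

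The main technical point to check, and the place I expect the only real obstacle, is that $a$ and $\rho$ satisfy the standing hypotheses of Lemma~\ref{propoRusso}, namely that they are finite and non-atomic. Finiteness is clear: $a_T<\infty$ since $a$ is a finite increasing function by Proposition~\ref{propoJSA}, and $\rho_T<\infty$ since $\rho=\kappa_{\cdot}(2)-2\kappa_{\cdot}(1)$ is of bounded variation. Non-atomicity amounts to continuity of $t\mapsto a_t$ and $t\mapsto\rho_t$: the latter is continuous by Proposition~\ref{proposition38}, while the continuity of $a$ follows because $X$, being an additive process, is continuous in probability and therefore has continuous deterministic characteristics $(b,c,\nu)$, so that $a_t=||b||_t+c_t+\int_{\R}(|x|^2\wedge 1)\nu([0,t],dx)$ has no jumps. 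Once these regularity verifications are in place, the argument is complete.
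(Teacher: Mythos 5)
Your proposal is correct and follows essentially the same route as the paper: the paper's proof likewise applies Lemma~\ref{propoRusso} with $\mu=\rho$ and $\nu=a$, citing Proposition~\ref{corroR1} for the absolute continuity and Lemma~\ref{L2} for the non-vanishing on intervals, and then concludes that $da_t$ and $d\rho_t$ are equivalent so that $\kappa_{dt}(z)\ll da_t\ll d\rho_t$. Your additional verifications of finiteness and non-atomicity are left implicit in the paper but are correct and harmless.
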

\begin{proof}
%[\textbf{Proof of Proposition~\ref{remarque314Bis}}] 
We apply Lemma~\ref{propoRusso}, with $d\mu=d\rho_t$ and $d\nu=da_t$. 
 Indeed, Proposition~\ref{corroR1} implies Condition 1. of Lemma~\ref{propoRusso} and Lemma~\ref{L2} implies Condition 2. of Lemma~\ref{propoRusso}. Therefore, $da_t$ 
is equivalent to $d\rho_t$.
\end{proof}
\begin{remarque}\label{R314bis}
Notice that this result also holds  with $d\rho_t(y)$ instead of 
$d\rho_t=d\rho_t(1)$, for any $y \in \frac{D}{2}
$ such that $Re(y) \neq 0$. 
\end{remarque}

\subsection{On some semimartingale decompositions and
 covariations} \label{OSSDAC}
%%%%%%%%%%%%%%%%%%%%%%%%%%%%%%%%%%%%%%%%%%%%%%%%%%%%%%%%%%%%%%%%%%%
%We denote by $a\vee b$
% for all $a,b \in \mathbb{R}$ the maximum between $a$ and $b$.
%

\begin{propo}\label{corro36}
We suppose the validity of item 2. of  Assumption~\ref{HypD}. 
Let $y,z \in  \frac{D}{2}$.
% such that $y+z,2Re(y),Re(y)+1,2Re(z)$ and $Re(z)+1 \in D$.
%Let $z \in D$ such that $(Re(z)+1)\vee 2 Re(z) \in D$.
 Then $S^z$ is a special semimartingale whose canonical decomposition
 $S_t^z =
% S_0^z + 
M(z)_t+A(z)_t$ satisfies
\begin{equation}\label{LL2}
% \label{MAexpPIIbis}
A(z)_t=\int_0^tS_{u-}^z\kappa_{du}(z)\ ,
\quad 
%\begin{equation} \label{LL2}
\left\langle M(y),M(z)\right\rangle_t= 
\int_0^tS_{u-}^{y+z}\rho_{du}(z,y)\ , \quad
M(z)_0 = s_0^z,
\end{equation}
where  $d\rho_u(z)$ is defined by equation~(\ref{L1}). 
In particular we have the following: 
\begin{enumerate} 
\item \label{MAexpPII} 
$ \left\langle M(z),M\right\rangle_t=\int_0^tS_{u-}^{z+1}\rho_{du}(z,1)$
\item $\left\langle M(z),M(\bar z)
\right\rangle_t=\int_0^tS_{u-}^{2Re(z)}\rho_{du}(z)\ .$ \label{L22}
\end{enumerate}

%\quad \textrm{and}\quad \left\langle M(z),M\right\rangle_t=\int_0^tS_{u-}^{z+1}\rho_{du}(z,1)\ ,\label{MAexpPII}\end{eqnarray}
%and
%\begin{eqnarray}\left\langle M(z),M\right\rangle_t=\int_0^tS_{u-}^{z+1}(\kappa_{du}(z+1)-\kappa_{du}(z)-\kappa_{du}(1))\ .\label{kappa36}\end{eqnarray}
%\begin{eqnarray}\left\langle M(z),M\right\rangle_t=\int_0^tS_{u-}^{z+1}\rho_{du}(z,1)\ .\label{kappa36}\end{eqnarray}
%where $M:=M(1)$ corresponds to $z=1$.
%For $y,z \in D$ such that $y+z,2Re(y),Re(y)+1,2Re(z)$ and $Re(z)+1 \in D$,
%\begin{eqnarray}\left\langle M(y),M(z)\right\rangle_t=\int_0^tS_{u-}^{y+z}(\kappa_{du}(y+z)-\kappa_{du}(y)-\kappa_{du}(z))\ .\end{eqnarray}
%\begin{eqnarray}\left\langle M(y),M(z)\right\rangle_t=\int_0^tS_{u-}^{y+z}\rho_{du}(z,y)\ .
%\label{LL2}
%\end{eqnarray}
%In particular, if $z \in D$ such that $2Re(z),Re(z)+1\in D$, then 
%\begin{eqnarray}\left\langle M(z),M(\bar z)\right\rangle_t=\int_0^tS_{u-}^{2Re(z)}\rho_{du}(z)\ ,
%\label{L22}\end{eqnarray}
%where $d\rho_u(z)$ is defined by equation~(\ref{L1}). 
\end{propo}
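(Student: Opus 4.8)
The plan is to establish the semimartingale decomposition of $S^z$ by starting from the martingale representation provided by Remark~\ref{R3}. Recall that $M_t(z) := \exp(zX_t - \kappa_t(z))$ is a martingale, hence $S^z_t = s_0^z \exp(zX_t) = s_0^z \, M_t(z) \, e^{\kappa_t(z)}$. Since Proposition~\ref{corroR1} ensures that $t \mapsto \kappa_t(z)$ has bounded variation (and is in fact absolutely continuous w.r.t.\ $da_t$), the factor $e^{\kappa_t(z)}$ is a continuous bounded variation deterministic process. Thus $S^z$ is the product of a square integrable martingale and a continuous deterministic bounded variation function, which is exactly the situation described in Remark~\ref{R10}, guaranteeing that $S^z$ is a special semimartingale. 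First I would apply integration by parts to the product $s_0^z M_t(z) e^{\kappa_t(z)}$; since $e^{\kappa_t(z)}$ is continuous and of bounded variation there is no covariation term, and one obtains
\begin{equation*}
S^z_t = s_0^z + \int_0^t e^{\kappa_{u}(z)} \, s_0^z \, dM_u(z) + \int_0^t s_0^z M_{u-}(z) \, e^{\kappa_u(z)} \kappa_{du}(z).
\end{equation*}
Recognizing $s_0^z M_{u-}(z) e^{\kappa_u(z)} = S^z_{u-}$, the bounded variation part is $A(z)_t = \int_0^t S^z_{u-} \kappa_{du}(z)$, and the local martingale part $M(z)$ has the differential $dM(z)_u = s_0^z e^{\kappa_u(z)} dM_u(z)$ with $M(z)_0 = s_0^z$. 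This yields the first formula in~\eqref{LL2}; the uniqueness of the canonical decomposition of a special semimartingale (cited in the preliminaries) identifies $M(z)$ and $A(z)$ unambiguously.

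For the angle bracket formula, I would compute $\langle M(y), M(z)\rangle$ by first identifying the product $M(y)_t M(z)_t$. The key observation is that $M(y)_t M(z)_t = s_0^{y+z} \exp((y+z)X_t) e^{-\kappa_t(y) - \kappa_t(z)}$, which relates to $S^{y+z}_t$ through $S^{y+z}_t = s_0^{y+z} \exp((y+z)X_t)$. Using $y+z \in D$ (valid since $y,z \in \frac{D}{2}$, per Remark~\ref{Rdivise2}) we have $\exp((y+z)X_t - \kappa_t(y+z))$ is a martingale, so $M(y)_t M(z)_t$ equals a martingale times the deterministic factor $\exp(\kappa_t(y+z) - \kappa_t(y) - \kappa_t(z)) = \exp(\rho_t(z,y))$. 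Then I would write the product $S^y S^z$ via integration by parts, isolate the predictable bounded variation term, and match it against the drift contributions coming from $A(y)$ and $A(z)$; what remains after subtracting the two ``self'' drift terms is precisely the angle bracket $\langle M(y), M(z)\rangle$. Concretely, from the Itô product rule $d(S^y_t S^z_t) = S^y_{t-} dS^z_t + S^z_{t-} dS^y_t + d[S^y, S^z]_t$ and the fact that $[M(y), M(z)]$ has predictable compensator $\langle M(y), M(z)\rangle$, I would extract the drift of $S^{y+z}$ and subtract the contributions $S^{y+z}_{u-}\kappa_{du}(z)$ and $S^{y+z}_{u-}\kappa_{du}(y)$, leaving $\int_0^t S^{y+z}_{u-}\, \rho_{du}(z,y)$ by the very definition~\eqref{eq:rho:zy} of $\rho$. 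The two special cases then follow immediately: item~1 by setting $y=1$ and item~2 by setting $y = \bar z$, using~\eqref{L1}.

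The main obstacle I anticipate is the careful bookkeeping in the angle-bracket computation, specifically justifying that the covariation $[M(y), M(z)]$ is integrable enough to possess a predictable compensator that is genuinely of bounded variation, and that this compensator is absolutely continuous w.r.t.\ $d\rho_t$. The absolute continuity is supplied by Proposition~\ref{remarque314Bis} together with Remark~\ref{R314bis}, which guarantee $d\kappa_t(z+y) \ll d\rho_t$ (hence each of the three terms defining $\rho$ is absolutely continuous w.r.t.\ $d\rho_t$), so the integrand $\rho_{du}(z,y)$ is well-defined as a density. The square-integrability of $M(z)$ needs item~2 of Assumption~\ref{HypD}, namely $2 \in D$: this ensures $\E[|M(z)_t|^2] = \E[\exp(2\Re(z)X_t)]\exp(-2\Re(\kappa_t(z)))$ is finite since $2\Re(z) \in D$ for $z \in \frac{D}{2}$ (Remark~\ref{Rdivise2}), and Lemma~\ref{lemme38a} gives the requisite uniform bound over $[0,T]$. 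Once integrability is secured, the identification of the compensator is a routine consequence of the uniqueness of the angle bracket as the predictable bounded variation process making $M(y)M(z) - \langle M(y), M(z)\rangle$ a local martingale, combined with the decomposition of $S^{y+z}$ already established in the first part.
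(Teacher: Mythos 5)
Your proof is correct and follows essentially the same route as the paper's, which likewise reduces everything to integration by parts together with the fact (Remark~\ref{R3}) that $e^{zX_t-\kappa_t(z)}$ is a martingale while the deterministic factor $e^{\kappa_t(z)}$ is continuous and of bounded variation (the paper simply refers to Lemma~3.2 of \cite{Ka06} for the details you spell out). The only blemish is notational: the identity you state for ``$M(y)_t M(z)_t$'' in the bracket computation holds for the exponential martingales $e^{yX_t-\kappa_t(y)}$, $e^{zX_t-\kappa_t(z)}$, not for the martingale parts $M(y)$, $M(z)$ of the canonical decompositions as denoted in the proposition --- but your actual derivation, via the It\^o product rule applied to $S^yS^z=S^{y+z}$ and identification of the predictable bounded variation parts, does not rely on that identity and is sound.
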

\begin{remarque}\label{remarque312B}
\begin{itemize}
\item Clearly $1 \in D$ since  $0$ and $2$  belong to $D_0$
and $D_0$ is convex by Proposition \ref{LKdecoKappa}. 
\item
If $z = 1$, we have $S^z = S$, so that by uniqueness
of the  special semimartingale decomposition,
it follows that
$M(1) = M$. 
\end{itemize}

\end{remarque}

\begin{proof}
The case $y = 1$, follows very similarly to the proof of
Lemma 3.2 of \cite{Ka06}. The major tools are integration by parts
and  Remark~\ref{R3} which says that $N(z)_t:=e^{-\kappa_t(z)}S_t^z$
 is a martingale. The general case can be easily adapted.
\end{proof}

\begin{remarque}\label{remarque312a}
Lemma~\ref{lemme38a} implies that $\mathbb{E}\left[|\left\langle M(y),M(z)
\right\rangle\right|]<\infty$ and so $M(z)$ is a square integrable 
martingale for any $z \in \frac{D}{2}$.
% such that $2Re(z),Re(z)+1\in D$.
\end{remarque}

%\begin{proof}[\textbf{Proof of Proposition \ref{corro36}}]
%\end{proof}

\subsection{On the Structure Condition}
%%%%%%%%%%%%%%%%%%%%%%%%%%%%%%%%%%%%%%%
Proposition~\ref{corro36} with $y=z=1$ yields 
$S = M+A$ where 
$
A_t=\int_0^tS_{u-}\kappa_{du}(1)$
and $M$ is a martingale such that 
$\left\langle M,M\right\rangle_t=\int_0^tS_{u-}^{2}
(\kappa_{du}(2)-2\kappa_{du}(1))=\int_0^tS_{u-}^{2}\rho_{du}.
%\label{MMCrochetObl}
$
At this point, the aim is to exhibit a predictable $\mathbb{R}$-valued 
process $\alpha$ such that 
\begin{enumerate}
 \item $A_t=\int_0^t \alpha_s d\left\langle M\right\rangle_s, t \in [0,T]$.
 \item $K_T=\int_0^T \alpha^2_s d\left\langle M\right\rangle_s$ 
is bounded. 
\end{enumerate}
In that case, according to item 1. of 
Theorem~\ref{ThmExistenceFS}, there will exist a unique FS decomposition
for any $H \in \mathcal{L}^2(\Omega,\shf,\P)$ and so the minimization
problem~(\ref{problem2}) will have a unique solution, characterized by
 Theorem~\ref{ThmSolutionPb1} 2.

\begin{propo}\label{propoabsconti} Under Assumption~\ref{HypD}, 
$A_t=\int_0^t \alpha_s d\left\langle M\right\rangle_s,$
where $\alpha$ is given by
\begin{eqnarray} \label{alphau}
\alpha_u:=\frac{\lambda_u}{S_{u-}} \quad \textrm{with}\quad \lambda_u:=\frac{d\kappa_{u}(1)}{d\rho_u}\ ,\quad \textrm{for all}\ u\in [0,T].\end{eqnarray}
Moreover the MVT process is given by 
\begin{eqnarray}K_t=\int_0^t \left(\frac{d(\kappa_{u}(1))}{d\rho_u}\right)^2d\rho_u\ .\label{KPII}\end{eqnarray}
\end{propo}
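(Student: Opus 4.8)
The plan is to read the two formulae straight off the canonical decomposition of $S$ already obtained in Proposition~\ref{corro36}, combined with the absolute continuity established in Proposition~\ref{remarque314Bis}. Specialising Proposition~\ref{corro36} to $y=z=1$, and using $M(1)=M$ from Remark~\ref{remarque312B}, gives the explicit differentials
\begin{equation*}
dA_u=S_{u-}\,\kappa_{du}(1),\qquad d\langle M\rangle_u=S_{u-}^2\,\rho_{du},
\end{equation*}
which are the only inputs needed.

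First I would justify that $\lambda_u:=d\kappa_u(1)/d\rho_u$ is well defined. By Proposition~\ref{remarque314Bis}, valid under Assumption~\ref{HypD}, one has $d\kappa_u(1)\ll d\rho_u$, so the Radon--Nikodym derivative $\lambda$ exists; since $t\mapsto\kappa_t(1)$ and $\rho$ are deterministic functions of bounded variation, $\lambda$ is a deterministic, hence predictable, process. Consequently $\alpha_u=\lambda_u/S_{u-}$ is predictable as well, because $S_{u-}$ is left-continuous and adapted, and it is strictly positive since $S=s_0\exp(X)$ with $s_0>0$; this positivity is exactly what legitimises the division defining $\alpha$.

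The identity for $A$ is then immediate: substituting $\kappa_{du}(1)=\lambda_u\,\rho_{du}$ into the expression for $dA_u$ and inserting $S_{u-}^2\rho_{du}=d\langle M\rangle_u$ yields
\begin{equation*}
dA_u=S_{u-}\lambda_u\,\rho_{du}=\frac{\lambda_u}{S_{u-}}\,S_{u-}^2\,\rho_{du}=\alpha_u\,d\langle M\rangle_u,
\end{equation*}
which is precisely $A_t=\int_0^t\alpha_s\,d\langle M\rangle_s$. For the mean-variance trade-off process I would simply compute
\begin{equation*}
K_t=\int_0^t\alpha_s^2\,d\langle M\rangle_s=\int_0^t\frac{\lambda_s^2}{S_{s-}^2}\,S_{s-}^2\,\rho_{ds}=\int_0^t\lambda_s^2\,d\rho_s=\int_0^t\Big(\frac{d\kappa_s(1)}{d\rho_s}\Big)^2 d\rho_s,
\end{equation*}
the cancellation of $S_{s-}^2$ being the whole point.

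The computation is therefore entirely routine once the two preliminary propositions are invoked; the only genuine content is external, namely the absolute continuity $d\kappa_u(1)\ll d\rho_u$ of Proposition~\ref{remarque314Bis} (which itself rests on Lemma~\ref{propoRusso} and Lemma~\ref{L2}), together with the strict positivity of $S_{u-}$. I expect no real obstacle beyond bookkeeping; the only points worth stating carefully are the existence and predictability of the density $\lambda$ and the legitimacy of dividing by $S_{u-}$.
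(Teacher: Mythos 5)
Your proof is correct and follows the same route as the paper: the paper also reads the formulae off Proposition~\ref{corro36} with $y=z=1$ (giving $dA_u=S_{u-}\kappa_{du}(1)$ and $d\langle M\rangle_u=S_{u-}^2\rho_{du}$) and invokes Proposition~\ref{remarque314Bis} for the absolute continuity $d\kappa_u(1)\ll d\rho_u$, after which the substitution and the cancellation of $S_{u-}^2$ in $K_t$ are immediate. Your additional remarks on the predictability of $\lambda$ and the strict positivity of $S_{u-}$ are correct and merely make explicit what the paper leaves implicit.
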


\begin{corro}  \label{C320}
Under Assumption~\ref{HypD},
the structure condition (SC) is verified if and only if 
\begin{eqnarray*}K_T=\int_0^T \left(\frac{d(\kappa_{u}(1))}{d\rho_u}\right)^2d\rho_u<\infty\ .\end{eqnarray*}
In particular, $(K_t)$ is deterministic therefore bounded.
\end{corro}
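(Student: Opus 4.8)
The plan is to combine Proposition~\ref{propoabsconti} with Corollary~\ref{C320}'s own hypothesis, treating the two directions of the equivalence separately. From Proposition~\ref{propoabsconti} we already know, under Assumption~\ref{HypD}, that $A_t=\int_0^t\alpha_s\,d\langle M\rangle_s$ with $\alpha_u=\lambda_u/S_{u-}$ and $\lambda_u=d\kappa_u(1)/d\rho_u$, and that the candidate MVT process is $K_t=\int_0^t\lambda_u^2\,d\rho_u$. So the \textbf{absolute continuity} requirement in item 1.\ of Definition~\ref{defSC} is automatic here: the Radon--Nikodym derivative $\lambda_u$ exists precisely because $d\kappa_t(1)\ll d\rho_t$ by Proposition~\ref{remarque314Bis}. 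Hence the only remaining content of (SC) is its item 2., namely $\int_0^T\alpha_s^2\,d\langle M\rangle_s<\infty$ $P$-a.s.

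First I would rewrite item 2.\ of (SC) in terms of $\rho$. Using $\langle M\rangle_t=\int_0^t S_{u-}^2\,d\rho_u$ from Section~\ref{OSSDAC} and $\alpha_u=\lambda_u/S_{u-}$, the integrand telescopes:
\begin{equation*}
\int_0^T\alpha_s^2\,d\langle M\rangle_s=\int_0^T\frac{\lambda_s^2}{S_{s-}^2}\,S_{s-}^2\,d\rho_s=\int_0^T\lambda_s^2\,d\rho_s=K_T,
\end{equation*}
since $S_{s-}>0$ for all $s$, the factors $S_{s-}^2$ cancel exactly and no integrability issue is introduced by the cancellation. This shows that (SC) item 2.\ holds if and only if $K_T=\int_0^T\lambda_s^2\,d\rho_s<\infty$, which is exactly the stated criterion. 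Thus the equivalence reduces to this bookkeeping identity together with the already-established absolute continuity, and no direction of the ``if and only if'' requires anything beyond Proposition~\ref{propoabsconti}.

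For the final sentence I would argue \textbf{determinism}. The function $t\mapsto\kappa_t(1)$ is deterministic (it is the cumulant generating function of the law of $X_t$, via Remark~\ref{R3}), and likewise $t\mapsto\rho_t(1)=\kappa_t(2)-2\kappa_t(1)$ is deterministic by its very definition~\eqref{eq:rho}. Hence the Radon--Nikodym density $\lambda_u=d\kappa_u(1)/d\rho_u$ is a deterministic function of $u$, and so is $K_t=\int_0^t\lambda_u^2\,d\rho_u$. A deterministic process that is finite at the fixed terminal time $T$ (which is what the (SC) criterion asserts) is in particular bounded on $[0,T]$, because $t\mapsto K_t$ is nondecreasing and $K_T<\infty$. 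I do not expect any genuine obstacle here; the only point demanding a little care is the exact cancellation of $S_{s-}^2$ in the integral, where one should note that $S_{s-}>0$ everywhere so that dividing and multiplying by $S_{s-}^2$ is legitimate pathwise and the resulting integral $\int_0^T\lambda_s^2\,d\rho_s$ is unambiguously defined as a deterministic Lebesgue--Stieltjes integral against the positive measure $d\rho$.
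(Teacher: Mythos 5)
Your argument is correct and matches what the paper intends: Corollary~\ref{C320} is an immediate consequence of Proposition~\ref{propoabsconti}, and your reduction of (SC) item 2.\ to $K_T<\infty$ via the cancellation of $S_{s-}^2$ in $\int_0^T\alpha_s^2\,d\langle M\rangle_s$, together with the determinism and monotonicity of $K$, is exactly the intended (and in the paper unwritten) verification.
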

\begin{remarque} \label{RNDI}
Item 1. of Assumption~\ref{HypD} is natural.
Indeed if it were not
 realized, i.e.
if $X$ admits a deterministic increment on some interval 
$[s,t]$,  then $S$ would not fulfill 
the (SC) condition, unless $u\mapsto \kappa_u(1)$ is
constant on $[s,t]$. In this case, the market model would admit 
arbitrage  opportunities.
\end{remarque}
\begin{prooff} \ (of Proposition \ref{propoabsconti}).
%[\textbf{Proof of Proposition~\ref{propoabsconti}}]
By Proposition~\ref{remarque314Bis},  $d\kappa_{t}(1)$ is absolutely continuous w.r.t.  $d\rho_t$. Setting $\alpha_u$ as in~(\ref{alphau}), relation
~(\ref{KPII}) follows from Proposition~\ref{corro36}, expressing $K_t=\int_0^t\alpha_u^2d\left\langle M\right\rangle_u$. 
\end{prooff}

\begin{lemme}\label{lemmeTheta}
The space $\Theta$, defined in \eqref{Theta}, is constituted by all predictable processes $v$ such that
$\mathbb{E}\left(\int_0^Tv_t^2S_{t-}^{2}d\rho_t
\right) <\infty.$
\end{lemme}
\begin{proof}
According to Proposition~\ref{lemmaTheta}, the fact that $K$ is bounded and $S$ satisfies (SC), then $v \in \Theta$ holds if and only if $v$ is predictable and $\mathbb{E}[\int_0^Tv^2_td\left\langle M,M\right\rangle_t]<\infty$. Since
$\left\langle M,M\right\rangle_t=\int_0^tS_{s-}^{2}d\rho_s, $
the assertion follows.
\end{proof}

\subsection{Explicit F\"ollmer-Schweizer decomposition}
%%%%%%%%%%%%%%%%%%%%%%%%%%%%%%%%%%%%%%%%%%%%%%%%%%%%%%

%
We denote by $\shd$ the set of $z \in D$ such that
\begin{equation} \label{Eqshd}
   \int_0^T\left \vert \frac{d\kappa_u(z)}{d\rho_u}\right \vert^2
d\rho_u<\infty.
\end{equation}
From now on, we formulate another assumption.
% which will be  in
%force for the whole section.
%
\begin{Hyp}\label{Hyp1bis}
$1 \in \shd$.
% ${\displaystyle K_T=\int_0^T\left(\frac{d\kappa_u(1)}
%{d\rho_u}\right)^2d\rho_u<\infty}\ .$
\end{Hyp}

\begin{remarque} \label{R323} 
\begin{enumerate}
\item 
Because of Proposition~\ref{remarque314Bis}, ${\displaystyle 
\frac{d\kappa_t(z)}{d\rho_t}}$ exists for every $z \in D$.
\item Under  Assumption~\ref{HypD}, Corollary \ref{C320} says that
 Assumption~\ref{Hyp1bis} is equivalent to (SC).
% Corollary
%\ref{C320} implies that  $K$ is uniformly bounded.
\end{enumerate}
\end{remarque}
The proposition below will constitute an important step for determining
the FS decomposition of the contingent claim $H=f(S_T)$
for a significant class of functions $f$, see 
Section~\ref{SSDSCC}.
\begin{propo}\label{lemme38} 
Let $z \in \shd \cap \frac{D}{2}$
with $ z+1  \in \shd$, 
%and $2Re(z) \in D$. 
(in particular 
$2 Re(z)    \in D$), then 

\begin{enumerate}
\item 
 $S_T^z \in \mathcal{L}^2(\Omega,\mathcal{F}, \P)$.
\item  Moreover, suppose that Assumptions~\ref{HypD} and~\ref{Hyp1bis} hold and define  
\begin{equation} \label{gammaZT}
 \gamma(z,t) :=  \frac{d(\rho_{t}(z,1))}{d\rho_t}, \ t \in [0,T]. 
\end{equation}
Then 
 $ \int_0^T \vert \gamma(z,t)\vert^2 \rho_{dt} < \infty$ 
and 
\begin{equation} \label{etaZT}
\eta(z,t) := \kappa_{t}(z)-\int_0^t\gamma(z,s)\kappa_{ds}(1) 
= \kappa_{t}(z)-\int_0^t\gamma(z,s) \frac{d\kappa_s(1)}{d\rho_{s}}
 \rho_{ds}   
\end{equation}
is well-defined, besides  
%for $z \in \shd$,
 $\eta(z,\cdot)$ is absolutely
continuous w.r.t.  $\rho_{ds}$ and therefore bounded.
\item Again under  Assumptions~\ref{HypD} and~\ref{Hyp1bis},
 $H(z)=S_T^z$ admits an FS decomposition 
$H(z)=H(z)_0+\int_0^T\xi(z)_tdS_t+L(z)_T$ where
\begin{eqnarray}
%\label{FS1}
H(z)_t &:= &e^{\int_t^T\eta(z,ds)}S_t^z\ , \label{FS1}  \\
%\label{FS1}\end{eqnarray}
\xi(z)_t&:=&\gamma(z,t)e^{\int_t^T\eta(z,ds)}S_{t-}^{z-1}\ , \label{FS2} \\
%\label{FS2}\end{eqnarray}
L(z)_t&:=&H(z)_t-H(z)_0-\int_0^t\xi(z)_udS_u\ . \label{FS3}
%\label{FS3}
\end{eqnarray}

\end{enumerate}
\end{propo}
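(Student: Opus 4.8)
The plan is to prove the three items in order, relying on the semimartingale decomposition formulae from Proposition~\ref{corro36} and the reference-measure estimates from Proposition~\ref{remarque314Bis}. For item 1, I would observe that $S_T^z$ is square integrable precisely when $\mathbb{E}[|S_T^z|^2] = \mathbb{E}[S_T^{2Re(z)}]$ is finite. Since $z \in \frac{D}{2}$ implies $2Re(z) \in D$ by Remark~\ref{Rdivise2}, the expectation $\mathbb{E}[S_T^{2Re(z)}] = s_0^{2Re(z)}\exp(\kappa_T(2Re(z)))$ is finite, which gives $S_T^z \in \mathcal{L}^2$. This is the routine part.

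For item 2, the first task is to verify $\int_0^T |\gamma(z,t)|^2\,\rho_{dt} < \infty$. By Remark~\ref{R314bis} together with Proposition~\ref{remarque314Bis}, the measure $d(\rho_t(z,1))$ is absolutely continuous with respect to $d\rho_t$, so $\gamma(z,\cdot) = \frac{d(\rho_t(z,1))}{d\rho_t}$ is well-defined; I would then bound $|\gamma(z,t)|^2$ using a Cauchy--Schwarz type argument against the densities $\frac{d\kappa_t(z)}{d\rho_t}$ and $\frac{d\kappa_t(z+1)}{d\rho_t}$, whose square integrability is guaranteed because $z, z+1 \in \shd$. Once $\gamma(z,\cdot)$ is square integrable, the integral $\int_0^t \gamma(z,s)\,\kappa_{ds}(1)$ is finite since $\kappa_{ds}(1) = \frac{d\kappa_s(1)}{d\rho_s}\rho_{ds}$ and $1 \in \shd$ by Assumption~\ref{Hyp1bis}; this makes $\eta(z,\cdot)$ well-defined, absolutely continuous with respect to $\rho_{ds}$, and hence bounded on $[0,T]$.

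Item 3 is the heart of the statement and the main obstacle. Here I would verify directly that the triple $(H(z)_0, \xi(z), L(z))$ defined by \eqref{FS1}--\eqref{FS3} constitutes a genuine FS decomposition, which amounts to checking that $L(z)$ is a square integrable martingale strongly orthogonal to $M$ and that $\xi(z) \in \Theta$. The strategy is to compute $dH(z)_t$ via integration by parts applied to $H(z)_t = e^{\int_t^T \eta(z,ds)} S_t^z$, using the canonical decomposition $S_t^z = M(z)_t + A(z)_t$ from Proposition~\ref{corro36} with $A(z)_t = \int_0^t S_{u-}^z \kappa_{du}(z)$. The choice of $\gamma(z,t)$ and $\eta(z,t)$ should be exactly what is needed so that, after substituting $\xi(z)_t\,dS_t$, the finite-variation part of $L(z)$ cancels, leaving a local martingale. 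The term $-\int_0^t \gamma(z,s)\kappa_{ds}(1)$ inside $\eta$ is designed to absorb the drift contribution $\xi(z)_t\,dA_t$, while $\gamma(z,t)$ is calibrated so that $d\langle L(z), M\rangle_t = 0$, the strong orthogonality being checked via the bracket formula $\langle M(z), M\rangle_t = \int_0^t S_{u-}^{z+1}\rho_{du}(z,1)$ from item~\ref{MAexpPII} of Proposition~\ref{corro36}.

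The delicate points are twofold. First, one must confirm square integrability of $L(z)$ and that $\xi(z) \in \Theta$, which by Lemma~\ref{lemmeTheta} reduces to showing $\mathbb{E}\left(\int_0^T \xi(z)_t^2 S_{t-}^2\,d\rho_t\right) < \infty$; this follows from the boundedness of $\eta(z,\cdot)$ established in item~2, the square integrability of $\gamma(z,\cdot)$, and Lemma~\ref{lemme38a} controlling $\sup_{t}\mathbb{E}[S_t^x]$ on compact real intervals, after reducing the complex exponents to real ones via $|S_{t-}^{z-1}|^2 = S_{t-}^{2Re(z)-2}$. Second, the orthogonality $\langle L(z), M\rangle = 0$ must be verified carefully: I would compute $\langle H(z), M\rangle$ and $\langle \int_0^\cdot \xi(z)_u\,dS_u, M\rangle$ separately and check that the defining relation $\gamma(z,t) = \frac{d(\rho_t(z,1))}{d\rho_t}$ makes their difference vanish identically in the Radon--Nikodym sense with respect to $d\rho_t$. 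I expect this orthogonality computation, together with the bookkeeping of the $e^{\int_t^T \eta(z,ds)}$ factors through the integration by parts, to be where the real work lies.
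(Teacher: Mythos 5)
Your proposal follows essentially the same route as the paper: item 1 via $|S_T^z|^2 = S_T^{2Re(z)}$ and Lemma~\ref{lemme38a}, item 2 via the decomposition of $\gamma$ into densities $\frac{d\kappa_t(\cdot)}{d\rho_t}$ for $z$, $z+1$, $1$ all in $\shd$ plus Cauchy--Schwarz for $\eta$, and item 3 by integration by parts on $H(z)_t=e^{\int_t^T\eta(z,ds)}S_t^z$ using Proposition~\ref{corro36}, with $\eta$ cancelling the drift, $\gamma$ forcing $\langle L(z),M\rangle=0$ through the bracket formula $\langle M(z),M\rangle_t=\int_0^t S_{u-}^{z+1}\rho_{du}(z,1)$, and the integrability of $\xi(z)$ and $L(z)$ checked via Lemma~\ref{lemmeTheta}, the boundedness of $\eta$, and Lemma~\ref{lemme38a}. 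This matches the paper's proof (itself adapted from Lemma~3.3 of~\cite{Ka06}) in both structure and the key computations.
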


\begin{proof}
%[\textbf{Proof of Proposition~\ref{lemme38}}]
\begin{enumerate}
\item is a consequence of Lemma~\ref{lemme38a}.
\item $\gamma (z, \cdot)$ is square integrable because Assumption
\ref{Hyp1bis} and $z, z+1 \in \shd$. 
Moreover $\eta$ is well-defined  since
\begin{equation} \label{FS10}
 \left(\int_0^T \vert \gamma(z,s) \vert
\left \vert \frac{d\kappa_s(1)}{d\rho_{s}} \right \vert \rho_{ds} \right)^2 \le 
 \int_0^T \vert \gamma(z,s)\vert^2 \rho_{ds}
\int_0^T \left \vert \frac{d\kappa_s(1)}{d\rho_{s}} \right \vert^2 \rho_{ds} .
\end{equation}
\item
In order to prove that~(\ref{FS1}),(\ref{FS2}) and~(\ref{FS3})
is
%constitute
 the FS decomposition of $H(z)$,
% taking into account Remark \ref{R126}
 we need to show that
\begin{enumerate}
\item $H(z)_0$ is $\mathcal{F}_0$-measurable,
\item $\left\langle L(z),M\right\rangle=0,$
\item $\xi(z)\in\Theta,$ where $\Theta$ was defined in \eqref{Theta}.
\item $L(z)$ is a square integrable martingale.
%\item $\left\langle L(z),M\right\rangle=0\,.$
\end{enumerate}
We proceed similarly to the proof of Lemma 3.3 of \cite{Ka06}.
Point (a) is obvious. 
Partial integration and point~\ref{MAexpPII} of Proposition~\ref{corro36}
%~(\ref{MAexpPII}) with $y=1$ 
yield
\begin{equation} \label{FHZ}
H(z)_t
%&=&H(z)_0+\int_0^te^{\int_u^T\eta(z,ds)}dS_u^z+\int_0^tS_u^z
%d(e^{\int_u^T\eta(z,ds)}) \\ 
%&=&H(z)_0+\int_0^te^{\int_u^T\eta(z,ds)}dM(z)_{u}+\int_0^t
%e^{\int_u^T\eta(z,ds)}dA(z)_{u}+\int_0^tS_u^zd(e^{\int_u^T\eta(z,ds)})\\ 
%&=&H(z)_0+\int_0^te^{\int_u^T\eta(z,ds)}dM(z)_{u}+\int_0^te^{\int_u^T\eta(z,ds)}dA(z)_{u}-\int_0^t e^{\int_u^T\eta(z,ds)}S_u^z \eta(z,du)  \\ 
= H(z)_0+\int_0^te^{\int_u^T\eta(z,ds)}dM(z)_{u}-\int_0^t 
e^{\int_u^T\eta(z,ds)}S_u^z \eta(z,du)
+\int_0^te^{\int_u^T\eta(z,ds)}S_{u-}^z\kappa_{du}(z)  \ .
\end{equation}
On the other hand
\begin{equation} \label{FAXI}
\int_0^t\xi(z)_udS_u 
%\int_0^t\xi(z)_udM_u+\int_0^t\xi(z)_udA_u\ ,\\ \\
%&=&\int_0^t\xi(z)_udM_u+\int_0^t\xi(z)_uS_{u-}\kappa_{du}(1)\ ,\\ \\
= \int_0^t\xi(z)_udM_u+\int_0^t\gamma(z,u)
e^{\int_u^T\eta(z,ds)}S_{u-}^{z}\kappa_{du}(1)\ .
\end{equation}
Hence, using expressions \eqref{FHZ} and \eqref{FAXI}, 
 by definition of $\eta$ in (\ref{etaZT}), which says
$
\eta(z,du)=\kappa_{du}(z)-\gamma(z,u)\kappa_{du}(1),
$ we obtain
\begin{equation} \label{43}
L(z)_t = H(z)_t-H(z)_0-\int_0^t\xi(z)_udS_u =
\int_0^te^{\int_u^T\eta(z,ds)}dM(z)_{u}-\int_0^t\xi(z)_udM_u,
\end{equation}
%&=&\int_0^te^{\int_u^T\eta(z,ds)}dM(z)_{u}-\int_0^t
%  e^{\int_u^T\eta(z,ds)}S_u^z \eta(z,du)
%+\int_0^te^{\int_u^T\eta(z,ds)}S_{u-}^z\kappa_{du}(z)\\ \\
%&-&\int_0^t\xi(z)_udM_u-\int_0^t\gamma(z,u)
%e^{\int_u^T\eta(z,ds)}S_{u-}^{z}\kappa_{du}(1)\ ,\\ \\
%&=&\int_0^te^{\int_u^T\eta(z,ds)}dM(z)_{u}-\int_0^t\xi(z)_udM_u\\ \\
%&+& \int_0^t
%[\kappa_{du}(z)-\eta(z,du)-\gamma(z,u)\kappa_{du}(1)]
%e^{\int_u^T\eta(z,ds)}S_{u-}^z  
% [\kappa_{du}(z)-\eta(z,du)-\gamma(z,u)\kappa_{du}(1)] .
%\end{eqnarray}
%Then, by definition of $\eta$ in (\ref{etaZT}), 
%$
%\eta(z,du)=\kappa_{du}(z)-\gamma(z,u)\kappa_{du}(1)\ ,
%$ 
%hence, 
%\begin{eqnarray}
%L(z)_t=\int_0^te^{\int_u^T\eta(z,ds)}dM(z)_{u}-\int_0^t\xi(z)_udM_u\ ,
%\label{43}\end{eqnarray}
which implies that $L(z)$ is a local martingale.\\
From point~\ref{MAexpPII}.  of Proposition~\ref{corro36},
using \eqref{FS2},  
it follows that
%
%\begin{eqnarray*}
%\left\langle L(z),M\right\rangle_t
%&=&\int_0^te^{\int_u^T\eta(z,ds)}d\left\langle M(z),M\right\rangle_u-\int_0^t\xi(z)_ud\left\langle M,M\right\rangle_u\ ,\\ \\
%&=&\int_0^te^{\int_u^T\eta(z,ds)}S_{u-}^{z+1} \rho_{du}(z,1)-\int_0^t\xi(z)_uS_{u-}^{2}\rho_{du}\ , \\ \\
%&=&\int_0^te^{\int_u^T\eta(z,ds)}S_{u-}^{z+1}\rho_{du}(z,1)-\int_0^t\gamma(z,u)e^{\int_u^T\eta(z,ds)}S_{u-}^{z+1}\rho_{du}\ .
%&=&\int_0^te^{\int_u^T\eta(z,ds)}S_{u-}^{z+1}(\kappa_{du}(z+1)-\kappa_{du}(z)-\kappa_{du}(1))-\int_0^t\xi(z)_uS_{u-}^{2}d\rho_t\ , \\ \\
%&=&\int_0^te^{\int_u^T\eta(z,ds)}S_{u-}^{z+1}(\kappa_{du}(z+1)-\kappa_{du}(z)-\kappa_{du}(1))-\int_0^t\gamma(z,u)e^{\int_u^T\eta(z,ds)}S_{u-}^{z+1}d\rho_t\ .
%\end{eqnarray*}
%
%Consequently,
%\begin{equation} 
$$ \left\langle
  L(z),M\right\rangle_t=\int_0^te^{\int_u^T\eta(z,ds)}S_{u-}^{z+1}
[\rho_{du}(z,1)-\gamma(z,u)\rho_{du}]. $$
Then by definition of $\gamma$ in (\ref{gammaZT}), 
%$$\kappa_{dt}(z+1)-\kappa_{dt}(z)-\kappa_{dt}(1)=\gamma(z,t)d\rho_t\ ,$$ 
$\rho_{dt}(z,1)=\gamma(z,t)\rho_{dt}\ ,$
 yields $
\left\langle L(z),M\right\rangle_t=0.$
Consequently, point (b) follows.

It remains to prove point (d) i.e. that $L(z)$ is a square-integrable martingale
 for all $z \in D$ and that $Re(\xi(z))$ and $Im(\xi(z))$ are in $\Theta$. 
(\ref{43}) says that
\begin{eqnarray*}
L(z)_t=\int_0^te^{\int_s^T\eta(z,du)}dM_s(z)-\int_0^t\xi(z)_sdM_s\ .
\end{eqnarray*}
By Remark~\ref{remarkR2} we observe first that 
 $\bar z, \bar z + 1 \in \shd$. Moreover by
 definition of $\gamma$
and $\eta$, it follows  
\begin{equation} \label{lemmepassagecomplex}
 \overline{\gamma(z,t)}=\gamma(\bar{z},t) \quad {\rm and} 
\quad \overline{\eta(z,t)}=\eta(\bar{z},t).
\end{equation}
By Proposition~\ref{corro36}, \ref{lemmepassagecomplex} and~\eqref{43},  it follows
\begin{eqnarray} \label{44ter}
%\overline{L(z)_t}=L(\bar{z})_t\label{44bis}\ ,
%\end{eqnarray}
%hence, 
%\begin{eqnarray*} \label{44ter}
\left\langle L(z),\overline{L(z)}\right\rangle_t
&=&\left\langle L(z),L(\bar{z})\right\rangle_t 
= \left\langle
L(z),\int_0^{.}e^{\int_s^T\eta(\bar{z},du)}dM_s(\bar{z})
\right\rangle_t \nonumber \\
%&& \\
%\nonumber \\ 
%&& \\
%&=&\int_0^te^{\int_s^T\eta(z,du)}e^{\int_s^T\eta(\bar{z},du)}
%d\left\langle M(z),M(\bar{z})\right\rangle_s\nonumber \\ 
%&&-\int_0^t\xi(z)_se^{\int_s^T\eta(\bar{z},du)}d\left\langle 
%M,M(\bar{z})\right\rangle_s\ . \nonumber
%\end{equation}
%
%By Proposition~\ref{corro36} we have 
%\begin{eqnarray*}
%\left\langle L(z),\overline{L(z)}\right\rangle_t
&=&\int_0^te^{\int_s^T\eta(z,du)}e^{\int_s^T\eta(\bar{z},du)}S_{s-}^{2Re(z)}
\rho_{ds}(z)
-\int_0^t\xi(z)_se^{\int_s^T\eta(\bar{z},du)}S_{s-}^{1+\bar{z}}\rho_{ds}(\bar{z},1)
 %\nonumber 
 \ .
\end{eqnarray}
Consequently
\begin{equation} \label{Eellz}
\left\langle L(z),\overline{L(z)}\right\rangle_t=
\int_0^te^{\int_s^T 2Re(\eta(z,du))}S_{s-}^{2Re(z)}
[\rho_{ds}(z) -|\gamma(z,s)|^2\rho_{ds}]\ .
\end{equation}
Taking the expectation in~(\ref{Eellz}), using  point 2.,
 \eqref{gammaZT}, \eqref{etaZT} and
Lemma~\ref{lemme38a}, we obtain
\begin{eqnarray}\mathbb{E}\left[\left\langle L(z),\overline{L(z)}\right\rangle_T\right]<\infty\label{LsquareMart}\ .\end{eqnarray}
Therefore, $L$ is a square-integrable martingale. \\
%Then, point 2. implies 
It remains to prove point (c) i.e. that  $\xi(z) \in \Theta$.
In view of applying Lemma~\ref{lemmeTheta}, we evaluate
\begin{equation} \label{XiXiz}
\int_0^T|\xi(z)_s|^2S_{s-}^2 \rho_{ds}
%&=&\int_0^T\xi(z)_s\xi(\bar{z})_sS_{s-}^2 \rho_{ds} \nonumber \\
%&=&\int_0^T\gamma(z,s)e^{\int_s^T\eta(z,du)}S_{s-}^{z-1}\gamma(\bar{z},s)
%e^{\int_s^T\eta(\bar{z},du)}S_{s-}^{\bar{z}-1}S_{s-}^2\rho_{ds}
% \ , \\
= \int_0^T|\gamma(z,s)|^2e^{\int_t^T2Re(\eta(z,du))}S_{s-}^{2Re(z)}\rho_{ds}. 
%\ . \nonumber
\end{equation}
 Similarly as for (\ref{Eellz}), we can show that  the expectation 
of the right-hand side of (\ref{XiXiz}) is finite. 
This concludes the proof of Proposition~\ref{lemme38}.
\end{enumerate}
\end{proof}

%\begin{remarque}
%Since $\kappa_{dt}(1)\sim\kappa_{dt}(2)-2\kappa_{dt}(1)$ the definition of $\gamma$ makes sense.
%\end{remarque}

%LEMME DE NADIA QUI EST IMPLIQUE PAR LE LEMME \ref{lemmadensitepositive}
%///////////////////////////////////////////////////////////////////////
%\begin{lemme}\label{lemmeNadia}[Lemme de Nadia]\\
%We have that $\kappa_{dt}(2)-2\kappa_{dt}(1)$ is positive.
%\end{lemme}
%\begin{proof}
%We have for all interval $[a,b]$
%\begin{eqnarray*}\int_s^b\kappa_{ds}(2)-2\kappa_{ds}(1)=log\left(\frac{Var\left[e^{\Delta_s^bX_t}\right]}{\left(\mathbb{E}\left[e^{\Delta_s^bX_t}\right]\right)^2}+1\right)>0\end{eqnarray*}
%with $\Delta_s^bX_t=X_t-X_s$. So this measure is positive on all interval. That's why $\kappa_{ds}(2)-2\kappa_{ds}(1)$ is increasing.
%\end{proof}
%///////////////////////////////////////////////////////////////////////

\subsection{FS decomposition of special contingent claims} 
\label{SSDSCC}

%%%%%%%%%%%%%%%%%%%%%%%%%%%%%%%%%%%%%%%%%%%%%%%%%%%%%%%%%
%Now, we will proceed to the FS decomposition of more general contingent claims.
 We consider now payoff functions of the type 
\begin{eqnarray}
H=f(S_T) \quad\textrm{with}\quad f(s)=\int_{\mathbb{C}} s^z\Pi(dz)\ ,
\label{FORM}\end{eqnarray}
where $\Pi$ is a (finite) complex measure in the sense of Rudin~\cite{RU87},  Section~6.1. 
An integral representation of some basic European calls is provided in the sequel.
We need now the new following assumption. 
\begin{Hyp}\label{Hyp1}
Let $I_0= {\rm supp} \Pi\cap \mathbb{R}$. 
We denote $I = 2I_0 \cup \{1 \}.$
%We denote
% $I:=\left[\inf I_0\wedge 2\inf I_0,2\sup I_0 \vee \sup I_0+1\right].$ 
\begin{enumerate}
\item $I_0$ is compact.
\item $\forall z \in {\rm supp} \Pi, \quad z, z+1 \in \shd.$
\item $I_0 \subset \frac{D}{2}$.
%\item $I \subset D$ and
\item  $\sup_{x \in I  }\left\|\frac{d(\kappa_t(x))}{d\rho_t}
\right\|_{\infty}<\infty$.
\end{enumerate}
\end{Hyp}
\begin{remarque}\label{R325}
\begin{enumerate}
\item Two kinds of assumptions appear.  Assumptions \ref{HypD}
and \ref{Hyp1bis} only concern the process and Assumption
\ref{Hyp1} involves both the process and the payoff.
\item
Assumption \ref{Hyp1} looks obscure. Examples for its validity will
be provided in Section~\ref{SecExamples}. For instance
consider  the specific case where $X$ is a Wiener integral driven by a
L\'evy process $\Lambda$, i.e. 
$X_t = \int_0^t l(s) d \Lambda_s, t \in [0,T]$
and the payoffs are either a call or a put.
We observe in Example \ref{E517} below that 
Assumptions~\ref{HypD},~\ref{Hyp1bis} and~\ref{Hyp1} are a consequence
of the simple Assumption~\ref{HypWiener}.
% on  the process $X$ and relation~\eqref{E4000} involving the payoff function.
%That latter equality is always satisfied for payoffs such as call and put options
%whose Fourier-Laplace transform is introduced in
%Section \ref{sec:payoff:laplace}. This is illustrated in
% Example \ref{E517}.
\end{enumerate}
\end{remarque}

\begin{remarque}\label{remarque320}
\begin{enumerate}
\item Point 3. of Assumption~\ref{Hyp1} implies
 $\sup_{z \in I + i \R}
\left\|\kappa_{dt}(Re(z))\right\|_{T}<\infty\,.$
\item Under Assumption~\ref{Hyp1}, $H = f(S_T)$ is square integrable.
In particular it admits an FS decomposition.
\item Because of~(\ref{AncHyp1}) in Proposition~\ref{remarque314Bis}, 
the Radon-Nykodim derivative at Assumption~\ref{Hyp1}.4, always exists.
%\item POUR NADIA ET STEPHANE. PROVISOIRE \\
%$\left[\inf I_0\wedge 2\inf I_0,2\sup I_0 \vee \sup I_0+1\right]. \subset D$
%AUTOMATIQUEMENT.
\end{enumerate}
\end{remarque}
We need now to obtain upper bounds on $z$ for the
quantity~(\ref{LsquareMart}). We will first need the following lemma 
which constitutes a (not straightforward) generalization 
of Lemma 3.4 of \cite{Ka06} which was stated when $X$  is a L\'evy process.
The fact that $X$ does not have stationary increments, constitutes 
 a significant obstacle.

%\begin{lemme}\label{lemmeconstante}
%There are positive constants $c_1, c_2, c_3$ such that $d\rho_s$ a.e.
% \begin{enumerate}
%\item
% $$\sup_{z\in {I_0 + i\R}}
% \frac{dRe(\eta(z,s))}{d\rho_s} \le c_1. $$
%% Re\left(\eta(z,T)-\eta(z,t)\right)< \infty$.
%\item For any $z \in I_0 + i \R $
%$$ \vert \gamma(z,s) \vert^2 \le    \frac{d\rho_s(z)}{d\rho_s} \le
%c_2 - c_3
%\frac{dRe(\eta(z,s))}{d\rho_s} $$ 
%% Re(\eta(z,T) - \eta(z,t))).$$
% %$\sup_{z\in I + i \R, t\in [0,T]}|\gamma(z,t)|<\infty$.
%%$\Gamma$ is locally bounded.
%\item $$- \sup_{z \in {I_0 + i \R}} \int_0^T 2 Re (\eta(z,dt)) \exp \left(
% \int_t^T Re (\eta(z,ds)) \right) < \infty.$$
%\end{enumerate}
%\end{lemme}
%
\begin{lemme}\label{lemmeconstante}
Under  Assumptions~\ref{HypD},
  \ref{Hyp1bis},~\ref{Hyp1},
there are positive constants $c_1, c_2, c_3$ such that $d\rho_s$ a.e.
 \begin{enumerate}
\item
 $
{\displaystyle \sup_{z\in {I_0 + i\R}}
 \frac{dRe(\eta(z,s))}{d\rho_s} \le c_1\,. 
 }$
% Re\left(\eta(z,T)-\eta(z,t)\right)< \infty$.
\item For any $z \in I_0 + i \R \ ,$
$\quad 
{\displaystyle
 \vert \gamma(z,s) \vert^2 \le    \frac{d\rho_s(z)}{d\rho_s} \le
c_2 - c_3
\frac{dRe(\eta(z,s))}{d\rho_s} 
}\,.$ 
% Re(\eta(z,T) - \eta(z,t))).$$
 %$\sup_{z\in I + i \R, t\in [0,T]}|\gamma(z,t)|<\infty$.
%$\Gamma$ is locally bounded.
\item $\ 
{\displaystyle - \sup_{z \in {I_0 + i \R}} \int_0^T 2 Re (\eta(z,dt)) \exp \left(
 \int_t^T 2 Re (\eta(z,ds)) \right) < \infty\,.
 }$
\end{enumerate}
\end{lemme}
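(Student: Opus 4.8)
The strategy is to control the three quantities uniformly over the vertical strip $z \in I_0 + i\R$ by exploiting that $Re(z)$ ranges over the \emph{compact} set $I_0$, while the imaginary part is harmless thanks to the structure of $\rho$ and $\kappa$. The central object is the Radon--Nikodym derivative $\frac{d\rho_s(z)}{d\rho_s}$, and the whole lemma reduces to bounding it above and below (away from and near zero, respectively) uniformly in $Im(z)$.

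First I would establish item 1. By definition \eqref{etaZT}, $\eta(z,ds) = \kappa_{ds}(z) - \gamma(z,s)\kappa_{ds}(1)$, so taking real parts,
\begin{equation*}
\frac{dRe(\eta(z,s))}{d\rho_s} = Re\!\left(\frac{d\kappa_s(z)}{d\rho_s}\right) - Re\!\left(\gamma(z,s)\frac{d\kappa_s(1)}{d\rho_s}\right).
\end{equation*}
The first term is bounded uniformly over $z \in I_0 + i\R$ because, by the L\'evy--Khinchine formula \eqref{E3.5bis}, $\frac{d\kappa_s(z)}{d\rho_s}$ depends on $Im(z)$ only through bounded oscillatory factors $e^{iIm(z)x}$, while $Re(z)$ stays in the compact $I_0$; I would combine this with point~4 of Assumption~\ref{Hyp1} and the estimates on $F_s$ from \eqref{E1000}. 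For the second term one uses Cauchy--Schwarz together with the bound on $\gamma$ from item 2. The cleanest route is to prove the chain of inequalities in item 2 first, since item 1 then follows by absorbing the $\gamma$-contribution into the same $L$\'evy--Khinchine bound.

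For item 2, the first inequality $\vert\gamma(z,s)\vert^2 \le \frac{d\rho_s(z)}{d\rho_s}$ is the key analytic input: from \eqref{E33} one sees that $\rho_{ds}(z)$ measures a conditional variance, and the Cauchy--Schwarz-type inequality $\vert\rho_{ds}(z,1)\vert^2 \le \rho_{ds}(z)\,\rho_{ds}(1)$ (a covariance-variance bound applied to the random variables $\Gamma^t_s X(z)$ and $\Gamma^t_s X(1)$) gives exactly $\vert\gamma(z,s)\vert^2 = \bigl\vert\frac{d\rho_s(z,1)}{d\rho_s}\bigr\vert^2 \le \frac{d\rho_s(z)}{d\rho_s}$, since the denominator normalizing $\gamma$ is $\rho_{ds} = \rho_{ds}(1)$. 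The second inequality, $\frac{d\rho_s(z)}{d\rho_s} \le c_2 - c_3\frac{dRe(\eta(z,s))}{d\rho_s}$, I would obtain by expressing both $\rho_{ds}(z)$ and $Re(\eta(z,ds))$ through \eqref{E3.5bis}, writing out the densities against $d\rho_s$, and checking that the combination $\frac{d\rho_s(z)}{d\rho_s} + c_3\frac{dRe(\eta(z,s))}{d\rho_s}$ is a uniformly bounded integrand over $I_0 + i\R$; here the compactness of $I_0$ and continuity of $(t,z)\mapsto\kappa_t(z)$ (Proposition~\ref{proposition38}) together with Assumption~\ref{Hyp1}.4 supply the constants $c_2,c_3$ independent of $Im(z)$.

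For item 3, I would use item 1 to write $\phi(t):= \exp\bigl(\int_t^T 2Re(\eta(z,ds))\bigr)$ and observe that $2Re(\eta(z,dt))\,\phi(t) = -d\phi(t)$ up to sign, so the integral telescopes to $\phi(0) - \phi(T) \le \phi(0) \le \exp(2c_1\,\rho_T)$, a bound uniform in $z$ by item 1. \textbf{The main obstacle} I anticipate is the uniformity in $Im(z)$ throughout: unlike the L\'evy case of \cite{Ka06}, the non-stationarity forces all estimates to be phrased $d\rho_s$-almost everywhere with constants independent of the imaginary direction, so the delicate point is verifying that the L\'evy--Khinchine densities $\frac{d\kappa_s(z)}{d\rho_s}$ and $\frac{d\rho_s(z)}{d\rho_s}$ admit bounds depending only on $Re(z)\in I_0$; this is where Assumption~\ref{Hyp1}.4 and the normalization $\int_\R(\vert x\vert^2\wedge 1)F_s(dx)\le 1$ do the essential work.
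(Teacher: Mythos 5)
Your architecture is sound and in two places genuinely different from the paper's, in a way that works: the Cauchy--Schwarz inequality $\vert\rho_{ds}(z,1)\vert^2\le\rho_{ds}(z)\,\rho_{ds}(1)$, applied to the L\'evy--Khinchine densities $\rho_{dt}(z,y)=zy\,dc_t+\int(e^{zx}-1)(e^{yx}-1)\nu(dt,dx)$, is a legitimate and more direct route to $\vert\gamma(z,s)\vert^2\le\frac{d\rho_s(z)}{d\rho_s}$ than the paper's, which reads this off from the increasingness of $\langle L(z),\overline{L(z)}\rangle$ in \eqref{Eellz}; and for item 3 the exact telescoping $-\int_0^T 2Re(\eta(z,dt))e^{\int_t^T 2Re(\eta(z,ds))}=1-e^{\int_0^T 2Re(\eta(z,ds))}\le 1$ (valid because $t\mapsto\rho_t$ is continuous by Proposition~\ref{proposition38}, so $Re(\eta(z,\cdot))$ has no atoms) replaces the paper's decomposition of $Re(\eta(z,\cdot))$ into positive and negative variations $A^{\pm}$ and even improves the constant.

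There is, however, a genuine gap exactly at the point you flag as the main obstacle: the uniformity in $Im(z)$. Your justification for controlling $Re\bigl(\frac{d\kappa_s(z)}{d\rho_s}\bigr)$ over the strip $I_0+i\R$ --- that the dependence on $Im(z)$ enters ``only through bounded oscillatory factors $e^{iIm(z)x}$'', reinforced by compactness of $I_0$ and continuity of $(t,z)\mapsto\kappa_t(z)$ --- is false. The Gaussian part of \eqref{E3.5bis} contributes $\frac{z^2}{2}c_t$, whose real part contains $-\frac{Im(z)^2}{2}c_t$ and is unbounded below on the strip; correspondingly $\frac{d\rho_s(z)}{d\rho_s}$ grows like $Im(z)^2$ (in the log-Gaussian model of Section 5.3 one has $\frac{d\kappa_t(z)}{d\rho_t}=z^2/2$ and $\frac{d\rho_t(z)}{d\rho_t}=\vert z\vert^2$), so neither $\frac{dRe(\kappa_s(z))}{d\rho_s}$ nor the combination $\frac{d\rho_s(z)}{d\rho_s}+c_3\frac{dRe(\eta(z,s))}{d\rho_s}$ is ``uniformly bounded''; compactness and continuity give nothing on a non-compact strip. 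What is true, and what the paper's proof hinges on, is a \emph{one-sided} bound obtained from the positivity you already have: since $\frac{d\rho_s(z)}{d\rho_s}\ge\vert\gamma(z,s)\vert^2\ge 0$ and $\rho_s(z)=\kappa_s(2Re(z))-2Re(\kappa_s(z))$, one gets $2\frac{dRe(\kappa_s(z))}{d\rho_s}\le\frac{d\kappa_s(2Re(z))}{d\rho_s}\le c_{11}:=\sup_{x\in I}\bigl\Vert\frac{d\kappa_t(x)}{d\rho_t}\bigr\Vert_\infty$, using Assumption~\ref{Hyp1}.4 at the \emph{real} point $2Re(z)\in I$. One must then absorb the $\gamma$-contribution via $c_{12}\vert\gamma(z,s)\vert\le c_{12}\sqrt{c_{11}-2\frac{dRe(\kappa_s(z))}{d\rho_s}}\le c_{13}+\frac12\bigl\vert\frac{dRe(\kappa_s(z))}{d\rho_s}\bigr\vert$ into the possibly very negative first term (a Young-type step you do not spell out and which is not automatic). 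All your ``uniformly bounded'' claims must be downgraded to ``bounded above'', with this mechanism as the reason; since the required positivity is a byproduct of your own Cauchy--Schwarz step, the gap is repairable, but as written the key estimate rests on an incorrect justification.
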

\begin{remarque}\label{R328}
\begin{enumerate} 
\item According to Proposition~\ref{lemme38},
$t \mapsto Re (\eta(z,t))$ is absolutely continuous 
w.r.t.  $d\rho_t$.
\item We recall that ${\rm supp} \Pi$ is included in $I_0 + i \R$.
\end{enumerate}
\end{remarque} 
%\begin{prooff}
 \begin{prooff} \ (of {\bf  Lemma}~\ref{lemmeconstante}).
%[\textbf{Proof of Lemma \ref{lemmeconstan}]
%The proof is inspired by Lemma 3.9 of \cite{Ka06}.
According to Point 3. of Assumption~\ref{Hyp1} 
we denote 
\begin{equation} \label{E300}
 c_{11} :=  
\sup_{x \in I}\left 
\Vert\frac{d(\kappa_t(x))}{d\rho_t}\right\Vert_{\infty}. 
\end{equation}
%$$ c_{11} := 
%\sup_{x \in I} \left \| \kappa_{dt}(x)\right\|_{var}.$$
For $z \in  I_0 + i \R, t\in [0,T]$, we have
$
\eta(z,t)=\kappa_{t}(z)-\int_0^t\gamma(z,s)d\kappa_{s}(1)
\ \textrm{and}\ \eta(\bar{z},t)=
\kappa_{t}(\bar{z})-\int_0^t\gamma(\bar{z},s)d\kappa_{s}(1)\,.
$
Then, we get
$Re(\eta(z,t))=Re(\kappa_{t}(z))-\int_0^tRe(\gamma(z,s))d\kappa_{s}(1)\ .$
We obtain
\begin{eqnarray}\label{D0}
  \int_t^TRe(\eta(z,ds)) &\leq&  Re\left(\kappa_T(z)-\kappa_t(z)\right)
+ \left \vert \int_t^T \gamma(z,s) d\kappa_s(1) \right \vert  \nonumber\\
%&&\\
&=&  \int_t^T \frac{Re (d\kappa_{s}(z))}{d\rho_s} d\rho_s
+ \left \vert \int_t^T \gamma(z,s) d\kappa_s(1) \right \vert \ .
%\nonumber
%\sup_{s \in [0,T]}\left|Re(\gamma(z,s))\right|\left|\kappa_T(1)-\kappa_t(1)
%\right|\ .
%&\le &  2 Re\left(\kappa_T(z)-\kappa_t(z)\right)
%+2 c_1 \sup_{s \in [0,T]}\left|Re(\gamma(z,s))\right| 
\end{eqnarray}
%We need to control $Re\left(\kappa_T(z)-\kappa_t(z)\right)$.
 Since $\left\langle L(z),\overline{L(z)}\right\rangle_t$ is increasing, 
 taking into account \eqref{Eellz}, 
 the measure  $\left (d\rho_s(z)-|\gamma(z,s)|^2d\rho_s\right)$ is non-negative. It follows that
\begin{eqnarray} \label{D0bis}
\frac{d\rho_s(z)}{d\rho_s}-|\gamma(z,s)|^2\geq 0\ ,\quad d\rho_s \ a.e.
%\frac{d(\kappa_{s}(2Re(z))-2Re(\kappa_{d}(z)))}{d\rho_s}-|\gamma(z,s)|^2\geq 0& ,d\rho_s &a.e.
\label{MajorationGamma}\end{eqnarray}
%
%\begin{remarque}\label{lemmadensitepositive}
By  \eqref{MajorationGamma},
in particular  the density ${\displaystyle \frac{d\rho_s(z)}{d\rho_s}}$
 is non-negative $d\rho_s$ a.e.
 Consequently,  
\begin{eqnarray} 
\label{H02}
 2\frac{dRe(\kappa_s(z))}{d\rho_s}\leq 
\frac{d\kappa_s(2Re(z))}{d\rho_s}\ ,
\quad d\rho_s\ a.e. 
\end{eqnarray}
%\end{remarque}
%
In order to prove 1. it is enough to verify that, for some $c_0 > 0$,
\begin{eqnarray}
\label{EB1}
\frac{dRe(\eta(z,s))}{d\rho_s}\leq c_0+
\frac{1}{2}\frac{dRe(\kappa_s(z))}{d\rho_s} & d\rho_s \ a.e.
 &
\end{eqnarray}
In fact, \eqref{H02},  Assumption \ref{Hyp1} point 3. and \eqref{E300},
 imply that
%\begin{eqnarray}
%\label{EB2}
${\displaystyle 
\frac{dRe(\eta(z,s))}{d\rho_s}
\leq c_0 + \frac{1}{2} c_{11} =: c_1. 
}$
%\end{eqnarray}
%
To prove \eqref{EB1} it is enough to show that 
\begin{equation} 
\label{D0ter}
Re(\eta(z,T)-\eta(z,t))
\leq c_{0}(\rho_T-\rho_t)+\frac{1}{2}Re(\kappa_T(z)-\kappa_t(z)),
 \quad \forall t \in [0,T]. 
 \end{equation}
Again Assumption~\ref{Hyp1} point 3. implies that
\begin{equation} 
\label{EA11}
\left |\int_t^T\gamma(z,s)d\kappa_s(1)\right| \le c_{12} 
\int_t^T \vert \gamma(z,s) \vert d\rho_s\ ,
 \end{equation}
where $c_{12} = \Vert \frac{d\kappa_s(1)}{d\rho_s} \Vert_\infty.$
Using \eqref{MajorationGamma} and  Assumption~\ref{Hyp1} it follows
% there is a constant $c_{11}$ such that
%
\begin{equation} \label{EC1}
\left|\gamma(z,s)\right|^2 \leq \frac{d\rho_s(z)}{d\rho_s}=\frac{d\kappa(2Re(z))}{d\rho_s}-\frac{2dRe(\kappa_s(z))}{d\rho_s}
\leq  c_{11}-\frac{2dRe(\kappa_s(z))}{d\rho_s}.
  \end{equation}
This implies that
$ c_{12}^2 \left|\gamma(z,s)\right|^2 \le
 \left(c_{13}^2+\frac{1}{4}
\left (\frac{dRe(\kappa_s(z))}{d\rho_s}\right)^2 \right),
$
where $c_{13} > 0$ is chosen such that 
 $c_{13}^2 \ge 4 c_{12}^4 +  c_{12}^2 c_{11}$.
%Previous expression and \eqref{EA11} imply
%
%\begin{eqnarray*}
%\left|\int_t^T\gamma(z,s)d\kappa_s(1)\right|^2 &\leq&
% c_{12}\left(\int_t^T\left(c_{11}-\frac{2dRe(\kappa_s(z))}{d\rho_s}
%\right)d\rho_s\right)\ , \\
%&\leq& \int_t^T d\rho_s \left(c_{13}+\frac{1}{4}
%\left (\frac{dRe(\kappa_s(z))}{d\rho_s}\right)^2 \right)
%\end{eqnarray*}
%where $c_{13}>0$ is chosen such that $c_{13}^2>4c_{12}^2+c_{11}c_{12}$.
% Consequently the second term of \eqref{D0} can be bounded as follows
Consequently, 
\begin{eqnarray*}\left|\int_t^T\gamma(z,s)d\kappa_s(1)\right| \le 
\int_t^T d\rho_s \left(c_{13}+\frac{1}{2}
\left|\frac{dRe(\kappa_s(z))}{d\rho_s}\right|\right)\ .
\end{eqnarray*}
Coming back to \eqref{D0}, we obtain
\begin{eqnarray*}
Re(\eta(z,T)-\eta(z,t)) &\le &
\int_t^T \left ( \frac{Re(d\kappa_s(z))}{d\rho_s}  
   + c_{13} + \halb 
\left \vert \frac{Re(d\kappa_s(z))}{d\rho_s} \right \vert \right) d\rho_s  \\
&\le &  
\int_t^T \left ( \halb \frac{Re(d\kappa_s(z))}{d\rho_s}  
 + \left(\frac{Re(d\kappa_s(z))}{d\rho_s} \right)^+ 
 + c_{13} \right)  d\rho_s.
\end{eqnarray*}
\eqref{H02} and Assumption~\ref{Hyp1} allow to establish
%Coming back to \eqref{D0}, using \eqref{D0quater} we obtain
\begin{eqnarray}Re(\eta(z,T)-\eta(z,t))\leq \int_t^T d\rho_s
\left(c_{0}+\frac{1}{2}\frac{dRe(\kappa_s(z))}{d\rho_s}\right)\ ,
\label{EA33}\end{eqnarray}
where $c_0=\frac{c_{11}}{2}+c_{13}$. This concludes the proof of point 1.\\
In order to prove point 2. we first observe that \eqref{EB1} implies
\begin{eqnarray}-\frac{dRe(\kappa_s(z))}{d\rho_s}\leq
 2\left(c_0-\frac{dRe(\eta(z,s))}{d\rho_s}\right)\quad d\rho_s\  a.e. 
 \label{EB3}\end{eqnarray}
\eqref{EC1} implies
%
%\begin{eqnarray}
$
{\displaystyle 
\left|\gamma(z,s)\right|^2\leq c_{21}-4\frac{dRe(\eta(z,s))}{d\rho_s}\ ,
}$
%\label{EC2}\end{eqnarray}
%
where $c_{21}=c_{11}+4c_0$. Point 2. is now established with $c_2=c_{21}$ and $c_3=4$.\\
We continue with the proof of point 3. We decompose
$Re(\eta(z,t))=A^+(z,t)-A^-(z,t),$
where $A^+(z,.)$ and $A^-(z,.)$ are the increasing non negative functions given by  \\
$
A^+(z,t)= \displaystyle \int_0^t\left(\frac{dRe(\eta(z,s))}{d\rho_s}\right)_+d\rho_s
 \ and \ 
A^-(z,t)= \displaystyle \int_0^t\left(\frac{dRe(\eta(z,s))}{d\rho_s}\right)_-d\rho_s.
$
\\
%
%$A^+(z,.)$ and $A^-(z,.)$ are increasing non negative functions. 
Moreover 
point 1. implies
$A^+(z,t)\leq c_1\rho_t.$
At this point, for $z \in  I_0 + i \R$ 
% $z\in I_0+i\mathbb{R}$
\begin{eqnarray*}
-\int_0^TRe(\eta(z,dt))e^{\int_t^T2Re(\eta(z,ds))}&=&
\int_0^T\left(A^-(z,dt)- A^+(z,dt)\right)e^{
2\int_t^TRe(\eta(z,ds))}\,\\
&\leq&\int_0^TA^-(z,dt)e^{2\left(A^+(z,T)-A^+(z,t)\right)}e^{-2\left(A^-(z,T)-A^-(z,t)\right)}\,\\
&\leq&e^{2c_1\rho_T}\int_0^Te^{-2\left(A^-(z,T)-A^-(z,t)\right)}A^-(z,dt)\,\\
&=&\frac{e^{2c_1\rho_T}}{2}\left\{1-e^{-2A^-(z,T)}\right\} 
\leq\frac{e^{2c_1\rho_T}}{2}\ ,
\end{eqnarray*}
which concludes the proof of point 3 of Lemma~\ref{lemmeconstante}.
%%%%%%%%%%%%%%%%%%%%%%%%%%%%%%%%%%%%%%%%%%%%%%%%%%%%%%%%%%%%%%%%%%%%%%%%%%%%%%%%%%%%%%%%%%%%%%%%%%%
%Since by the previous remark, we have that 
%\begin{eqnarray}\kappa_{dt}(2)-2\kappa_{dt}(1)\label{RlemmeNadia}\end{eqnarray}
%is a positive measure strictly positive on each interval. We obtain that
%\begin{eqnarray}Re(\kappa_{ds}(z))\leq \frac{1}{2}\kappa_{ds}(2Re(z))\label{MajKRe}\end{eqnarray}

%Ancienne Version
%Therefore, a sufficent condition to have $sup_{z\in I,t\in [0,T]}|\gamma(z,s)|^2$ bounded is to have
%\begin{eqnarray*}\sup_{z \in I, s\in[0,T]}\left|\frac{d(\kappa_s(2Re(z)))}{d(\kappa_s(2)-2\kappa_s(1))}\right|<\infty\end{eqnar{ndray*}
%Therefore, $z\mapsto \int_t^T2Re(\eta(z,ds))$ is bounded.
%%%%%%%%%%%%%%%%%%%%%%%%%%%%%%%%%%%%%%%%%%%%%%%%%%%%%%%%%%%%%%%%%%%%%%%%%%%%%%%%%%%%%%%%%%%%%%%%%%%
%
\end{prooff}
%\qed 

%Let $\gamma=\sup_{z \in I}\left(2Re(z)\right)$, 
%By Lemma~\ref{lemme38a}, 
%it follows
%%
%\begin{equation} \label{EspStBornee}
%c_4:=\sup_{x \in I, s\leq T}\mathbb{E}\left[S^x_s\right]
%%\leq \sup_{s\leq T}\mathbb{E}\left[S^{\gamma}_s\right]
%<\infty\ .
%\end{equation}

\begin{thm}\label{propo310} Let $\Pi$ be a finite complex-valued 
Borel  measure on $\C$.
Suppose Assumptions~\ref{HypD},
  \ref{Hyp1bis},~\ref{Hyp1}.
%%%%%%%%%%%%%%%%%%%%%%%%%%%%%%%%%%%%%%%%%%%%%%%%%%%%%%%%%%%%%%%%%%%%%%%%%%%%%%%%%%%%%%%%%%%%%%%%%%%
 %Assume that 
%\begin{eqnarray*}sup_{z \in I}\left|\frac{d(\kappa_s(2Re(z)))}{d(\kappa_s(2)-2\kappa_s(1))}\right|_{\infty}<\infty\end{eqnarray*}
%Then 
%%%%%%%%%%%%%%%%%%%%%%%%%%%%%%%%%%%%%%%%%%%%%%%%%%%%%%%%%%%%%%%%%%%%%%%%%%%%%%%%%%%%%%%%%%%%%%%%%%%
Any complex-valued contingent claim $H=f(S_T)$, where
 $f$ is of the form~(\ref{FORM}), 
and $H \in \shl^2 (\Omega,\shf, \P) $,
 admits a unique FS decomposition $H=H_0+\int_0^T\xi^H_tdS_t+L^H_T$ with the
 following
properties.
\begin{enumerate} 
\item 
%If $\Pi$  is a finite measure then
 $H \in \shl^2(\Omega, \shf,\P)$ and
%\begin{itemize}
$$ H_t=\int H(z)_t\Pi(dz), \quad
       \xi_t^H=\int \xi(z)_t\Pi(dz), \quad
      L_t^H=\int L(z)_t\Pi(dz),
$$
%\end{itemize}
where for $z \in {\rm supp} (\Pi)$,  $H(z),\xi(z)$ and $L(z)$ 
are the same as those introduced in 
Proposition~\ref{lemme38}  and we convene that they vanish if 
$z \notin   {\rm supp} (\Pi)$.
\item Previous decomposition is real-valued
if $f$ is real-valued.
%\item If $\Pi$ a general $\sigma$-finite measure then
%we define 
%\begin{eqnarray*}
%H^c_t &=& \int H(z)_t\Pi^c(dz) \\  
% \xi^c_t &=&\int \xi(z)_t\Pi^c(dz) \\
% L^c_t &=& \int L(z)_t\Pi^c(dz).
%\end{eqnarray*}
%\item The processes
% $(H_t)$,$(\xi_t)$ and $(L_t)$ are real-valued.
%\item If $\Pi$ a general $\sigma$-finite measure then
%We have 
%$ H_t = \lim_{c \rightarrow +\infty} H^c_t,$ in $L^2$ for any $t \in [0,T]$,
%$ L_t = \lim_{c \rightarrow +\infty} L^c_t,$ in $L^2$  for any $t \in [0,T]$,
%$$E\left (\int_0^T (\xi_s - \xi^c_s)^2 d \langle M  \rangle_s   \right) 
%\rightarrow_{c \rightarrow \infty} 0. $$
%\item If $\Pi$ is a finite measure then  
%we still have   $H_t=\int H(z)_t\Pi(dz)$.
\end{enumerate}
\end{thm}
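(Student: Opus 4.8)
The plan is to leverage Proposition~\ref{lemme38}, which already provides the FS decomposition of each ``building block'' $H(z) = S_T^z$, and to obtain the decomposition of $H = f(S_T) = \int_{\mathbb C} S_T^z \, \Pi(dz)$ by integrating the block-wise decompositions against the finite measure $\Pi$. Concretely, I would define
$$
H_0 := \int H(z)_0 \, \Pi(dz), \quad
\xi^H_t := \int \xi(z)_t \, \Pi(dz), \quad
L^H_t := \int L(z)_t \, \Pi(dz),
$$
and then verify that the triple $(H_0, \xi^H, L^H)$ satisfies the defining properties of the FS decomposition in Definition~\ref{DefFSDecomp}: namely that $\xi^H \in \Theta$, that $L^H$ is a square integrable martingale strongly orthogonal to $M$ (equivalently $\langle L^H, M\rangle = 0$), and that the representation $H = H_0 + \int_0^T \xi^H_s \, dS_s + L^H_T$ holds $\P$-a.s. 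Existence and uniqueness then follow from item~1 of Theorem~\ref{ThmExistenceFS}, once we know $H \in \shl^2$, which is guaranteed by Remark~\ref{remarque320}~2.

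\textbf{Key steps in order.}
First I would justify that the integrals defining $H_0$, $\xi^H$, and $L^H$ are well-defined and that Fubini-type interchanges of $\int \cdots \Pi(dz)$ with the stochastic and Lebesgue--Stieltjes integrals in $t$ are permitted. This requires uniform-in-$z$ bounds on the quantities $H(z)_t$, $\xi(z)_t$, $L(z)_t$ for $z \in {\rm supp}(\Pi) \subset I_0 + i\R$, which is exactly what Lemma~\ref{lemmeconstante} delivers: point~1 bounds $\frac{dRe(\eta(z,s))}{d\rho_s}$ uniformly, point~2 controls $|\gamma(z,s)|^2$, and point~3 controls the exponential weight $e^{\int_t^T 2Re(\eta(z,ds))}$. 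Using these, together with the compactness of $I_0$ and the finiteness of $\Pi$, I would bound $\E[\langle L^H, \overline{L^H}\rangle_T]$ and $\E\big(\int_0^T |\xi^H_t|^2 S_{t-}^2 \, d\rho_t\big)$ via Jensen/Cauchy--Schwarz against $\Pi$, thereby showing $L^H$ is square integrable and $\xi^H \in \Theta$ through Lemma~\ref{lemmeTheta}. Second, the orthogonality $\langle L^H, M\rangle = 0$ follows by integrating the identity $\langle L(z), M\rangle = 0$ (point (b) of Proposition~\ref{lemme38}) against $\Pi$, again after a stochastic-Fubini step. Third, the representation identity is obtained by integrating the block identity $H(z) = H(z)_0 + \int_0^T \xi(z)_s \, dS_s + L(z)_T$ over $\Pi$. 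Finally, for item~2, real-valuedness, I would exploit the conjugation symmetry $\overline{\gamma(z,t)} = \gamma(\bar z, t)$, $\overline{\eta(z,t)} = \eta(\bar z, t)$ from \eqref{lemmepassagecomplex}: if $f$ is real-valued, then $\Pi$ is invariant under $z \mapsto \bar z$ (in the appropriate conjugated sense), so $\overline{H(z)} = H(\bar z)$ etc., and the integral against such a symmetric $\Pi$ produces a real quantity.

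\textbf{Main obstacle.}
The technical heart of the proof is the stochastic Fubini theorem: justifying that one may interchange the integral $\int \cdots \Pi(dz)$ with the stochastic integral $\int_0^{\cdot} \xi(z)_s \, dS_s$ and with $\langle \cdot, M\rangle$. This is where the uniform estimates of Lemma~\ref{lemmeconstante} are indispensable, since they are precisely the dominated-convergence/integrability hypotheses that make a stochastic Fubini statement applicable over the (possibly unbounded, in the imaginary direction) support of $\Pi$. I expect the delicate point to be verifying integrability uniformly as $Im(z) \to \pm\infty$ along ${\rm supp}(\Pi) \subset I_0 + i\R$; the bounds in Lemma~\ref{lemmeconstante} are stated uniformly over $z \in I_0 + i\R$ exactly to surmount this, so the argument reduces to assembling those estimates and invoking finiteness of $\Pi$. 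Once the interchange is secured, the remaining verifications are routine transcriptions of Proposition~\ref{lemme38} under the integral sign.
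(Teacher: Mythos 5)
Your construction for item 1 --- defining $H_0$, $\xi^H$, $L^H$ by integration against $\Pi$, establishing square integrability of $H$ and $L^H$ and membership $\xi^H\in\Theta$ via Jensen/Cauchy--Schwarz against the finite measure $|\Pi|$ combined with the uniform-in-$z$ bounds of Lemma~\ref{lemmeconstante}, and then passing the block identity and the orthogonality $\langle L(z),M\rangle=0$ through a stochastic Fubini argument --- is essentially the paper's proof. The paper isolates exactly the three estimates you describe (a uniform bound on $\mathbb{E}[|H(z)_t|^2]$, the integrability of $\mathbb{E}[\langle L(z),\overline{L(z)}\rangle_T]$ against $|\Pi|$, and the corresponding bound for $\xi$), and then invokes Fubini's theorem together with Fubini for stochastic integrals (Theorem IV.46 of \cite{Pr92}) to conclude, just as you propose.

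The one genuine gap is in item 2. You deduce real-valuedness from the claim that, when $f$ is real-valued, $\Pi$ is invariant under $z\mapsto\bar z$ in the conjugated sense. That does not follow: the representation $f(s)=\int s^z\,\Pi(dz)$ is not unique, so $f$ being real only tells you that $\Pi$ and its conjugate-reflected version represent the same function, not that they coincide as measures. One can repair this by symmetrizing $\Pi$ (which preserves Assumption~\ref{Hyp1}) and then appealing to uniqueness of the FS decomposition to see that the symmetrized measure produces the same triple --- but at that point you are already relying on uniqueness, which is precisely how the paper handles item 2 directly: since $H$ and $S$ are real-valued, the complex conjugate of the constructed triple is again an FS decomposition of $H$, so by the uniqueness in Theorem~\ref{ThmExistenceFS}~1.\ the imaginary parts of $H_0$, $\xi^H$ and $L^H$ must vanish. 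That route is shorter and makes no claim about $\Pi$ itself.
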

\begin{remarque} \label{Rsimbis}
Taking $\Pi = \delta_{z_0}(dz), \ z_0 \in \C$,
Assumption~\ref{Hyp1} is equivalent to the assumptions of Proposition 
\ref{lemme38}.
\end{remarque}

\begin{prooff}\
(of {\bf  Theorem~\ref{propo310}}).
{a)} %We suppose now that $\Pi$ is finite.
$f(S_T) \in \shl^2(\Omega, \shf, \P)$ since by Jensen's,
$E \left \vert \int_\C \Pi(dz) S_T^z \right \vert ^2
  \le \int_\C \vert \Pi\vert(dz)
 E \vert S_T^{2Re(z)}\vert  \vert \Pi\vert(\C)
 \le \sup_{x \in I_0} E( S_T^{2x}) \vert \Pi\vert(\C)^2,$
where $|\Pi|$ denotes the total variation of the finite measure $\Pi$.
Previous quantity is bounded because of Lemma~\ref{lemme38a}. \\
{b)}
We go on with the FS decomposition.
%Let $z \in D$.
 We would like to prove first that $H$ and $L^H$ are well defined
 square-integrable processes and 
$E(\int_0^T \vert \xi^H_s\vert^2 d\langle M\rangle_s) < \infty$.\\
% We know that $\Pi$ is a finite measure.
% We denote  $K=supp(\Pi)$. 
By Jensen's inequality, we have
\begin{eqnarray*}\mathbb{E} \left \vert \int_\mathbb{C}
    L(z)_t\Pi(dz)\right \vert^2 \leq 
\mathbb{E} \left ( \int_\mathbb{C}|\Pi|(dz) |L(z)_t|^2 \right )
\vert \Pi (\C) \vert 
=\int_\mathbb{C}|\Pi|(dz)\mathbb{E}[|L(z)_t|^2] \vert \Pi(\C) \vert.
\end{eqnarray*}
%where $|\Pi|$ denotes the total variation of the finite measure
% $\Pi$. 
Similar calculations allow to show that
\begin{eqnarray*}\mathbb{E}[(\xi^H)_t^2]\leq 
\vert \Pi \vert(\C) \int_\mathbb{C} |\Pi|dz)\mathbb{E}[|\xi (z)_t|^2]\ 
\quad \textrm{and}\quad 
\mathbb{E}[(L_t^H)^2]\leq \vert \Pi(\C)\vert
\int_\mathbb{C} |\Pi|(dz)\mathbb{E}[|L(z)_t|^2]\ .
\end{eqnarray*}
We will show now that
\begin{itemize}
\item (A1): $\sup_{t\leq T,z \in {\rm supp} \Pi }\mathbb{E}[|H_t(z)|^2]<\infty\,;$
\item (A2): $ \int_\mathbb{C}|\Pi|(dz)\mathbb{E}[|L(z)_T|^2] < \infty;$
%\mathbb{E}[|L_t(z)|^2]<\infty\,
%$\sup_{t\leq T,z \in I + i \R}\mathbb{E}[|L_t(z)|^2]<\infty\,;$
\item (A3): 
%$\sup_{t\leq T,z \in I}
$ E\left( \int_0^T d\rho_t S_t^2
\int_\C  \vert \xi_t(z)\vert^2 \vert \Pi \vert(dz) \right)<\infty.$
\end{itemize}
(A1): Since $H(z)_t=e^{\int_t^T\eta(z,ds)}S_t^z$, we have
$|H(z)_t|^2=H(z)_t\overline{H(z)_t}=e^{\int_t^T2Re(\eta(z,ds))}
S_t^{2Re(z)}\ ,$
so
$$
  \mathbb{E}[|H(z)_t|^2] =  e^{\int_t^T 2Re(\eta(z,ds))}
\mathbb{E}[S_t^{2Re(z)}]\leq c_4 e^{\int_t^T 2Re(\eta(z,ds))}\ ,
$$
%where $c_4$ was defined in \eqref{EspStBornee}.
where $c_4$ is well defined by~\eqref{EspStBornee}, below, since by Lemma~\ref{lemme38a}, 
\begin{equation} \label{EspStBornee}
c_4:=\sup_{x \in I, s\leq T}\mathbb{E}\left[S^x_s\right]
%\leq \sup_{s\leq T}\mathbb{E}\left[S^{\gamma}_s\right]
<\infty\ .
\end{equation}
%
%\sup_{t\leq T}\mathbb{E}[S_t^{\gamma}]\ ,\end{eqnarray*}
%with $\gamma=\sup_{z \in I}2Re(z)$. 
%Inequality~(\ref{EspStBornee}) and
Lemma~\ref{lemmeconstante} implies (A1). Therefore $(H_t)$ is a
well-defined square-integrable process.
%i.e.: $\mathbb{E}\left[\int_0^T|H(z)_t|^2dt\right]<\infty$.\\
%
%
(A2): 
$\mathbb{E}[|L_t(z)|^2]\leq \mathbb{E}[|L_T(z)|^2]=\mathbb{E}[\left\langle L(z),\overline{L(z)}\right\rangle_T]\ ,$
where the first inequality is due to the fact that $|L_t(z)|^2$ is a 
submartingale. 
\begin{eqnarray*}\mathbb{E}\left[\left\langle L(z),
\overline{L(z)}\right\rangle_T\right]=
\mathbb{E}\left[\int_0^Te^{\int_s^T2Re(\eta(z,du))}S_{s-}^{2Re(z)}
\left[d\rho_s(z)-|\gamma(z,s)|^2d\rho_s\right]\right] \ .\end{eqnarray*}
By Fubini's theorem, Lemma~\ref{lemme38a} and \eqref{Eellz}, we have
\begin{eqnarray*}
\mathbb{E}\left[\left\langle L(z),\overline{L(z)}\right\rangle_T\right]
&=& \int_0^Te^{\int_s^T2Re(\eta(z,du))} \mathbb{E}[S_{s-}^{2Re(z)}]
\left[\frac{d\rho_s(z)}{d\rho_s} -|\gamma(z,s)|^2 \right]
d\rho_s  \\
%\\ &\le &  \int_0^Te^{\int_s^T 2Re(\eta(z,du)}
%\left[\frac{d\rho_s(z)}{d\rho_s} -|\gamma(z,s)|^2 \right]
%\mathbb{E}[S_{s-}^{2Re(z)}] d\rho_s\\
&\le & c_4 \int_0^Te^{\int_s^T2Re(\eta(z,du))}
\left[\frac{d\rho_s(z)}{d\rho_s} \right]
d\rho_s.
\end{eqnarray*}
%where  % $c_4 = \sup_{s \le T} \mathbb{E}(S_{s}^{\gamma})$.
%
According to Lemma~\ref{lemmeconstante} point 2, previous expression is 
bounded by $c_4 I(z)$, where 
\begin{equation}\label{EIZ}
I(z):=\int_0^Td\rho_t\exp\left(\int_t^T2Re(\eta(z,ds))\right) \left[c_2-c_3
\frac{dRe(\eta(z,t))}{d\rho_t}\right]
=  c_2I_1(z)+c_3I_2(z),
\end{equation}
where
$
I_1(z)=\int_0^Td\rho_t\exp\left(\int_t^T2Re(\eta(z,ds))\right)$ and
$I_2(z)= - \int_0^T\exp\left(\int_t^T2Re(\eta(z,ds))\right)Re(\eta(z,ds)).
$
Using again Lemma~\ref{lemmeconstante}, we obtain
\begin{equation} \label{EIZbis}
\sup_{z\in I_0+i\mathbb{R}}\left|I_1(z)\right|\leq
\rho_T\exp\left(2c_1\rho_T\right)\,\quad \textrm{and}\quad 
\sup_{z\in I_0+i\mathbb{R}}\left|I_2(z)\right|<\infty \ ,
\end{equation}
and so
%\begin{eqnarray}
%\sup_{z\in I+i\mathbb{R}}\left|I(z)\right|<\infty\label{EIZ1}\end{eqnarray}
\begin{eqnarray}
\sup_{z\in I_0+i\mathbb{R}} \mathbb{E}\left[\left\langle L(z),
\overline{L(z)}\right\rangle_T\right]
<\infty\ .
\label{EIZ1}\end{eqnarray}
This concludes (A2).\\
We verify now the validity of (A3). This requires to control
\begin{eqnarray*}\mathbb{E}\left[\int_0^T\rho_{dt}S_t^2\left(\int_{\mathbb{C}}|\Pi|(dz)|\xi(z)_t|^2\right)\right]\leq \mathbb{E}\left[\int_0^T\rho_{dt}S_t^2\left(\int_{\mathbb{C}}|\Pi|(dz)\left|\gamma(z,t)\exp\left(\int_t^TRe(\eta(z,ds))\right)S_t^{z-1}\right|^2\right)\right]\ .\end{eqnarray*}
Using Jensen's inequality, this is smaller or equal than  
\begin{eqnarray*}
\vert \Pi(\C) \vert \int_{\mathbb{C}}|\Pi|(dz)\int_0^T\rho_{dt}\mathbb{E}\left[S_t^{2Re(z)}\right]|\gamma(z,t)|^2\exp\left(2\int_t^TRe(\eta(z,ds))\right)\ .
\end{eqnarray*}
Lemma~\ref{lemmeconstante} gives the upper bound
$
c_4 \vert \Pi\vert(\C)  
%\sup_{t\leq T, x \in I}\mathbb{E}
%\left[S_t^{2x}\right]
\int_{\mathbb{C}}|\Pi|(dz)I(z),
$
where $I(z)$ was defined in \eqref{EIZ}.
 Since $\Pi$ is finite and because of \eqref{EIZbis}, (A3) is now
 established. \\
{c)}
In order to conclude the proof of item 1., it remains to show that $L$ is an
$(\shf_t)$-martingale which is strongly orthogonal to $M$.
This can be established similarly as in \cite{Ka06}, Proposition 3.1,
by making use of  Fubini's theorem and 
 Fubini's theorem for stochastic integrals (cf.~\cite{Pr92},
 Theorem IV.46) and (A1), (A2), (A3).\\

Consequently,  $(H_0,\xi^H,L^H)$ provide a (possibly complex)
 FS decomposition of $H$.
\item{d)}
It remains to prove item 2., that is to say 
that the decomposition is real-valued.
 Let $(H_0,\xi^H,L^H)$ and $(\overline{H_0},\overline{\xi}^H,\overline{L}^H)$
 be two FS decomposition of $H$. Consequently,
since $H$ and $(S_t)$ are real-valued, we have
$ 0=H-\overline{H}=(H_0-\overline{H}_0)+ \displaystyle 
\int_0^T(\xi^H_s-\overline{\xi}^H_s)dS_s+(L_T^H-\overline{L}^H_T),$
which implies that $0=Im(H_0)+\int_0^TIm(\xi^H_s)dS_s+Im(L_T^H)$.
 By Theorem~\ref{ThmExistenceFS} 1., the uniqueness of the real-valued
 F\"ollmer-Schweizer decomposition yields
 that the processes $(H_t)$,$(\xi_t^H)$ and $(L_t^H)$ are real-valued.
%\item UTILISER LE THEOREME DE FERMETURE DE MONAT ET STRICKER
% TH. 3.8 POUR LA CONVERGENCE
%\end{description}
\end{prooff}

\subsection{Representation of call and put options}
%%%%%%%%%%%%%%%%%%%%%%%%%%%%%%%%%%%%%%%%%%%%%%%%%%%%%%%%%%%%%%%
\label{sec:payoff:laplace}

We used some integral representations of  payoffs of the
form~(\ref{FORM}). We refer to~\cite{Cramer},~\cite{Raible98} and more
recently~\cite{Eberlein}, for some characterizations of classes of
functions which admit this kind of representation. In order to apply
the results of this paper, we need explicit formulae
 for the complex measure $\Pi$ in some example of contingent claims. 
Let $K>0$ be a strike.
\begin{description}
 %\subsubsection{Call}
%%%%%%%%%%%%%%%%%%%%
\item{\bf The  European Call option $H=(S_T-K)_+$}. 
% We have two representations of the form~(\ref{FORM}).
%\begin{lemme}\label{Call}
%Let $K>0$, the European Call option $H=(S_T-K)_+$ has two representations of the form~(\ref{FORM}):
%\begin{enumerate}
%\item
% For arbitrary $R>1$, $s>0$, we have
%\begin{eqnarray}(s-K)_+=\frac{1}{2\pi i}\int_{R-i\infty}^{R+i\infty}s^z\frac{K^{1-z}}{z(z-1)}dz\ .\label{Call1}\end{eqnarray}
%\item
 For arbitrary $0<R<1$, $s>0$, we have
\begin{eqnarray}(s-K)_+-s=\frac{1}{2\pi i}\int_{R-i\infty}^{R+i\infty}s^z\frac{K^{1-z}}{z(z-1)}dz\ .\label{Call2}\end{eqnarray}
%\end{enumerate}
%\end{lemme}
\item{\bf The European Put option $H=(K-S_T)_+$}.
%\subsubsection{Put}
%%%%%%%%%%%%%%%%%%
%
%\begin{lemme}\label{Put}
%Let $K>0$, the European Put option $H=(K-S_T)_+$ gives 
For an arbitrary $R<0$, $s>0$, we have
\begin{eqnarray}(K-s)_+=\frac{1}{2\pi i}\int_{R-i\infty}^{R+i\infty}s^z\frac{K^{1-z}}{z(z-1)}dz\ .\label{Put1}\end{eqnarray}
\end{description}
%\end{lemme}

%%%%%%%%%%%%%%%%%%%%%%%%%%%%%%%%%%%%%%%%%%%%%%%%%%
\section{The solution to the minimization problem}
%%%%%%%%%%%%%%%%%%%%%%%%%%%%%%%%%%%%%%%%%%%%%%%%%%

\setcounter{equation}{0}

%\subsection{Mean-Variance Hedging}
%%%%%%%%%%%%%%%%%%%%%%%%%%%%%%%%%%
FS decomposition will help to provide the solution to the global
minimization  problem.
Let $X$ be an additive process with cumulant generating function
$\kappa$.
We denote $S_t = s_0 \exp(X_t), t \in [0,T]$, $s_0 > 0$.
Next theorem deals with the case where the payoff to hedge is given as a
bilateral Laplace transform of the exponential of the additive process $X$.
 It is an
extension of Theorem~3.3 of~\cite{Ka06} to additive processes
 with no stationary increments. 
\begin{thm}\label{mainthm}
%Let $X=(X_t)_{t \in [0,T]}$ be an additive process 
%with cumulant generating function $\kappa$.
Let $H=f(S_T)$ where $f$ is of the form~(\ref{FORM}). 
We assume the validity of Assumptions~\ref{HypD},~\ref{Hyp1bis}, 
%ref{Hyp1ter},
 \ref{Hyp1}.
The variance-optimal capital $V_0$ and the variance-optimal hedging strategy $\varphi$, solution of the minimization problem $(\ref{problem2})$, are given by
$V_0=H_0$
and the implicit expression
\begin{equation}\varphi_t=\xi^H_t+\frac{\lambda_t }{S_{t-}}(H_{t-}-V_0-\int_0^t\varphi_sdS_s)\ ,\end{equation}
where the processes $(H_t)$, $(\xi_t)$ and $(\lambda_t)$ are defined by
%\begin{eqnarray}\gamma(z,t):=\frac{d(\kappa_{t}(z+1)-\kappa_{t}(z)-\kappa_{t}(1))}{d\rho_t}\ ,\end{eqnarray}
\begin{eqnarray*}\gamma(z,t)&:=&\frac{d\rho_{t}(z,1)}{d\rho_t} \quad \textrm{with}\quad \rho_t(z,y)=\kappa_t(z+y)-\kappa_t(z)-\kappa_t(y)\ ,\\
%\end{eqnarray*}
%\begin{eqnarray*}
\eta(z,dt)&:=&\kappa_{dt}(z)-\gamma(z,t)\kappa_{dt}(1), \quad
%\end{eqnarray}
%\begin{eqnarray}
\lambda_t :=\frac{d(\kappa_{t}(1))}{d\rho_t} \\
H_t&:=&\int_\C e^{\int_t^T\eta(z,ds)}S_t^z\Pi(dz), \quad
\xi^H_t:=\int_\C \gamma(z,t)e^{\int_t^T\eta(z,ds)}S_{t-}^{z-1}\Pi(dz)\ .
\end{eqnarray*}
The optimal initial capital is unique. The optimal hedging strategy 
$\varphi_t(\omega)$ is unique up to some $(\P(d\omega)\otimes dt)$-null set.
\end{thm}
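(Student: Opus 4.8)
The plan is to reduce Theorem~\ref{mainthm} to the combination of two already-established results: Theorem~\ref{propo310}, which provides the explicit FS decomposition of $H=f(S_T)$ under Assumptions~\ref{HypD},~\ref{Hyp1bis},~\ref{Hyp1}, and Theorem~\ref{ThmSolutionPb1}, which turns an FS decomposition into the solution of the global minimization problem~(\ref{problem2}) whenever the MVT process $K$ is deterministic. The first thing I would do is verify that the structural hypotheses of Theorem~\ref{ThmSolutionPb1} hold in our setting. By Proposition~\ref{propoabsconti} and Corollary~\ref{C320}, Assumption~\ref{HypD} together with Assumption~\ref{Hyp1bis} (equivalently the finiteness of $K_T$) guarantees that $S$ satisfies (SC) and that the MVT process $K$ is deterministic, hence bounded. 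Thus both Theorem~\ref{ThmExistenceFS} and Theorem~\ref{ThmSolutionPb1} are applicable to $H$.

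Next I would assemble the pieces. By Theorem~\ref{propo310}, $H=f(S_T)$ admits a unique FS decomposition $H=H_0+\int_0^T\xi^H_t\,dS_t+L^H_T$, with the processes $(H_t)$ and $(\xi^H_t)$ given by the stated integrals against $\Pi$ of the elementary building blocks $H(z)_t$ and $\xi(z)_t$ from Proposition~\ref{lemme38}. I would then invoke item~2 of Theorem~\ref{ThmSolutionPb1}: since $K$ is deterministic, the solution of the minimization problem~(\ref{problem2}) is precisely the pair $(H_0,\varphi^{(H_0,H)})$, so $V_0=H_0$ as claimed. The implicit formula for $\varphi$ then comes from item~1 of the same theorem, namely equation~(\ref{xisolution}),
$$
\varphi^{(c,H)}_t=\xi^H_t+\frac{\alpha_t}{1+\Delta K_t}\big(H_{t-}-c-G_{t-}(\varphi^{(c,H)})\big),
$$
specialized to $c=V_0=H_0$. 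To match the displayed expression I would substitute $\alpha_t=\lambda_t/S_{t-}$ from~(\ref{alphau}) and check that $\Delta K_t=0$. This last point is where a small but genuine argument is needed: since $K$ is deterministic and equals $\int_0^t(d\kappa_u(1)/d\rho_u)^2\,d\rho_u$, its jumps are controlled by the atoms of the measure $d\rho_u=\rho_{du}(1)$, and one must argue that $d\rho$ has no atoms, so that $1+\Delta K_t=1$ and the denominator disappears.

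I expect the main obstacle to be precisely the justification that $\Delta K_t=0$, i.e. that the reference variance measure $\rho$ is non-atomic. This follows because $X$ is continuous in probability (Definition~\ref{defPAI}), so it has no fixed times of discontinuity; consequently $t\mapsto\kappa_t(z)$ is continuous for every $z\in D$ by Proposition~\ref{proposition38}, whence $d\rho_t=d(\kappa_t(2)-2\kappa_t(1))$ is atomless and $\langle M\rangle$ is continuous. This simultaneously justifies invoking item~3 of Theorem~\ref{ThmSolutionPb1} if one wishes to record the minimal quadratic error. The remaining claims are immediate: the uniqueness of the optimal initial capital and the $(\P(d\omega)\otimes dt)$-a.e. uniqueness of $\varphi$ are exactly the uniqueness assertions of item~3 of Theorem~\ref{ThmExistenceFS} combined with the uniqueness built into Theorem~\ref{ThmSolutionPb1}. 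Finally I would note that the explicit forms of $\gamma(z,t)$, $\eta(z,dt)$, $\lambda_t$, $H_t$ and $\xi^H_t$ are just the definitions~(\ref{gammaZT}),~(\ref{etaZT}),~(\ref{alphau}) and the integral representations from Theorem~\ref{propo310}, so no further computation is required beyond transcribing them into the statement.
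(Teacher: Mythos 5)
Your proposal is correct and follows essentially the same route as the paper: the paper's proof likewise just combines Theorem~\ref{propo310} with items 1.\ and 2.\ of Theorem~\ref{ThmSolutionPb1} (using $\alpha$ from \eqref{alphau}) and gets uniqueness from Theorem~\ref{ThmExistenceFS}. Your additional observation that one must check $\Delta K_t=0$ (via continuity in probability of $X$, hence continuity of $t\mapsto\kappa_t(z)$ by Proposition~\ref{proposition38}, hence non-atomicity of $d\rho$) to remove the $1+\Delta K_t$ denominator in \eqref{xisolution} is a legitimate detail that the paper's proof leaves implicit.
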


\begin{remarque}\label{remarque328}
The mean variance trade-off process can be expressed as,
 see~(\ref{KPII}),
$K_t=\int_0^t\frac{d\kappa_u(1)}{d\rho_u}\kappa_{du}(1).$
\end{remarque}

\begin{prooff}\
(of {\bf Theorem~\ref{mainthm}}).

Since $K$ is deterministic, the optimality follows from
Theorem~\ref{propo310} and by items 1. and 2. of Theorem~\ref{ThmSolutionPb1}.
We recall that $\alpha$ was given in \eqref{alphau}.
 Uniqueness follows from Theorem
% \ref{ThmSolutionPb1}.
\ref{ThmExistenceFS} 2.
%Theorem~\ref{thm128}.
\end{prooff}

%\subsection{The quadratic error}
%%%%%%%%%%%%%%%%%%%%%%%%%%%%%%%%%
When the underlying price is an exponential of additive process, 
we evaluate the so called {\bf variance of the hedging error}
of the contingent claim $H$ i.e.
the  quantity  $\mathbb{E}[(V_0+G_T(\varphi)-H)^2]$,
where   $V_0,\varphi$ and $H$ were defined at  Theorem~\ref{mainthm}.
%Again, $\rho_{dt}$ denotes the measure
% $\kappa_{dt}(2)-2\kappa_{dt}(1)$. Let $V,\varphi$ and $H$ appearing in
% Theorem \ref{mainthm}. 
%The quantity $\mathbb{E}[(V_0+G_T(\varphi)-H)^2]$ will be called \textbf{the variance of the hedging error}.
\begin{thm}\label{thm34}
Under the assumptions of Theorem~\ref{mainthm},
the variance of the hedging error equals
\begin{eqnarray*}J_0:=\left(\int_\mathbb{C}\int_\mathbb{C} J_0(y,z)\Pi(dy)\Pi(dz)\right)\ ,\end{eqnarray*}
where
$$
J_0(y,z):= \left \{
\begin{array}{ccc}
s_0^{y+z}\int_0^T\beta(y,z,t)e^{\kappa_{t}(y+z)+\alpha(y,z,t)} d\rho_t\ &:& y, z
 \in{\rm supp} \Pi \\
0 &:& otherwise,
\end{array}
\right.
$$
with
\begin{eqnarray}
\label{eq:Beta}
\alpha(y,z,t)&:=&\eta(z,T)-\eta(z,t)-(\eta(y,T)-\eta(y,t))-\int_t^T\left(\frac{d\kappa_{s}(1)}{d\rho_s}\right)^2d\rho_s\ ,\nonumber \\ 
%\beta(y,z,t)&:=&\kappa_{t}(y+z)-\kappa_{t}(y)-\kappa_{t}(z)-\int_0^t\gamma(z,s)(\kappa_{ds}(y+1)-\kappa_{ds}(y)-\kappa_{ds}(1))
\beta(y,z,t)&:=&\frac{d\rho_{t}(y,z)}{d\rho_t} -
\frac{d\rho_{t}(y,1)}{d\rho_t} \frac{d\rho_{t}(z,1)}{d\rho_t}
\ .
\end{eqnarray}
\end{thm}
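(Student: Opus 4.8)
The plan is to reduce the variance of the hedging error to the expression computed in item 3 of Theorem~\ref{ThmSolutionPb1}, and then to evaluate it through the F\"ollmer-Schweizer decomposition already obtained in Theorem~\ref{propo310}. First I would note that the hypotheses of Theorem~\ref{mainthm} place us under those of Theorem~\ref{propo310}, so that $H=H_0+\int_0^T\xi^H_tdS_t+L^H_T$ with $L^H=\int_\C L(z)\Pi(dz)$ and, since each $L(z)_0=0$ by~\eqref{FS3}, $L^H_0=0$. Moreover $\langle M\rangle_t=\int_0^tS_{u-}^2d\rho_u$ is continuous, because $t\mapsto\kappa_t$, hence $t\mapsto\rho_t$, is continuous by Proposition~\ref{proposition38}. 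Applying item 3 of Theorem~\ref{ThmSolutionPb1} with $c=V_0=H_0$, both the term $(H_0-c)^2$ and $\mathbb{E}[(L_0^H)^2]$ vanish, leaving
\[
J_0=\mathbb{E}\left[\int_0^Te^{-(K_T-K_s)}d\langle L^H\rangle_s\right],\qquad K_T-K_s=\int_s^T\left(\frac{d\kappa_u(1)}{d\rho_u}\right)^2d\rho_u,
\]
the last identity coming from~\eqref{KPII}.

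Next, since $H$ is real-valued, so is $L^H$ by Theorem~\ref{propo310}, whence $\langle L^H\rangle=\langle L^H,L^H\rangle$. Using the representation $L^H=\int_\C L(z)\Pi(dz)$ together with the bilinearity of the bracket and a Fubini argument, I would write $\langle L^H,L^H\rangle_t=\int_\C\int_\C\langle L(y),L(z)\rangle_t\,\Pi(dy)\Pi(dz)$. The interchange of the $\Pi$-integrals with the bracket (and later with $\mathbb{E}$ and the time integral) is exactly of the type already justified in the proof of Theorem~\ref{propo310} through the estimates (A1)--(A3) and Lemma~\ref{lemmeconstante}, and relies on Fubini for stochastic integrals; this is where the bulk of the technical care goes.

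The heart of the computation is the covariation $\langle L(y),L(z)\rangle$. Inserting the representation~\eqref{43}, namely $L(z)_\cdot=\int_0^\cdot e^{\int_u^T\eta(z,ds)}dM(z)_u-\int_0^\cdot\xi(z)_udM_u$, and expanding bilinearly produces four terms governed by the covariations of Proposition~\ref{corro36}. Substituting $\xi(z)_u=\gamma(z,u)e^{\int_u^T\eta(z,ds)}S_{u-}^{z-1}$ from~\eqref{FS2} and using $\rho_{du}(z,1)=\gamma(z,u)d\rho_u$ from~\eqref{gammaZT}, the three terms containing $M$ collapse to $-\gamma(y,u)\gamma(z,u)$ while the first term contributes $d\rho_u(y,z)/d\rho_u$. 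Hence
\[
d\langle L(y),L(z)\rangle_u=e^{\int_u^T\eta(y,ds)}e^{\int_u^T\eta(z,ds)}S_{u-}^{y+z}\beta(y,z,u)\,d\rho_u,
\]
with $\beta$ exactly as in~\eqref{eq:Beta}.

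Finally I would take expectations. After a Fubini interchange, the only random factor is $S_{s-}^{y+z}$, and $\mathbb{E}[S_{s-}^{y+z}]=s_0^{y+z}e^{\kappa_s(y+z)}$ by continuity of $\kappa$. Collecting the exponential factors $e^{\int_s^T\eta(y,du)}$ and $e^{\int_s^T\eta(z,du)}$ together with $e^{-(K_T-K_s)}$, and reading off the definition~\eqref{eq:Beta}, yields the integrand $s_0^{y+z}\beta(y,z,s)e^{\kappa_s(y+z)+\alpha(y,z,s)}$, so that $J_0(y,z)=s_0^{y+z}\int_0^T\beta(y,z,t)e^{\kappa_t(y+z)+\alpha(y,z,t)}d\rho_t$ and $J_0=\int_\C\int_\C J_0(y,z)\Pi(dy)\Pi(dz)$. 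The main obstacles are the continuity of $\langle M\rangle$ needed to invoke item 3 of Theorem~\ref{ThmSolutionPb1}, and the several Fubini interchanges, whose integrability is guaranteed by the bounds from Lemma~\ref{lemmeconstante} and the proof of Theorem~\ref{propo310}; I would also double-check the sign carried by the $\eta(y,\cdot)$ factor in $\alpha$ against the covariation computation above.
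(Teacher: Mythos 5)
Your proposal is correct and follows essentially the same route as the paper: reduce to item 3 of Theorem~\ref{ThmSolutionPb1} with $c=H_0$ and $L^H_0=0$, express $\left\langle L^H,L^H\right\rangle$ as a double $\Pi$-integral of $\left\langle L(y),L(z)\right\rangle$, compute the latter from Proposition~\ref{corro36} and \eqref{43} to obtain the factor $\beta(y,z,\cdot)\,d\rho$, and conclude by Fubini and $\mathbb{E}[S_{t-}^{y+z}]=s_0^{y+z}e^{\kappa_t(y+z)}$. Your closing caveat is well placed: the covariation computation produces $e^{\int_t^T(\eta(y,du)+\eta(z,du))}$, so the term $\eta(y,T)-\eta(y,t)$ in the displayed $\alpha(y,z,t)$ should carry a plus sign, a slip that the paper's own proof inherits from the statement.
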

%
%\begin{remarque}\label{remarqueBeta}
This expression of the error involving the function
 $\beta$~\eqref{eq:Beta}, can be used to characterize the price
 models that are exponential of additive processes for which the
 market is complete, at least for vanilla option payoffs. 
% (at least for payoffs that can be written as inverse Laplace transforms).
 % This characterization is the subject of a working paper. 
For instance, by evaluating $\beta$, we can verify, in Remarks~\ref{remarquePoisson}
 and~\ref{remarqueGaussian}, below, the complete market model property in the Poisson and the Gaussian case. 
% 
%
%\end{remarque}
%\begin{remarque}\label{remarque329}
%We have
%$\alpha(y,z,t)=(\eta(z,T)-\eta(z,t))-(\eta(y,T)-\eta(y,t))-(K_T-K_t),
%$
%where $K$ is the MVT process.
%\end{remarque}
%
\begin{prooff} \
(of {\bf Theorem~\ref{thm34}}).
Since $X_0 = 0$, $\shf_0$ is the trivial $\sigma$-field, therefore $L^H_0 =
0$, because it is mean-zero and deterministic.

The quadratic error can be calculated using 
%Corollary
%\ref{corro132Bis} 
 %~\ref{ThmSolutionPb2} 
  Theorem~\ref{ThmSolutionPb1} 3. It gives
\begin{eqnarray}\mathbb{E}\left[\int_0^T\exp\left\{-(K_T-K_s)\right\}
d\left\langle L^H\right\rangle_s\right]\ ,\label{E1}\end{eqnarray}
where $L^H$ is the remainder martingale in the FS decomposition of $H$.
%and $(K_t)$ is the mean-variance trade-off process. 
We proceed now to the evaluation of  
 $\left\langle L^H\right\rangle$.
Similarly to the proof of Theorem 3.2 of \cite{Ka06},
using 
%(\ref{44bis}),
(\ref{44ter}), 
%Remark \ref{remarque17}, 
the   bilinearity and the stability w.r.t.  complex conjugate
 of the covariation together with
\eqref{EIZ1},  
 it is possible
to show that
%$$ \int\int\left\langle
%    L(y),L(z)\right\rangle_t\Pi(dy)\Pi(dz),
%$$
%\end{equation}
%is a well-defined, continuous, predictable, 
%with bounded variation complex-valued process and
\begin{equation} \label{ECrochetL}
\left\langle L^H,L^H\right\rangle_t=\int\int\left\langle L(y),L(z)\right
\rangle_t\Pi(dy)\Pi(dz). 
\end{equation}
%by definition of oblique bracket.
 It remains to evaluate
 $\left\langle L(y),L(z)\right\rangle$ for $y,z \in  {\rm supp (\Pi)}$.
We know by Proposition~\ref{corro36} that for all $y,z \in \frac{D}{2}$,
%\begin{eqnarray*}\left\langle M(y),M(z)\right\rangle_t=\int_0^tS_{u-}^{y+z}(\kappa_{du}(y+z)-\kappa_{du}(y)-\kappa_{du}(z))\ .\end{eqnarray*}
\begin{eqnarray*}\left\langle M(y),M(z)\right\rangle_t=\int_0^tS_{u-}^{y+z}\rho_{du}(y,z)\ .\end{eqnarray*}
Using the same terminology as in Proposition~\ref{lemme38},
%~(\ref{51Bis}) says $\left\langle L(z),M\right\rangle_t=0$ and~(\ref{43}) imply
similarly to \eqref{Eellz} we have
\begin{eqnarray*}
\left\langle L(y),L(z)\right\rangle_t
%= \int_0^t e^{\int_s^T(\eta(z,du)+\eta(y,du))}d\left\langle
%  M(y),M(z)\right
%\rangle_s
%-\int_0^t\xi(z)_se^{\int_s^T\eta(y,du)}d\left\langle M,M(y)\right\rangle_s\ ,\\\\
%&=&\int_0^t e^{\int_s^T(\eta(z,du)+\eta(y,du))}d\left\langle M(y),M(z)
%\right\rangle_s
%-\int_0^t\gamma(z,s)e^{\int_s^T(\eta(z,du)+\eta(y,du))}S_{s-}^{z-1}d\left\langle M,M(y)\right\rangle_s\ ,\\\\
%&=&\int_0^t e^{\int_s^T(\eta(z,du)+\eta(y,du))}S_{s-}^{y+z}\rho_{ds}(y,z)-\int_0^t\gamma(z,t)e^{\int_s^T(\eta(z,du)+\eta(y,du))}S_{s-}^{z-1}S_{s-}^{y+1}\rho_{ds}(y,1)\ , \\ \\
&=& \int_0^t e^{\int_s^T(\eta(z,du)+\eta(y,du))}S_{s-}^{y+z}
\left[\rho_{ds}(y,z)-\gamma(z,s)\rho_{ds}(y,1)\right] \\
&=&\int_0^t
  e^{\int_s^T(\eta(z,du)+\eta(y,du))}S_{s-}^{y+z}\beta(y,z,s) d\rho_s\ .
\end{eqnarray*}
We come back to~(\ref{E1}).
%Let $(K_t)$ be the mean-variance trade-off process, then
%Recalling Remark \ref{remarque329} we have
Recalling that 
$\alpha(y,z,t)=(\eta(z,T)-\eta(z,t))-(\eta(y,T)-\eta(y,t))-(K_T-K_t),$
where $K$ is the MVT process, 
 we have
$$
\int_0^T e^{-(K_T-K_t)}d\left\langle L(y),L(z)\right\rangle_t
%&=&\int_0^T e^{-(K_T-K_t)+\int_t^T
%(\eta(z,du)+\eta(y,du))}S_{t-}^{y+z}\beta(y,z,t) d\rho_t\ ,\\ \\
= \int_0^T e^{\alpha(y,z,t)}S_{t-}^{y+z}\beta(y,z,t) d\rho_t.
$$
Since $\mathbb{E}[S_{t-}^{y+z}]=s_{0}^{y+z}e^{\kappa_t(y+z)}$, an application of Fubini's theorem yields
\begin{equation}\label{E2}
\mathbb{E} \left (\int_0^T e^{-(K_T-K_t)}d\left\langle
    L(y),L(z)\right\rangle_t
\right )
%&=&\mathbb{E} \left(\int_0^T
%  e^{\alpha(y,z,t)}S_{t-}^{y+z}\beta(y,z,t) d\rho_t\right)\
%\\ 
%&&\\
= s_{0}^{y+z}\int_0^T e^{\alpha(y,z,t)+\kappa_t(y+z)}\beta(y,z,t) d\rho_t,
\end{equation}
which equals $J_0(y,z)$.
\eqref{ECrochetL}, \eqref{E2} and again Fubini's theorem imply
$$
\int_0^Te^{-(K_T-K_t)}d\left\langle L^H,L^H\right\rangle_t=
%\int_\mathbb{C}\int_\mathbb{C}\int_0^Te^{-(K_T-K_t)}d\left\langle L(y),L(z)\right\rangle_t\Pi(dy)\Pi(dz)\ ,
%\end{eqnarray*}
%hence
%\begin{eqnarray*}
%\mathbb{E}\left[\int_0^Te^{-(K_T-K_t)}d\left\langle L,L\right\rangle_t\right]
%&=&
%\int_\mathbb{C}\int_\mathbb{C}\mathbb{E}\left[\int_0^Te^{-(K_T-K_t)}d\left\langle L(y),L(z)\right\rangle_t\right]\Pi(dy)\Pi(dz)\ ,\\\\
%&=&
\int_\mathbb{C}\int_\mathbb{C} J_0(y,z)\Pi(dy)\Pi(dz).
$$
This concludes the proof of Theorem~\ref{thm34}.
\end{prooff}
%

%%%%%%%%%%%%%%%%%%%%%%%%%%%%%%%%%%%%%%%%%%%%%%%%%%%%%%%%%%%%%%%%%%%%%%%%%%%%%%%%%%%%%%%%%%%%%%%%%%%
%\subsubsection{Measure theoretical Part}

%Take $X$ a process with independent increments (PII) with characteristics triplet $(B,C,\nu)$. 

%With proposition \ref{propoJSA} and theorem \ref{ThmCharaDeter}, we remember that we have for all $z \in \mathbb{C}$ with $R'\leq Re(z)\leq R$,
%\begin{eqnarray*}d\kappa_{t}(z)\ll dA_t\end{eqnarray*}
%so
%\begin{eqnarray*}\kappa_t(2)-2\kappa_t(1)=\int_0^t\left(k_s(2)-2k_s(1)\right)dA_s\end{eqnarray*}
%%%%%%%%%%%%%%%%%%%%%%%%%%%%%%%%%%%%%%%%%%%%%%%%%%%%%%%%%%%%%%%%%%%%%%%%%%%%%%%%%%%%%%%%%%%%%%%%%%%

\section{Examples}

\label{SecExamples}

% \subsection{The exponential L\'evy case}
% %%%%%%%%%%%%%%%%%%%%%%%%%%%%%%%%%%%%%%

\subsection{Exponential of a Wiener integral driven by a L\'evy process}

\label{ExpWiener}
%%%%%%%%%%%%%%%%%%%%%%%%%%%%%%%%%%%%%%%%%%%%%%%%%%%

%\subsubsection{Wiener integrals of L\'evy processes}
%{$X_t=\int_0^t\gamma(s)d\Lambda_s$}
%%%%%%%%%%%%%%%%%%%%%%%%%%%%%%%%%%%%%%%%%%%

Let  $\Lambda$ be a square integrable
L\'evy process and let $(t,z) \mapsto \kappa_t^\Lambda(z)$
be the cumulative generating function of $\Lambda$ 
with domain $D_\Lambda$ in the sense of
Definition~\ref{defKappa}.  $(t,z) \mapsto \kappa_t^\Lambda(z)$
is continuous because of Proposition
\ref{lemme38}.
 We observe that
\begin{equation} \label{PropKappa}
\kappa_t^\Lambda(z) = t \kappa^\Lambda(z)\ ,
\end{equation}
where $\kappa^\Lambda: \Lambda \rightarrow \C$
is a continuous function such that  
$\kappa^\Lambda(z) =  \kappa^\Lambda_1(z).$
%For simplicity we set $\kappa^\Lambda(z) = \kappa_1$(z).
Let $l:[0,T] \rightarrow \R$ be a bounded Borel function. 
We will consider in this subsection  the additive process $X_t=\int_0^t l_s d\Lambda_s$.
%From now on, setting 
%$$ \varepsilon := \varepsilon(l) := \inf l, \quad r:= r(l):=  \sup l, $$
Let us define the set $D_\Lambda(l)\subset \R$ such that 
$$
D_\Lambda(l) = \{ x \in \R \vert \underline{l} x, \overline{l} x \in D_\Lambda \} + i \R\ ,\quad \textrm{where}\quad  \underline{l} := \inf l, \quad \overline{l}:=  \sup l \ .
$$
%We will consider in this subsection  the additive process $X_t=\int_0^t l_s d\Lambda_s$.
\begin{lemme}\label{lemme27}
The cumulant generating function  of $X$
 is such that for all 
$ z \in D_\Lambda(l)$, we have
%$z\in\{z\,\vert\, 
%Re (z) l_t\in D_\Lambda\  \textrm{for all} 
%\ t\in[0,T]\}$, 
$$
\kappa_{X_t}(z)=\int_0^t\kappa^\Lambda(z l_s)ds.
$$
In particular $D_\Lambda(l) \subset D$, 
where $D$ is the domain defined according
to Definition~\ref{defKappa}.
%\end{enumerate}
\end{lemme}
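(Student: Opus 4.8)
The plan is to prove the identity first for piecewise constant (simple) integrands $l$, where it reduces to a direct computation, and then to pass to general bounded Borel $l$ by approximation. I would start with $l=\sum_{j=1}^m c_j\mathbf{1}_{(t_{j-1},t_j]}$ on a subdivision $0=t_0<\cdots<t_m$. Then $X_t=\sum_j c_j(\Lambda_{t_j\wedge t}-\Lambda_{t_{j-1}\wedge t})$ is a finite combination of independent increments of $\Lambda$, and since $\Lambda$ has stationary independent increments with $\kappa^\Lambda_u(w)=u\,\kappa^\Lambda(w)$ (Remark~\ref{R3} applied to $\Lambda$, valid whenever $Re(w)\in D_\Lambda$), multiplying the individual moment generating functions gives
$$\mathbb{E}[e^{zX_t}]=\prod_{j}e^{((t_j\wedge t)-(t_{j-1}\wedge t))\kappa^\Lambda(zc_j)}=\exp\Big(\int_0^t\kappa^\Lambda(zl_s)\,ds\Big).$$

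Before the approximation, I would check that the right-hand side is meaningful and, simultaneously, obtain the inclusion $D_\Lambda(l)\subset D$. Writing $x=Re(z)$, membership $z\in D_\Lambda(l)$ means $\underline{l}x,\overline{l}x\in D_\Lambda$; since $D_\Lambda\cap\mathbb{R}$ is convex and contains $0$ (Proposition~\ref{LKdecoKappa} applied to $\Lambda$), each value $xl_s$ lies in the segment with endpoints $\underline{l}x$ and $\overline{l}x$, hence in $D_\Lambda\cap\mathbb{R}$. Thus $\kappa^\Lambda(zl_s)$ is well defined for every $s$, and $s\mapsto\kappa^\Lambda(zl_s)$ is bounded and measurable, because $\{zl_s:s\in[0,t]\}$ is a bounded set on which $\kappa^\Lambda$ is continuous (Proposition~\ref{proposition38} for $\Lambda$) while $l$ is bounded measurable; so the integral converges. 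The inclusion $D_\Lambda(l)\subset D$ then follows from Fatou's lemma applied to the approximating sequence below, which yields $\mathbb{E}[e^{xX_t}]\le\exp(\int_0^t\kappa^\Lambda(xl_s)\,ds)<\infty$, whence $x\in D\cap\mathbb{R}$ and $z\in D$.

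For general bounded Borel $l$, I would choose simple functions $l^{(n)}$ with $\underline{l}\le l^{(n)}\le\overline{l}$ and $l^{(n)}\to l$ Lebesgue--a.e. Since $\Lambda$ is square integrable, the stochastic integral depends $L^2$-continuously on its integrand, so $X^{(n)}_t\to X_t$ in $L^2$ and hence a.s. along a subsequence. The right-hand side converges, $\int_0^t\kappa^\Lambda(zl^{(n)}_s)\,ds\to\int_0^t\kappa^\Lambda(zl_s)\,ds$, by dominated convergence (the arguments stay in a fixed bounded subset of $D_\Lambda$ and $\kappa^\Lambda$ is continuous there).

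The hard part is the passage to the limit on the left, $\mathbb{E}[e^{zX^{(n)}_t}]\to\mathbb{E}[e^{zX_t}]$, which needs uniform integrability of $\{e^{zX^{(n)}_t}\}_n$ rather than the mere $L^1$-bound supplied by the simple-function formula. I would first settle the identity on the interior strip $\{z:Re(z)\in{\rm Int}(J)\}$, where $J=\{x\in\R:\underline{l}x,\overline{l}x\in D_\Lambda\}$: there one can pick $p>1$ with $p\,Re(z)\in J$, so that $\sup_n\mathbb{E}[|e^{zX^{(n)}_t}|^p]=\sup_n\exp(\int_0^t\kappa^\Lambda(p\,Re(z)\,l^{(n)}_s)\,ds)<\infty$, giving $L^p$-boundedness, hence uniform integrability, which together with the a.s.\ convergence yields the identity. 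Finally I would extend it to all of $D_\Lambda(l)=J+i\R$ by continuity, using Proposition~\ref{proposition38} for the continuity of $z\mapsto\kappa_t(z)={\rm Log}\,\mathbb{E}[e^{zX_t}]$ on $D$ and dominated convergence for the continuity of the right-hand side as a boundary point of the strip is approached from inside.
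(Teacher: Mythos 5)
Your proof is correct and rests on the same underlying idea as the paper's: compute the cumulant generating function exactly for a class of integrands for which $X_t$ is a finite combination of independent increments of $\Lambda$, then pass to the limit. The difference is in the approximants and in how much of the limit passage is actually carried out. The paper approximates $\int_0^t l_s\,d\Lambda_s$ by Riemann sums for continuous $l$ and reduces Borel bounded $l$ to that case by convolution, leaving the convergence of $\mathbb{E}[e^{zX^{(n)}_t}]$ entirely implicit; you approximate $l$ directly by simple functions and, more importantly, you supply the uniform integrability needed to pass to the limit in the expectation ($L^p$-boundedness via the product formula when $Re(z)$ is interior to the strip, then a continuity argument at the boundary points of $J$). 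That is genuine added content, as is the Fatou argument giving $D_\Lambda(l)\subset D$ before the identity itself is established. Two small remarks: for the boundary extension you should not invoke Proposition~\ref{proposition38} for $X$ (its proof uses the semimartingale property of $X$, which in the paper's logical order is only deduced afterwards, in Remark~\ref{R514bis}); the elementary continuity of $z\mapsto\mathbb{E}[e^{zX_t}]$ on the closed strip, coming from the domination $e^{Re(z)X_t}\le e^{aX_t}+e^{bX_t}$ with $a,b$ the endpoints of $J$, is all you need. Also the degenerate case $\mathrm{Int}(J)=\emptyset$, i.e. $Re(z)=0$, deserves a word, though there bounded convergence settles the limit trivially since $|e^{zX^{(n)}_t}|=1$.
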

\begin{proof} \
If  $l$ is
  continuous, the result follows from the observation
that 
 $\int_0^T l_s d\Lambda_s$ is the 
limit in probability of $\sum_{j=0}^{p-1} l_{t_j}(\Lambda_{t_{j+1}}-
\Lambda_{t_j})$ where $0=t_0<t_1<...<t_p=T$ is a subdivision of $[0,T]$
 whose mesh converges to zero. 
If $l$ is only Borel bounded the result can be established
through approximation by convolution.
\end{proof}

 We formulate the following hypothesis which will be in force for
the whole subsection.
\begin{Hyp}\label{HypWiener}
\begin{enumerate}
%\item $l$ is a continuous function.
\item $\kappa^\Lambda(2) - 2\kappa^\Lambda(1) \neq 0.$
\item $\underline{l} > 0$ and $2 \overline{l} \in D_\Lambda$.
\end{enumerate}
\end{Hyp}
%We consider the additive process $X_t=\int_0^t l_sd\Lambda_s$. 
\begin{remarque}
\label{rem:kappa:gamma}
 Lemma~\ref{L2} applied to $X$ being the L\'evy process
$\Lambda$ implies that,  for every $\gamma>0$,  such that
 $2 \gamma \in D_\Lambda$, we have
\begin{eqnarray}\label{equa1}
\kappa^\Lambda(2\gamma)-2\kappa^\Lambda(\gamma)>0\ .
\end{eqnarray}
%\end{enumerate}
\end{remarque}
\begin{remarque} \label{R514bis}
\begin{enumerate}
%\item 
%Lemma~\ref{lemme27} says that $D$ contains
%$ D_\Lambda(l):= \left\{x\in \mathbb{R}\,|\,\underline{l}  x, 
%\overline{l} x
%  \in D_\Lambda\right\}+i\mathbb{R}\,,$ and $\kappa_t(z)=
%\int_0^t\kappa^\Lambda(zl_s)ds\,.$ 
%\item $\rho_t=\int_0^t\left(\kappa^\Lambda(2 l_s)-
%2\kappa^\Lambda(l_s)\right)ds\,;$
\item By item 2. of Assumption \ref{HypWiener},
$2 \in  D_\Lambda(l)$ and so does
$1$ because $ D_\Lambda(l)$ is convex.
By  Lemma \ref{lemme27}, $1$ and $2$ belong to $D$.
\item $\rho_t=\int_0^t\left(\kappa^\Lambda(2 l_s)-
2\kappa^\Lambda(l_s)\right)ds\,;$
\item
%\item $(t,z)\longmapsto \kappa_t(z)$ is continuous.
$X$ is a   semimartingale additive process since 
$t \mapsto \kappa_t(2)$ has bounded variation, see Proposition~\ref{P312}.
%\item $1 \in D_{\varepsilon, r}$ since $0, r \in D_\Lambda$.
\end{enumerate}
\end{remarque}

\begin{propo}\label{propo411}
Assumptions~\ref{HypD}  and~\ref{Hyp1bis} are verified.
Moreover $D_\Lambda(l) \subset \shd$.
\end{propo}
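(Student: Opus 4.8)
The plan is to read everything off from the explicit form of $\kappa$ and $\rho$ available in this setting. By Lemma~\ref{lemme27} we have $\kappa_t(z)=\int_0^t\kappa^\Lambda(zl_s)\,ds$ for $z\in D_\Lambda(l)$, so $\kappa_{du}(z)=\kappa^\Lambda(zl_u)\,du$; and by Remark~\ref{R514bis} we have $\rho_t=\int_0^t g(l_s)\,ds$, where I abbreviate $g(t):=\kappa^\Lambda(2t)-2\kappa^\Lambda(t)$. The convexity of $D_\Lambda$ (Proposition~\ref{LKdecoKappa} applied to $\Lambda$), together with $0,2\overline l\in D_\Lambda$ and $0<\underline l\le\overline l$, shows that $[0,2\overline l]\subset D_\Lambda$; hence $\underline l,\overline l,2\underline l,2\overline l$ all lie in $D_\Lambda$, so that $1,2\in D_\Lambda(l)\subset D$ by Lemma~\ref{lemme27}. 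This already yields item~2 of Assumption~\ref{HypD} (as recorded in Remark~\ref{R514bis}).

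For item~1 of Assumption~\ref{HypD}, I would show that $t\mapsto\rho_t$ is strictly increasing. For each $u$ the real number $\gamma=l_u$ satisfies $\gamma\ge\underline l>0$ and $2\gamma=2l_u\in[0,2\overline l]\subset D_\Lambda$, so Remark~\ref{rem:kappa:gamma} gives $g(l_u)>0$. Therefore $\rho_t-\rho_s=\int_s^t g(l_u)\,du>0$ whenever $0\le s<t\le T$, i.e. $t\mapsto\rho_t=\rho_t(1)$ is strictly increasing. Applying Lemma~\ref{L2} to $X$ with $z=1$ (admissible since $1\in\frac{D}{2}$ and $Re(1)\neq0$) then forces $X$ to have no deterministic increments, completing Assumption~\ref{HypD}.

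It remains to prove $D_\Lambda(l)\subset\shd$, which contains Assumption~\ref{Hyp1bis} as the special case $z=1$. Fix $z\in D_\Lambda(l)$; then $z\in D$, and from the two densities above
\[
\frac{d\kappa_u(z)}{d\rho_u}=\frac{\kappa^\Lambda(zl_u)}{g(l_u)}\ \ d\rho_u\text{-a.e.},\qquad\text{hence}\qquad\int_0^T\Big|\frac{d\kappa_u(z)}{d\rho_u}\Big|^2 d\rho_u=\int_0^T\frac{|\kappa^\Lambda(zl_u)|^2}{g(l_u)}\,du.
\]
I would bound the last integrand uniformly in $u$. Since $l$ takes values in the compact interval $[\underline l,\overline l]$, on which both $t\mapsto 2t$ and (because $Re(z)\underline l,Re(z)\overline l\in D_\Lambda$ and $D_\Lambda\cap\R$ is convex) the curve $t\mapsto zt$ stay in $D_\Lambda$, the continuity of $\kappa^\Lambda$ (Proposition~\ref{proposition38}) makes $t\mapsto g(t)$ continuous and gives $\sup_{u}|\kappa^\Lambda(zl_u)|\le\sup_{t\in[\underline l,\overline l]}|\kappa^\Lambda(zt)|<\infty$. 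As $g$ is continuous and strictly positive on the compactum $[\underline l,\overline l]$, it attains a minimum $m>0$, whence $g(l_u)\ge m$ for all $u$ and the integral is bounded by $m^{-1}T\sup_{t\in[\underline l,\overline l]}|\kappa^\Lambda(zt)|^2<\infty$. This gives $z\in\shd$.

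The crux is the uniform lower bound $g(l_u)\ge m>0$: it combines the pointwise strict positivity from Remark~\ref{rem:kappa:gamma} (which itself rests on Assumption~\ref{HypWiener} and the absence of deterministic increments of $\Lambda$) with the continuity of $\kappa^\Lambda$ up to the possibly-boundary point $2\overline l\in D_\Lambda$ and the compactness of the range of $l$. The hypothesis $\underline l>0$ is essential here, since otherwise the interval $[\underline l,\overline l]$ would reach the zero of $g$ and no positive minimum would survive.
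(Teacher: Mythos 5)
Your proof is correct and follows essentially the same route as the paper's: item 2 of Assumption~\ref{HypD} via convexity of $D_\Lambda$ (Remark~\ref{R514bis}), item 1 via the strict positivity $\kappa^\Lambda(2l_s)-2\kappa^\Lambda(l_s)>0$ from Remark~\ref{rem:kappa:gamma} combined with the equivalence in Lemma~\ref{L2}, and the inclusion $D_\Lambda(l)\subset\shd$ by bounding $\left|\frac{d\kappa_t(z)}{d\rho_t}\right|$ through the supremum of $|\kappa^\Lambda(zx)|$ over and the strictly positive infimum of $\kappa^\Lambda(2x)-2\kappa^\Lambda(x)$ on the compact interval $[\underline{l},\overline{l}]$. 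Your added emphasis on why the infimum is strictly positive (continuity plus pointwise positivity on a compactum, requiring $\underline{l}>0$) is exactly the point the paper states more tersely.
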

\begin{proof}
\begin{enumerate}
\item By item 1. of Remark \ref{R514bis}, $2 \in D$ and
so the second item of   Assumption~\ref{HypD} is verified.
Using Lemma~\ref{L2}, item 1. of Assumption~\ref{HypD} is verified if we
  show that $t\mapsto \rho_t(1)=\kappa_t(2)-2\kappa_t(1)$ is strictly
 increasing. Now
$\kappa_t(2)-2\kappa_t(1)=\int_0^t\left(\kappa^\Lambda(2l_s)-2\kappa^\Lambda(l_s)\right)ds.$
Inequality~(\ref{equa1}) and item 2. of Assumption \ref{HypWiener}  
%and Lemma~\ref{L2}
 imply that $\forall s\in[0,T]$,
$\kappa^\Lambda(2l_s)-2\kappa^\Lambda(l_s)>0,$ 
and consequently $t \mapsto \rho_t(1)$ is strictly increasing.
%In fact, $\Lambda$ has no deterministic increments. This shows Assumption~\ref{HypD}.
\item 
For $z \in D_\Lambda(l)$, by Lemma \ref{lemme27} and Remark~\ref{R514bis} 2.
 we have
\begin{eqnarray} \label{F55}
\left|\frac{d\kappa_t(z)}{d\rho_t}\right|
=\left|\frac{\kappa^\Lambda(zl_t)}{\kappa^\Lambda(2l_t)-
2\kappa^\Lambda(l_t)}\right|\leq
 \frac{\sup_{x\in[\underline{l} ,\overline{l}]}\vert \kappa^\Lambda(xz) \vert}
{\inf_{x\in [\underline{l}, \overline{l}]}\left
(\kappa^\Lambda(2x)-2\kappa^\Lambda(x)\right)}\ .
\end{eqnarray}
Previous supremum and infimum exist since $x\mapsto \kappa^\Lambda(zx)$ is continuous and it attains a maximum and a minimum on a compact interval.
So, $D_\Lambda(l) \subset \shd$ and
 Assumption~\ref{Hyp1bis} is verified because of point 1. in
 Remark~\ref{R514bis}.
\end{enumerate}
\end{proof}

\begin{remarque}\label{remarque412}
Suppose for a moment that
\begin{equation} \label{E4000}
2 I_0 \subset \{x \vert \underline{l}  x, \overline{l} x \in D_\Lambda \}.
\end{equation}
%in fact the whole Assumption will then hold since $I$ is compact.
\begin{enumerate}
\item That implies then $ 2 I_0 \subset D_\Lambda(l)$.
 Point 3. of Assumption  \ref{Hyp1}  follows by Lemma \ref{lemme27}.
% Proposition \ref{propo411}.
Item  2. of the same Assumption  
%\ref{Hyp1} are
is   also  verified. In fact, since 
$2 I_0  \subset D_\Lambda(l)$ and $2 \in  D_\Lambda(l)$
and because of the fact that  $D_\Lambda(l)$ is convex, we have
 $$ {\rm supp \Pi} \cup ({\rm supp \Pi}+1) \subset
 \frac{D_\Lambda(l)}{2} +  \frac{D_\Lambda(l)}{2} \subset 
D_{\Lambda}(l).$$
%because 
%of Remark~\ref{R514bis} 1. $(1 \in D_\Lambda(l))$
 %and the fact that
%  $D_\Lambda(l)$  is convex. 
 %$I_0 \cup (I_0 + 1) \subset I$.
The conclusion follows by  Proposition \ref{propo411} which says
 $ D_\Lambda(l) \subset \shd$.
\item From the  proof of Proposition \ref{propo411}, it follows that
\begin{eqnarray*}\frac{d\kappa_t(z)}
{d\rho_t}=\frac{\kappa^\Lambda(zl_t)}
{\kappa^\Lambda(2l_t)-2\kappa^\Lambda(l_t)}\ .
\end{eqnarray*}
 Admitting point 1. of Assumption~\ref{Hyp1}, 
then $I$ is compact.
%Since $t \mapsto \frac{d\kappa_t(z)}{d\rho_t}$ is continuous,
Taking into account \eqref{F55}, the fact that $1 \in D_\Lambda(l)$,
so $I \subset D_\Lambda(l)$, and that 
 $\kappa^\Lambda$ 
is continuous,
 point 4. of Assumption~\ref{Hyp1}
is verified.
% again for all cases provided that $I  \subset 
% D_{\varepsilon, r}$.
% In fact $I_0 \cup (I_0 + 1) \subset I$
\end{enumerate}
\end{remarque}
We consider again the same class 
of options as in previous subsections. 
To conclude the  verification of  Assumption~\ref{Hyp1}
it remains to show the following.
\begin{itemize}
\item $I_0$ is compact. This point will be trivially fulfilled
in the specific cases.
\item \eqref{E4000}.
%  $2 I_0 \subset D_\Lambda(l).$
\end{itemize}

%The only  point to establish will be in fact 
%\begin{equation} \label{E4000}
%I \subset \{x \vert \underline{l}  x, \overline{l} x \in D_\Lambda \}.
%\end{equation}
%in fact the whole Assumption will then hold since $I$ is compact.
%
\begin{example} \label{E517}
We keep in mind the call and put representations provided in
Section \ref{sec:payoff:laplace}.
\begin{enumerate}
\item $H=(S_T-K)_+$.
% Similarly to the case where $X$ is a L\'evy process, 
%we take the second representation of the European Call.
In this case $2 I_0  = \{2R,2 \}$ and \eqref{E4000} is verified,
since $R \in ]0,1[$.
\item $H=(K-S_T)_+$. Again, here $R<0$,
$2 I_0 = \{2R\}$. \\
Again, we only have to require that $D_\Lambda$
contains some negative values, which is the case for the three
examples introduced in Remark~\ref{remark47Bis}.
% Section \ref{LevyPC}.
Selecting $R$  in a proper way, \eqref{E4000} is fulfilled.
\end{enumerate}
\end{example}

\begin{corro}\label{THMREPLEVY}
We consider a process $X$  of the form $X_t = \int_0^t l_s d\Lambda_s$
under Assumption~\ref{HypWiener}.
 The FS decomposition of an option $H$ of the type $(\ref{FORM})$ 
and the related solution to the minimization problem 
are provided by Theorem~\ref{propo310}, Proposition~\ref{lemme38}
and Theorem~\ref{mainthm} together with the expressions given below.\\
For $z \in {\rm supp} \Pi,  t \in [0,T]$ we have
\begin{eqnarray*}\lambda_s&=&\frac{\kappa^\Lambda(l_s)}{\kappa^\Lambda(2 l_s)
-2\kappa^\Lambda(ls)}, \quad
\gamma(z,s) = \frac{\kappa^\Lambda((z+1)l_s)-
\kappa^\Lambda(zl_s)-\kappa^\Lambda(l_s)}{\kappa^\Lambda(2l_s)-
2\kappa^\Lambda(l_s)}, \\
  \eta(z,s)&=&\kappa^\Lambda(zl_s)-\frac{\kappa^\Lambda(l_s)}
{\kappa^\Lambda(2l_s)-2\kappa^\Lambda(l_s)} 
\left(\kappa^\Lambda((z+1)l_s)-\kappa^\Lambda(zl_s)-\kappa^\Lambda(l_s)\right).\end{eqnarray*}
Again, for convenience, if $z \notin {\rm supp} \Pi$ then 
we define 
$ \gamma(z, \cdot ) = \eta(z,\cdot) \equiv 0.$
\end{corro}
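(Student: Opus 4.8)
The plan is to treat this corollary as what it is: a pure specialization of the general machinery to the model $X_t=\int_0^t l_s\,d\Lambda_s$. The work therefore splits into two independent parts. First, verifying that the standing hypotheses under which Theorem~\ref{propo310}, Proposition~\ref{lemme38} and Theorem~\ref{mainthm} were proved do hold here, so that those statements may be invoked verbatim. Second, computing the three densities $\lambda$, $\gamma(z,\cdot)$ and $\eta(z,\cdot)$ explicitly by substitution. There is no genuinely new analytic content; the only care needed is in the well-definedness of the quotients.

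For the first part I would simply collect the verifications already carried out in this subsection. Proposition~\ref{propo411} establishes Assumptions~\ref{HypD} and~\ref{Hyp1bis} and shows $D_\Lambda(l)\subset\shd$; Remark~\ref{remarque412}, under condition~\eqref{E4000}, then yields points~2, 3 and~4 of Assumption~\ref{Hyp1}, while the compactness of $I_0$ together with~\eqref{E4000} itself is checked for the call and put payoffs in Example~\ref{E517}. Once all of Assumptions~\ref{HypD}, \ref{Hyp1bis} and~\ref{Hyp1} are in force, Theorem~\ref{propo310}, Proposition~\ref{lemme38} and Theorem~\ref{mainthm} apply without modification, and the FS decomposition of $H$ and the solution of the minimization problem~\eqref{problem2} are precisely the ones those statements furnish. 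No new argument is required here.

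For the second part I would start from Lemma~\ref{lemme27}, which gives $\kappa_t(z)=\int_0^t\kappa^\Lambda(z l_s)\,ds$ for every $z\in D_\Lambda(l)$, hence $\kappa_{dt}(z)=\kappa^\Lambda(z l_t)\,dt$, and from Remark~\ref{R514bis} item~2, which gives $d\rho_t=(\kappa^\Lambda(2 l_t)-2\kappa^\Lambda(l_t))\,dt$. Substituting these into the defining relations $\lambda_t=d\kappa_t(1)/d\rho_t$ (see~\eqref{alphau}), $\gamma(z,t)=d\rho_t(z,1)/d\rho_t$ with $\rho_t(z,1)=\kappa_t(z+1)-\kappa_t(z)-\kappa_t(1)$ (see~\eqref{gammaZT}), and $\eta(z,dt)=\kappa_{dt}(z)-\gamma(z,t)\kappa_{dt}(1)$ (see~\eqref{etaZT}), the common $dt$ factor cancels and the three announced expressions are read off directly; in particular the listed formula for $\eta(z,s)$ is the Radon--Nikodym density $\eta(z,ds)/ds=\kappa^\Lambda(z l_s)-\gamma(z,s)\kappa^\Lambda(l_s)$.

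The nearest thing to an obstacle, and the one point I would argue explicitly, is that these densities are legitimate, i.e. that their common denominator $\kappa^\Lambda(2 l_t)-2\kappa^\Lambda(l_t)$ is nonzero $ds$-a.e. This is supplied by Assumption~\ref{HypWiener}: since $\underline{l}=\inf l>0$ we have $l_t\ge\underline{l}>0$, and since $2\overline{l}\in D_\Lambda$ together with $0\in D_\Lambda$ and the convexity of $D_\Lambda$ forces $2l_t\in D_\Lambda$, inequality~\eqref{equa1} of Remark~\ref{rem:kappa:gamma} gives $\kappa^\Lambda(2 l_t)-2\kappa^\Lambda(l_t)>0$ for every $t$. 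Thus the quotients are finite, the stated expressions for $\lambda_s$, $\gamma(z,s)$ and $\eta(z,s)$ are valid, and the convention $\gamma(z,\cdot)=\eta(z,\cdot)\equiv 0$ for $z\notin{\rm supp}\,\Pi$ is merely the one already adopted in Theorem~\ref{propo310}. This completes the scheme of the proof.
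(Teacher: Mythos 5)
Your proposal is correct and follows essentially the same route as the paper, which gives no separate proof of this corollary but establishes it through the preceding development: Proposition~\ref{propo411}, Remark~\ref{remarque412} and Example~\ref{E517} for the verification of Assumptions~\ref{HypD}, \ref{Hyp1bis} and~\ref{Hyp1}, and the substitution of $\kappa_{dt}(z)=\kappa^\Lambda(zl_t)\,dt$ and $d\rho_t=(\kappa^\Lambda(2l_t)-2\kappa^\Lambda(l_t))\,dt$ into \eqref{alphau}, \eqref{gammaZT} and \eqref{etaZT} for the explicit densities. Your explicit check that the common denominator is strictly positive via Remark~\ref{rem:kappa:gamma} and Assumption~\ref{HypWiener} is exactly the point the paper relies on as well.
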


\subsection{Considerations about the L\'evy case}

%\begin{remarque} \label{R514ter}
If $l \equiv 1$ then $X$ coincides with the L\'evy process $ \Lambda$ and    
Assumption~\ref{HypWiener} is equivalent to Hubalek et alia 
Condition introduced in \cite{Ka06} i.e.
%\begin{enumerate}
%\item 
1. \quad $2\in D\,;$ \quad
2. $\kappa^\Lambda(2)-2\kappa^\Lambda(1)\neq 0\,.$ \\

In that case we have $D = D_\Lambda = D_\Lambda(l)$.
Therefore $\shd = D $
because
$\displaystyle \frac{d\kappa_t}{d\rho_t}(z)=\frac{1}
{\kappa^\Lambda(2)-2\kappa^\Lambda(1)}\kappa^\Lambda(z)$ for any 
$t \in [0,T], z\in D.$\\

We recall some cumulant and log-characteristic functions 
of some typical L\'evy processes.
\begin{remarque}\label{remark47Bis}
\begin{enumerate}
\item \underline{Poisson Case:}
If $X$ is a Poisson process with intensity $\lambda$,
 we have that $\kappa^\Lambda(z)=\lambda(e^z-1)$.
 Moreover, in this case the set $D_\Lambda=\mathbb{C}$. \\
\item \underline{NIG Case:}
This process was introduced by Barndorff-Nielsen in~\cite{bnh}.
Then $X$ is a L\'evy process with 
$X_1 \sim NIG(\alpha,\beta,\delta,\mu)$, with $\alpha>|\beta|>0$, $\delta>0$
 and $\mu \in \mathbb{R}$.
 We have $\kappa^\Lambda(z)=\mu z + \delta (\gamma_0-\gamma_z)$ and
 $\gamma_z=\sqrt{\alpha^2-(\beta+z)^2}$, 
$D_\Lambda = [-\alpha-\beta,\alpha-\beta]+i\mathbb{R}\,.$\\
\item \underline{Variance Gamma case:} Let $\alpha, \beta > 0, \delta \neq 0$.
If $X$ is a Variance Gamma process with
 $X_1 \sim VG(\alpha,\beta,\delta,\mu)$ with
 $\kappa^\Lambda(z)=\mu z + \delta Log\left(\frac{\alpha}
{\alpha-\beta z - \frac{z^2}{2}}\right)\,,$
where $Log$ is again the principal value complex logarithm
defined in Section 2.
The expression of $\kappa^\Lambda(z)$ can be found in
\cite{Ka06, madan} or also
~\cite{livreTankovCont},  table IV.4.5 in the particular case  $\mu = 0$.
 In particular an easy calculation shows that we need $z \in \mathbb{C}$ 
such that $Re(z) \in ]-\beta-\sqrt{\beta^2+2\alpha},-\beta+
\sqrt{\beta^2+2\alpha}[$ so that $\kappa^\Lambda(z)$ is well-defined
so that
$$ D_\Lambda =  ]-\beta-\sqrt{\beta^2+2\alpha},-\beta+
\sqrt{\beta^2+2\alpha}[ + i \R.$$
\end{enumerate}
\end{remarque}

\begin{remarque} \label{R5.88}
We come back to the examples introduced in Remark~\ref{remark47Bis}.
In all the three cases, 
%Assumption \ref{HypLevy}
Hubalek et alia Condition  is verified if $2 \in D$.
 This happens in the following situations:
\begin{enumerate}
\item always in the Poisson case;
\item if $\Lambda=X$ is a NIG process and if $2\le\alpha-\beta\,;$
\item if $\Lambda=X$ is a VG process and if $2<-\beta+\sqrt{\beta^2+2\alpha}\,.$
\end{enumerate}
\end{remarque}
Theorem~\ref{mainthm} allows to re-obtain the
 results stated in~\cite{Ka06}.
%They will appear as a particular case of Corollary
%\ref{THMREPLEVY}.

\begin{remarque}
\label{remarquePoisson}
If X is a Poisson process with parameter $\lambda > 0$
 then the quadratic error is zero. In fact, 
\begin{eqnarray*}
\kappa^\Lambda(z) &=& \lambda (\exp(z) -1)) \ ,\quad
\rho_t(y,z) = \lambda t(\exp(y) - 1)(\exp(z) -1) \\
\gamma(z,t) &=&  \frac{\kappa^\Lambda (z+1)-
\kappa^\Lambda(z)-\kappa^\Lambda(1)}{\kappa^\Lambda(2)-
2\kappa^\Lambda(1)} t =
\frac{\exp(z) -1}{e-1}
\end{eqnarray*}
imply that $\beta(y,z,t) = 0$ for every $y,z \in \C, t \in [0,T]$.

%%
%\begin{eqnarray*}
%\beta(y,z,t)&=&(e^{-(y+z)}-e^{-y}-e^{-z}+1)\lambda t-\lambda t \left(\frac{e^{-(z+1)}-e^{-z}-e^{-1}+1}{e^{-2}-1-2(e^{-1}-1)}\right)\left(e^{-(y+1)}-e^{-y}-e^{-1}+1\right)\\ \\
%&&
%\lambda t \left((e^{-y}-1)(e^{-z}-1)-\left(\frac{(e^{-z}-1)(e^{-1}-1)(e^{-y}-1)(e^{-1}-1)}{(e^{-1}-1)^2}\right)\right)\\ \\
%&=&0\ .
%\end{eqnarray*}
Therefore $J_0(y,z,t)\equiv0$. In particular all the options of type~(\ref{FORM})  are perfectly hedgeable.
\end{remarque}

\subsection{About some singular non-stationary models}
%%%%%%%%%%%%%%%%%%%%%%%%%%
Here, we consider some \textit{singular} models, in the sense that the cumulant generating function of the log-price process is not 
absolutely continuous with respect to (a.c. w.r.t.) Lebesgue measure. 
More precisely, let $(W_t)$ be a standard Brownian motion. A classical approach to model the volatility clustering effect consists in introducing the notion of \textit{trading  time} (as opposed to the real time) which accelerates or slows down the price process depending on the activity on the market. This virtual time is represented by  a change of time  $(\tau_{t})_{t\geq 0}$ and the log-price is then constructed by subordination  
i.e.  $X_t=W_{\tau(t)}$. 
Now, if the change of time $\tau$ is \textit{singular},
 then it can be proved that the log-price process $X$ is also \textit{singular}. \\ 
This typically happens when  the change of time
 $\tau$, is obtained as the cumulative distribution function of a deterministic positive multifractal measure $d\tau(t)=d\psi(t)$, singular w.r.t. Lebesgue measure. 
%defined on a given time interval $[0,T]$. 
%
Multifractal measures
were introduced  in the physical sciences to model turbulent flows~\cite{Mandelbrot72}. 
%They can be built by a recursive procedure called a \textit{multiplicative cascade}.
%
More recently, in~\cite{Calvet97}, the authors used this construction precisely for modeling financial volatility. But their model, the \textit{Multifractal Model of Asset Returns} (MMAR),  relies on  a random (and not deterministic) multifractal
measure and  is hence beyond the framework of this paper. \\
Below, we consider two examples of {\it singular}
 non-stationary log-price models based on such (deterministic or random)
 {\it singular} changes of time. 
%%%%%%%%%%%%%%%%%%%%%%%%%%%%%

\begin{enumerate}
\item \underline{Deterministic change of time (log-Gaussian continuous process): }
Let us consider the log-price process $X$ such that $X_t=W_{\psi(t)}$, where $\psi:\mathbb{R}_+\rightarrow\mathbb{R}_+$ is a
strictly increasing function, including the pathological case where
$\psi^{'}_t=0$ a.e. 
%%%%%%%%%%%%%%%%%%%%%%%%%%%
For $z \in D = \C$, we have
$ \mathbb{E}[e^{z X_t}]=\mathbb{E}[e^{z W_{\psi(z)}}]=e^{\kappa_t(z)}=e^{\frac{z^2}{2}\psi(t)},$
so that 
$\kappa_t(z)=\frac{z^2}{2}\psi(t)\ ,\quad
 \rho_t =
% \kappa_{t}(2)-2\kappa_{t}(1)=
\psi(t).$
%\quad \textrm{and}\quad \kappa_{t}
%(z+1)-\kappa_{t}(z)-\kappa_{t}(1)=z\psi(t)\ .
Notice that $d\kappa_t(z)$ is not necessarily a. c. w.r.t. Lebesgue
measure and that this is verified as soon as $d\psi(t) \ll dt$.
%  Then 
% \begin{eqnarray*}\left\langle M,M\right\rangle_t=\int_0^t
%   S_{s-}^2\psi(ds) \quad\textrm{and}\quad A_t=\int_0^t \frac{1}{2S_{s-}}d\left\langle M,M\right\rangle_s=\int_0^t \frac{1}{2}S_{s-}\psi(ds)\ ,\end{eqnarray*}
% and the MVT process verifies 
% \begin{eqnarray*}K_t=\int_0^t\frac{1}{4S^2_{s-}}d\left\langle M,M\right\rangle_s=\int_0^t\frac{1}{4}\psi(ds)=\frac{1}{4}\psi(t)\ .\end{eqnarray*}
Assumption~\ref{HypD} 1. is verified since $\psi$ is strictly increasing;
Assumption~\ref{HypD} 2., Assumption~\ref{Hyp1bis}
and  Assumption~\ref{Hyp1} are verified since $D = \shd = \C$ and
$\frac{d\kappa_t(z)}{d\rho_t} = \frac{z^2}{2}$ is continuous.
Consequently all the conditions to apply Theorem~\ref{mainthm} are
satisfied and
% so the function $\gamma(z,t)$ is equal to the Radon-Nykodim derivative of $\kappa_t(z+1)-\kappa_t(z)-\kappa_t(1)$ with respect to $\kappa_t(2)-2\kappa_t(1)$, so
\begin{eqnarray*}\gamma(z,t)=z\ ,\quad
  \eta(z,t)=\frac{\psi(t)}{2}(z^2-z) 
\quad \textrm{and}\quad \lambda_t \equiv \frac{1}{2}.\end{eqnarray*}
% Hence we can compute the variance-optimal hedging strategy $\varphi$ and the variance-optimal initial capital $V_0$ in this case
% %
% \begin{eqnarray*}\varphi_t=\xi_t+\frac{1}{2S_{t-}}(H_{t-}-V_0-\int_0^t\varphi_sdS_s) \end{eqnarray*}
% and
% \begin{eqnarray*}H_t=\int_\mathbb{C} e^{\int_t^T\eta(z,ds)}S_t^z\Pi(dz)=\int_\mathbb{C}\exp\left\{\frac{z^2-z}{2}(\Psi(T)-\Psi(t))\right\}S_t^z\Pi(dz)\end{eqnarray*}

% \begin{eqnarray*}\xi_t=\int_\mathbb{C} \gamma(z,t)e^{\int_t^T\eta(z,ds)}S_{t-}^{z-1}\Pi(dz)=\int_\mathbb{C}z\exp\left\{\frac{z^2-z}{2}(\Psi(T)-\Psi(t))\right\}S_{t-}^{z-1}\Pi(dz)\end{eqnarray*}
%
\begin{remarque}
\label{remarqueGaussian}
Calculating $\beta(y,z,t)$ in~\eqref{eq:Beta}, we find $\beta\equiv 0$. Therefore here also the quadratic error is zero. This confirms the fact that the market is \textbf{complete}, at least for the considered class of options.
\end{remarque}
\item \underline{Random change of time: }
  Let $(\theta_t)_{t\geq 0}$ denote an increasing L\'evy process such that $\theta_1$ follows an Inverse Gaussian distribution with parameters $\delta>0$ and  $\gamma>0$.  Now, let us consider $Y$ the process %obtained by subordinating the Brownian motion $W$ by $\theta$ i.e.   
  such that $Y_t=\mu t +\beta \theta(t)+ W_{\theta(t)}$, for all $t\in[0,T]$, with  $\beta\,,\mu \in \mathbb{R}$. Then one can prove that $Y$ is a NIG L\'evy process with $Y_1\sim NIG(\alpha =\sqrt{\gamma^2+\beta^2}, \beta, \delta, \mu)$. Finally, let us consider the log-price process $X$ such that 
$X_t=W_{\tau_{t}}$, where $\tau_t=\theta_{\psi(t)}$ and $\psi$ is the cumulative distribution of a deterministic multifractal measure on $[0,T]$. 
 Hence, the cumulant generating function of $X_t$ is singular w.r.t. Lebesgue measure and is given by  $\kappa_t(z)=[\mu z + \delta
 (\gamma_0-\gamma_z)]\psi(t)$ with 
 $\gamma_z=\sqrt{\alpha^2-(\beta+z)^2}$, for all $z\in
D:= D_{X_t} = [-\alpha-\beta,\alpha-\beta]+i\mathbb{R}\,.$
\end{enumerate}

%%%%%%%%%%%%%%%%%%%%%%%%%%%%%

%%%%%%%%%%%%%%%%%%%%%%%%%%%%%%%%%%%%%%%
\section{Application to Electricity}
%%%%%%%%%%%%%%%%%%%%%%%%%%%%%%%%%%%%%%%
\label{sec:elec}

\subsection{Hedging electricity derivatives with forward contracts}
%%%%%%%%%%%%%%%%%%%%%%%%%%%%%%%%%%%%%%%%%%%%%%%%%%%%%%

Because of non-storability of electricity, no dynamic hedging strategy
can be performed on the spot market. Hedging instruments for electricity
derivatives are then futures or forward contracts.
For simplicity, we will assume that interest rates are deterministic and zero so that futures prices are equivalent to forward prices. 
 The value of a forward contract offering the fixed price $F_0^{T_d}$ at time $0$ for delivery of 1MWh at time $T_d$ is by definition of the forward price,  $S^{0,T_d}_0=0$. Indeed, there is no cost to enter at time $0$ the forward contract with the current market forward price $F_0^{T_d}$.  
Then, the value of the same forward contract $S^{0,T_d}$ at time $t\in
[0,T_d]$ is deduced by an argument of Absence of (static) Arbitrage as 
$S^{0,T_d}_t = e^{-r(T_d-t)}(F_t^{T_d} - F_0^{T_d})$. Hence,
 the dynamics of the hedging instrument $(S^{0,T_d}_t)_{0\leq t\leq T_d}$ is directly related (for deterministic interest rates) to the dynamics of forward 
prices $(F_t^{T_d})_{0\leq t\leq T_d}$.   
Consequently, in the sequel, when considering hedging on electricity markets, we will always suppose that the underlying is a forward contract   $(S^{0,T_d}_t)_{0\leq t\leq T_d}$ and we will focus on the dynamics of forward prices.

\subsection{Electricity price models for pricing and hedging application}
%%%%%%%%%%%%%%%%%%%%%%%%%%%%%%%%%%%%%%%%%%%%%%%%%%%%%%
Observing  market data, one can notice two main stylized features of 
electricity forward prices: 
%spot and 
\begin{itemize}
	\item Volatility term structure of forward prices: the volatility increases when the time to maturity decreases. Indeed, when the delivery date approaches, the flow of relevant information affecting the balance between electricity supply and demand increases and causes great variations in the forward prices. This maturity effect is usually referred to
as the \textit{Samuelson hypothesis}, it was first studied in~\cite{Sam65} and can be observed on Figure~\ref{fig:VolTerm}, in the case of electricity futures prices. 
	\item Non-Gaussianity of log-returns: log-returns can be
	  considered as Gaussian for long-term contracts but begin to show heavy tails for short-term contracts.  
	%with huge spikes on the Spot market. 
\end{itemize}
\begin{figure}[htbp]
\begin{center}
  \epsfig{figure=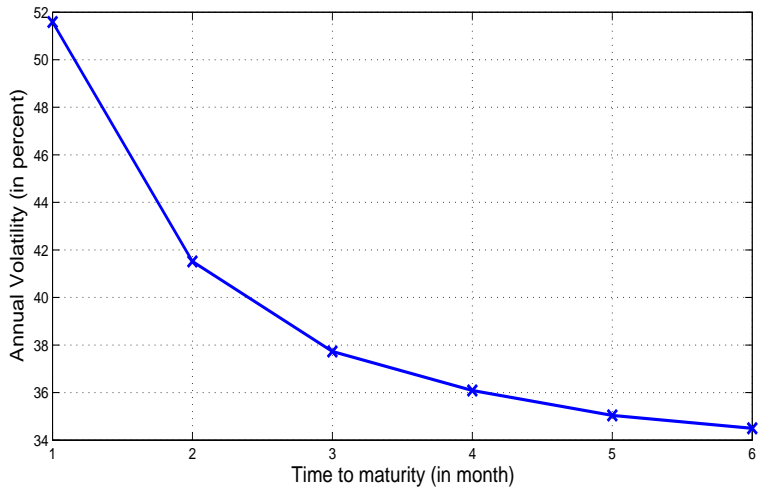,height=7cm, width=12cm}
\end{center}
\vspace{-.5cm}
\caption{{\small Volatility of electricity \textit{Month-ahead futures prices} w.r.t. to the time to maturity estimated on the French Power market in 2007. }}
\label{fig:VolTerm}
\end{figure}
Hence, a challenge is to be able to describe with a single model, both 
the non-Gaussianity on the short term and the volatility term structure of the forward curve. 
One reasonable attempt to do so is to consider the exponential L\'evy
 factor model, proposed in  
%Benth and Saltyte-Benth~
\cite{BS03} 
or \cite{oudjaneCollet}. The forward price given at time $t$ for delivery at time $T_d\geq t$, denoted $F_t^{T_d}$ is then modeled by a $p$-factors model, such that 
\begin{equation}
\label{eq:elec}
F_t^{T_d}=F_0^{T_d}\exp(m_t^{T_d}+\sum_{k=1}^pX_t^{k,{T_d}})\ ,\quad \textrm{for all}\ t\in [0,{T_d}]\ , \textrm{where}
\end{equation}
\begin{itemize}
	\item $(m_t^{T_d})_{0\leq t\leq {T_d}}$ is a real deterministic trend; 
	\item for any $k=1,\cdots p$, $(X_t^{k,{T_d}})_{0\leq t\leq {T_d}}$ is such
      that $X_t^{k,{T_d}}=\int_0^t \sigma_k e^{-\lambda_k ({T_d}-s)}
      d\Lambda^k_s$,
 where $\Lambda=(\Lambda^1,\cdots,\Lambda ^p)$ is a L\'evy process 
on $\mathbb{R}^d$, with
 $\mathbb{E}[\Lambda^k_1]=0$ and $Var[\Lambda^k_1]=1$; 
	\item $\sigma_k>0\,,\,\lambda_k \geq0\ ,$ are called respectively the \textit{volatilities} and the \textit{mean-reverting rates}. 
\end{itemize}
Hence,  forward prices are given as exponentials of 
additive processes with \textit{non-stationary increments}. 
%hen, the spot model is derived by setting $P_{T_d}=F_{T_d}^{T_d}$ and reduces to the exponential of a sum of possibly non-Gaussian Ornstein-Uhlenbeck processes. 
%
In practice, we consider the case of a one or a two factors model ($p=1$
 or $2$), where the first factor $X^1$ is a non-Gaussian 
additive process and the second factor $X^2$ is a Brownian motion with $\sigma_1\gg \sigma_2$. 
Notice that this kind of model was originally developed and studied in
details for interest rates in~\cite{Raible98}, as an extension of
 the Heath-Jarrow-Morton model where the Brownian motion has
 been replaced by a general L\'evy process. 
%Recent contributions  in the subject are \cite{tappe, ruediger}.
\\
%%%%%%%%%%%%%%%%%%%%%%%%%%%%%%%%%%%%%%%%%%%%%%%%%%%%%%%%%%%%%%%%%%%%%%%%%%%%
%The paper of Hubaleck et al. considers the variance optimal hedging problem in the case where the underlying claim is modellized by an exponential L\'evy process. The aim of this paper is to generalize their results to the case of exponentials of processes with independent increments. This generalization is motivated by the problem of pricing and hedging electricty derivatives.  
Of course, this modeling procedure~(\ref{eq:elec}), implies
incompleteness of the market. Hence, if we aim at pricing and hedging a
European call on a forward with maturity $T\leq T_d$, it won't be possible,
in general, to hedge perfectly the payoff $(F^{T_d}_T-K)_+$ with a hedging
portfolio of forward contracts.
% with maturity $T_d$, with values $(S_t^{0,T_d})_{0\leq t\leq T_d}$. 
%(we recall that non-storability of electricty implies that it is not
%possible to hedge on the spot market, so that hedging instruments are
%composed by forward). 
Then, a natural approach could consist in looking for the variance
optimal initial capital and hedging portfolio. In this framework, the results of
Section~\ref{sec:expPII} generalizing the results of Hubalek \& al
in~\cite{Ka06} to the case of non stationary additive process can be
useful.
%  Similarly, some arithmetic models proposed in~\cite{Benth-Kallsen} for  electricity prices, consists of replacing the right-hand side of~(\ref{eq:elec}) by its logarithm. Hence, with this kind of models the results of Section~\ref{sec:FSPII} can also be useful. 

\subsection{The non Gaussian two factors model}
%%%%%%%%%%%%%%%%%%%%%%%%%%%%%%%%%%%%%%%%%%%%%%%

To simplify let us forget the superscript $T_d$ denoting the delivery period (since we will consider a fixed delivery period).
 We suppose that the forward price $F$ follows the two factors model
\begin{equation}
\label{eq:elec:2F}
F_t=F_0\exp(m_t+X^1_t+X^2_t)\ ,\quad \textrm{for all}\ t\in [0,T_d]\ ,
 \textrm{where}
\end{equation}
\begin{itemize}
\item $m$ is a real deterministic trend starting at $0$. 
It is supposed to be absolutely continuous w.r.t.  Lebesgue;
	\item  $X^1_t=\int_0^t \sigma_s e^{-\lambda (T_d-u)} d\Lambda_u$,
      where $\Lambda$ is a 
 L\'evy process on
 $\mathbb{R}$ with $\Lambda$ following a Normal Inverse Gaussian (NIG) 
distribution
% which was introduced by Bandorff--Nielsen~\cite{bnh} or
or a Variance Gamma  (VG) distribution. Moreover, we will assume that
  $\mathbb{E}[\Lambda_1]=0$ and $Var[\Lambda_1]=1$; 
	\item $X^2=\sigma_l W$ where $W$ is  a standard Brownian motion 
on $\mathbb{R}$;
\item $\Lambda$ and $W$ are independent;
	\item $\sigma_s$ and $\sigma_l$ standing respectively for the short-term volatility and long-term volatility. 
\end{itemize}
%
%Notice that if $\Lambda$ a NIG or VG LÂï¿ÂÂÂœvy process, then  it is automatically independent of the Brownian motion $W$ since it is a purely discontinuous process. 

\subsection{Verification of the assumptions}

The result below helps to extend Theorem~\ref{mainthm} to the case where
$X$ is a finite sum of independent  semimartingale additive processes,
each one verifying Assumptions~\ref{HypD},  \ref{Hyp1bis} and~\ref{Hyp1} 
for a given payoff $H = f(s_0 e^{X_T})$.
\begin{lemme}\label{LSum}
Let $X^1, X^2$ be two independent  semimartingale
additive processes with cumulant
generating functions $\kappa^i$ and related domains  
$D^i, \shd^i, i = 1,2$ characterized in Remark~\ref{CumGenPAI}
and \eqref{Eqshd}. Let $f: \C \rightarrow \C$ of the form~(\ref{FORM}). \\
For $X = X^1 + X^2$ with related domains $D, \shd$
and cumulant generating function $\kappa$, we have  the following.
\begin{enumerate}
\item $D = D^1 \cap D^2$.
\item $\shd^1 \cap \shd^2 \subset \shd$.
\item  If $X^1, X^2$ verify
Assumptions~\ref{HypD},  \ref{Hyp1bis} and~\ref{Hyp1},
then $X$ has the same property.
\end{enumerate}
\end{lemme}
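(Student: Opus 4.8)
The plan is to exploit independence so as to reduce everything to the additivity of the cumulant generating functions. First I would record that, since $X^1$ and $X^2$ are independent, for every $z$ with $Re(z)\in D^1\cap D^2\cap\R$ one has $\mathbb{E}[e^{zX_t}]=\mathbb{E}[e^{zX^1_t}]\,\mathbb{E}[e^{zX^2_t}]$, so that $\kappa_t(z)=\kappa^1_t(z)+\kappa^2_t(z)$ whenever both terms are finite. Because the two factors are strictly positive, $\mathbb{E}[e^{Re(z)X_t}]<\infty$ if and only if both $\mathbb{E}[e^{Re(z)X^i_t}]<\infty$; together with Remark~\ref{CumGenPAI} this yields item~1, $D=D^1\cap D^2$. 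Specializing the additivity of $\kappa$ to $z=2$ and $z=1$ gives the crucial identity $\rho_t=\rho^1_t+\rho^2_t$ for the reference measures (with $\rho^i_t=\kappa^i_t(2)-2\kappa^i_t(1)$), and more generally $d\kappa_t(z)=d\kappa^1_t(z)+d\kappa^2_t(z)$ as complex measures.

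The decomposition $d\rho_t=d\rho^1_t+d\rho^2_t$ into two non-negative measures provides the single analytic tool needed for the rest: each summand satisfies $d\rho^i_t\le d\rho_t$, hence $d\rho^i_t\ll d\rho_t$ with density $d\rho^i_t/d\rho_t\in[0,1]$. Combined with $d\kappa^i_t(z)\ll d\rho^i_t$ (Proposition~\ref{remarque314Bis}), the chain rule gives, $d\rho_t$-a.e.,
$$
\frac{d\kappa^i_t(z)}{d\rho_t}=\frac{d\kappa^i_t(z)}{d\rho^i_t}\,\frac{d\rho^i_t}{d\rho_t},\qquad
\left|\frac{d\kappa^i_t(z)}{d\rho_t}\right|\le\left|\frac{d\kappa^i_t(z)}{d\rho^i_t}\right|.
$$
For item~2, given $z\in\shd^1\cap\shd^2$, I would write $\frac{d\kappa_t(z)}{d\rho_t}=\sum_i\frac{d\kappa^i_t(z)}{d\rho^i_t}\frac{d\rho^i_t}{d\rho_t}$, use $(d\rho^i_t/d\rho_t)^2\le d\rho^i_t/d\rho_t$, and integrate: since $\frac{d\rho^i_t}{d\rho_t}\,d\rho_t=d\rho^i_t$, each term is bounded by $\int_0^T|d\kappa^i_t(z)/d\rho^i_t|^2\,d\rho^i_t<\infty$. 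A $2|a|^2+2|b|^2$ estimate then shows $\int_0^T|d\kappa_t(z)/d\rho_t|^2\,d\rho_t<\infty$, i.e. $z\in\shd$.

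For item~3 I would verify the three assumptions one by one. Assumption~\ref{HypD} is immediate: $2\in D=D^1\cap D^2$ by item~1, while for the absence of deterministic increments I would use Lemma~\ref{L2}, noting that $\rho_t=\rho^1_t+\rho^2_t$ is strictly increasing because $\rho^1_t$ is (the individual $X^1$ has no deterministic increments) and $\rho^2_t$ is non-decreasing. Assumption~\ref{Hyp1bis}, i.e. $1\in\shd$, follows from item~2 applied to $z=1\in\shd^1\cap\shd^2$, and likewise the membership $z,z+1\in\shd$ in item~2 of Assumption~\ref{Hyp1}. The sets $I_0$ and $I$ depend only on $\Pi$, so their compactness (item~1 of Assumption~\ref{Hyp1}) is inherited verbatim, and $I_0\subset D/2$ (item~3) follows from $2I_0\subset D^1\cap D^2=D$. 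The only point requiring the domination tool again is item~4: from $|d\kappa_t(x)/d\rho_t|\le|d\kappa^1_t(x)/d\rho^1_t|+|d\kappa^2_t(x)/d\rho^2_t|$ one gets $\sup_{x\in I}\|d\kappa_t(x)/d\rho_t\|_\infty\le\sum_i\sup_{x\in I}\|d\kappa^i_t(x)/d\rho^i_t\|_\infty<\infty$.

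The computations are all routine; the one place demanding care — and the heart of the lemma — is the passage from the individual reference measures $\rho^i$ to the common one $\rho$. The whole argument hinges on the fact that $\rho=\rho^1+\rho^2$ decomposes into non-negative measures, which forces $d\rho^i/d\rho\le 1$ and thereby turns the $L^2(\rho^i)$ and $L^\infty$ bounds for each factor into the corresponding bounds for the sum. I would make sure the chain rule for Radon--Nikodym derivatives is invoked only where the densities are defined, and that the identity $\rho_t(z,y)=\rho^1_t(z,y)+\rho^2_t(z,y)$ (used implicitly through $\rho_t=\rho_t(1)$) is justified by the additivity of $\kappa$.
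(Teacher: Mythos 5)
Your proof is correct and follows essentially the same route as the paper: independence gives $\kappa=\kappa^1+\kappa^2$ and hence $D=D^1\cap D^2$ and $\rho=\rho^1+\rho^2$, the bound $d\rho^i/d\rho\le 1$ combined with a $2|a|^2+2|b|^2$ estimate yields $\shd^1\cap\shd^2\subset\shd$, and the assumptions are then checked directly. The only difference is that you spell out the verification of item~3, which the paper dismisses as following ``by inspection''; your details (strict monotonicity of $\rho=\rho^1+\rho^2$ via Lemma~\ref{L2}, and the $L^\infty$ bound for point~4 of Assumption~\ref{Hyp1}) are accurate.
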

\begin{proof} 
Since $X^1, X^2$ are independent and taking into account
Remark~\ref{CumGenPAI} we obtain 1. and 
$  \kappa_t(z) = \kappa_t^1(z) + \kappa^2(z), \ \forall z \in D.$
 We denote by $\rho^i, i = 1,2$, the reference variance
 measures defined
in Remark~\ref{remarque313}. Clearly 
$\rho = \rho^1 + \rho^2$ and  $d\rho^i \ll  d\rho$
with $\Vert \frac{d\rho^i}{d\rho} \Vert_{\infty} \le 1$.\\
If $z \in \shd^1 \cap \shd^2$, we can write 
\begin{eqnarray*}
\int_0^T\left|\frac{d\kappa_t(z)}{d\rho_t}\right|^2d\rho_t&\leq&2\int_0^T\left|\frac{d\kappa^1_t(z)}{d\rho^1_t}\frac{d\rho^1_t}{d\rho_t}\right|^2d\rho_t 
+2\int_0^T\left|\frac{d\kappa^2_t(z)}{d\rho^2_t}\frac{d\rho^2_t}{d\rho_t}\right|^2d\rho_t \\
&=&2\int_0^T\left|\frac{d\kappa^1_t(z)}{d\rho^1_t}\right|^2\frac{d\rho^1_t}{d\rho_t}d\rho^1_t 
+2\int_0^T\left|\frac{d\kappa^2_t(z)}{d\rho^2_t}\right|^2\frac{d\rho^2_t}{d\rho_t}d\rho^2_t \\
&\leq& 2\left(\int_0^T\left|\frac{d\kappa^1_t(z)}{d\rho^1_t}\right|^2d\rho^1_t+\int_0^T\left|\frac{d\kappa^2_t(z)}{d\rho^2_t}\right|^2d\rho^2_t\right)\ .
\end{eqnarray*}
This concludes the proof of $\shd^1 \cap \shd^2 \subset \shd$ and
therefore of the of
Point 2.  \\
Finally Point 3. follows then by inspection.
\end{proof}

%%%%%%%%%%%%%%%%%%%%%%%%%%%%%%%%%%%%%%%%%%%%%%%%%
With the two factors model, the forward price $F$  is then given
 as the exponential of an additive process,
 $X$, such that for all $t\in [0,{T_d}]$, 
\begin{equation}
\label{eq:X}
X_t=m_t+X^1_t+X^2_t=m_t+\sigma_s 
\int_{0}^t e^{-\lambda ({T_d}-u)}d\Lambda_u+\sigma_{l}W_t\ .
\end{equation}
%Assume that  $\sigma_s<{\displaystyle \frac{C}{2}}$, where 
%$\mathbb{E}[\exp(C \Lambda_1)]<\infty$. 
For this model, we formulate the following assumption.
\begin{Hyp}  \label{AssuElec}
\begin{enumerate}
\item $ 2 \sigma_s \in D_\Lambda$.
\item If $\sigma_l = 0$, we require $\Lambda$ not to have deterministic
increments.
%\item We define $\varepsilon = \sigma_s e^{-\lambda T_d},
% \quad r = 2 \sigma_s$.
\item $f: \C \rightarrow \C$ is of the type~(\ref{FORM})
 fulfilling \eqref{E4000}.
\end{enumerate}
\end{Hyp}
\begin{propo} \label{PElec61}
%We define $\varepsilon = \sigma_s e^{-\lambda T_d}, \quad r = 2
%\sigma_s$.
\begin{enumerate} 
\item The cumulant generating function of $X$ defined by~(\ref{eq:X}), 
 $\kappa:[0,{T_d}]\times D\rightarrow \mathbb{C}$ is such that 
for all $z \in    D_\Lambda(\sigma_s)$ and
%left\{x\in \mathbb{R}\,|\,
% x\sigma_s\in D_\Lambda \right \} + i \R $, 
  for all $t\in [0,T_d]$, 
% where $D=\left\{x\in \mathbb{R}\,|\, x\sigma_s\in D_\Lambda\right\}$ 
%such that for all $t\in [0,T]$, 
\begin{equation} \label{99}
\kappa_t(z)=z m_t+\frac{z^2\sigma^2_{l}t}{2}+\int_0^t \kappa^\Lambda(z\sigma_s e^{-\lambda ({T_d}-u)})du \ .
\end{equation}
In particular for fixed $z \in D_\Lambda(\sigma_s)$,
 $t \mapsto \kappa_t(z)$ is absolutely continuous w.r.t. 
 Lebesgue  measure.
\item  Under Assumption \ref{AssuElec}, Assumptions~\ref{HypD},
~\ref{Hyp1bis} and  
\ref{Hyp1} are fulfilled.
\end{enumerate}
\end{propo}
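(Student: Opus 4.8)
The plan is to exploit the additive decomposition $X = m + X^1 + X^2$ with $X^1_t = \sigma_s\int_0^t e^{-\lambda(T_d-u)}\,d\Lambda_u$ and $X^2 = \sigma_l W$, together with the independence of $\Lambda$ and $W$ and the determinism of $m$. For item~1 I would first factorize: since $m$ is deterministic and $X^1, X^2$ are independent, $\mathbb{E}[e^{zX_t}] = e^{zm_t}\,\mathbb{E}[e^{zX^1_t}]\,\mathbb{E}[e^{zX^2_t}]$, whence $\kappa_t(z) = zm_t + \kappa^{X^1}_t(z) + \kappa^{X^2}_t(z)$ on the common domain. The Gaussian contribution is $\kappa^{X^2}_t(z) = z^2\sigma_l^2 t/2$ since $X^2_t\sim\mathcal N(0,\sigma_l^2 t)$, and the jump contribution follows from Lemma~\ref{lemme27} applied to the bounded Borel integrand $l_u = \sigma_s e^{-\lambda(T_d-u)}$, giving $\kappa^{X^1}_t(z) = \int_0^t\kappa^\Lambda(z\sigma_s e^{-\lambda(T_d-u)})\,du$ for $z$ in the domain $D_\Lambda(l)$; here $\underline l = \sigma_s e^{-\lambda T_d}$ and $\overline l = \sigma_s$, and $2\sigma_s\in D_\Lambda$ together with convexity of $D_\Lambda$ and $0\in D_\Lambda$ guarantees this domain contains the points that matter. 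Summing the three terms yields~\eqref{99}, and absolute continuity of $t\mapsto\kappa_t(z)$ w.r.t. Lebesgue follows term by term ($m$ a.c. by hypothesis, the quadratic term trivially, and the last term being an indefinite integral of a bounded integrand).

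For item~2 the engine is Lemma~\ref{LSum}, which reduces the validity of Assumptions~\ref{HypD}, \ref{Hyp1bis} and~\ref{Hyp1} for $X$ to the corresponding properties of its independent factors, the deterministic trend being harmless. A first useful observation is that $m$ cancels in $\kappa_t(2)-2\kappa_t(1)$, so the reference measure splits as $\rho_t = \rho^1_t + \rho^2_t$ with $\rho^2_t = \sigma_l^2 t$ and $\rho^1_t = \int_0^t\big[\kappa^\Lambda(2\sigma_s e^{-\lambda(T_d-u)})-2\kappa^\Lambda(\sigma_s e^{-\lambda(T_d-u)})\big]\,du$, both nondecreasing. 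Assumption~\ref{HypD}.2 follows since $D = D^1\cap\C = D^1$ and $2\in D_\Lambda(l)\subset D^1$ by the same convexity argument as above. The integrability and density requirements of Assumptions~\ref{Hyp1bis} and~\ref{Hyp1} I would obtain factorwise: for the Wiener-integral factor by verifying Assumption~\ref{HypWiener} (note $\underline l=\sigma_s e^{-\lambda T_d}>0$ and $2\overline l = 2\sigma_s\in D_\Lambda$) and invoking Proposition~\ref{propo411}, Remark~\ref{remarque412} and Example~\ref{E517}, where Assumption~\ref{AssuElec}.3, i.e.~\eqref{E4000}, supplies the payoff condition; for the Gaussian factor by quoting the log-Gaussian computation of Section~\ref{SecExamples}, where $D=\shd=\C$ and the relevant densities are constant. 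Lemma~\ref{LSum} points~2 and~3 then transfer these to $X$.

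The delicate step, which I would treat most carefully, is Assumption~\ref{HypD}.1 (absence of deterministic increments), because Assumption~\ref{AssuElec} only forces $\Lambda$ to have none in the case $\sigma_l=0$. By Lemma~\ref{L2} this is equivalent to strict monotonicity of $\rho$, and a short case analysis settles it. If $\sigma_l\neq 0$, then $\rho^2_t=\sigma_l^2 t$ is already strictly increasing, so $\rho$ is; should $\Lambda$ in addition have deterministic increments, then $X^1$ is a deterministic a.c. process which I would absorb into $m$, reducing $X$ to the log-Gaussian model with a.c. drift. If $\sigma_l=0$, the jump factor carries all the randomness and strict monotonicity of $\rho=\rho^1$ follows from Remark~\ref{rem:kappa:gamma}, i.e. $\kappa^\Lambda(2\sigma_s e^{-\lambda(T_d-u)})-2\kappa^\Lambda(\sigma_s e^{-\lambda(T_d-u)})>0$, using Assumption~\ref{AssuElec}.2 and $\underline l>0$. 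In each case the genuine factor or factors satisfy all three assumptions, so Lemma~\ref{LSum} delivers Assumptions~\ref{HypD}, \ref{Hyp1bis} and~\ref{Hyp1} for $X$, which completes the proof.

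The main obstacle I anticipate is precisely this reconciliation of the no-deterministic-increments condition across the factors, since Lemma~\ref{LSum}~3 nominally asks each summand to satisfy Assumption~\ref{HypD} while Assumption~\ref{AssuElec} imposes it on $\Lambda$ only when $\sigma_l=0$; the remedy is to verify Assumption~\ref{HypD}.1 directly for $X$ through the strict monotonicity of $\rho=\rho^1+\rho^2$ and to route the degenerate sub-cases to the appropriate single-factor analysis, rather than applying the sum lemma blindly.
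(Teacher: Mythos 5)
Your proof follows essentially the same route as the paper's: decompose $X$ into the independent factors $X^1$ (the Wiener integral, handled via Assumption~\ref{HypWiener}, Proposition~\ref{propo411} and Remark~\ref{remarque412}) and $m+X^2$ (the log-Gaussian part, for which $D^2=\shd^2=\C$ and the densities are constant), and then combine the factors with Lemma~\ref{LSum}. The only difference is that the paper explicitly writes the proof only under the hypothesis that $\Lambda$ has no deterministic increments, remarking that the general case "could be easily adapted", whereas you actually carry out that adaptation --- absorbing a deterministic $\Lambda$ into the absolutely continuous drift and verifying Assumption~\ref{HypD}.1 directly for $X$ through the strict monotonicity of $\rho=\rho^1+\rho^2$ --- which is a correct and welcome completion of the argument.
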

\begin{proof}
We set $\tilde X^2 = m + X^2$. We observe that 
$ D^2 = \shd^2 = \C, \quad \kappa^2_t(z) = \exp(z m_t + z^2 \sigma_l^2 
\frac{t}{2}). $
We recall that $\Lambda$ and $W$ are independent so that
$\tilde X^2$ and $X^1$ are independent.
%\\
For clarity, we only write the proof under the hypothesis that
$\Lambda$ has no deterministic increments,
the general case could be easily adapted.
$X^1$ is a process of the type studied at Section~\ref{ExpWiener};
it verifies Assumption~\ref{HypWiener} and  $D_\Lambda(l)$
contains $D_\Lambda(\sigma_s)$.

According to Proposition~\ref{propo411},
Remark~\ref{remarque412} and  \eqref{E4000}
it follows that Assumptions~\ref{HypD},~\ref{Hyp1bis} and  
\ref{Hyp1}
are verified for $X^1$. 
%\\
Both statements 1. and 2. are now a consequence of 
Lemma~\ref{LSum}. 
\end{proof}
% \begin{remarque} \label{RExample}
% For examples of $f$ fulfilling \eqref{E4000}, we refer to Example 
% \ref{E517}.
% \end{remarque}

The solution to the mean-variance problem is provided by Theorem~\ref{mainthm}.
\begin{thm} We suppose Assumption~\ref{AssuElec}.
The variance-optimal capital $V_0$ and the variance-optimal
 hedging strategy $\varphi$,
 solution of the minimization problem $(\ref{problem2})$, are given by
 Theorem~\ref{mainthm} and Theorem~\ref{propo310}, 
Proposition~\ref{lemme38}
together with the expressions given below:
%\begin{eqnarray}V_0=H_0\end{eqnarray}
%and the implicit expression
%\begin{eqnarray}\varphi_t=\xi_t+\frac{\lambda_t
%  }{S_{t-}}(H_{t-}-V_0-\int_0^t\varphi_sdS_s),\end{eqnarray}
%where 
%the processes $(H_t)$,$(\xi_t)$ and $(\lambda_t)$ are defined 
%as follows:
%
\begin{eqnarray*}
\widetilde{l}_t :&=&\sigma_s e^{-\lambda ({T_d}-t)},\\
\gamma(z,t) :&=&\frac{z\sigma^2_l+\kappa^\Lambda
((z+1)\widetilde{l}_t)-\kappa^\Lambda(z\widetilde{l}_t)-
\kappa^\Lambda(\widetilde{l})_t}
{\sigma^2_l+\kappa^\Lambda(2\widetilde{l}_t)-
2\kappa^\Lambda(\widetilde{l}_t)}, \\
\eta(z,t):&=&\left [ zm_t+\frac{z^2\sigma^2_l}{2}+
\kappa^\Lambda(z\widetilde{l}_t)
-\gamma(z,t) \big
(m_t+\frac{\sigma^2_l}{2}+\kappa^\Lambda(\widetilde{l}_t)\big )
\right]\,dt\ ,
\\
%\end{eqnarray*}
%\begin{eqnarray*}
\lambda_t &=&\frac{m_t+\frac{\sigma_l^2}{2}+
\kappa^\Lambda(\widetilde{l}_t)}{\sigma^2_l+
\kappa^\Lambda(2\widetilde{l}_t)-2\kappa^\Lambda(\widetilde{l}_t)}. 
\end{eqnarray*}
\end{thm}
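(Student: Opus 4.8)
The plan is to recognize that this theorem is a direct specialization of Theorem~\ref{mainthm} to the concrete two-factor price process~\eqref{eq:X}, so the whole argument splits into two parts: first checking that the abstract hypotheses hold, and then carrying out the explicit computation of the three process-dependent quantities $\gamma(z,t)$, $\eta(z,t)$ and $\lambda_t$.

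For the first part I would simply invoke Proposition~\ref{PElec61}: under Assumption~\ref{AssuElec} it guarantees that the process $X$ defined in~\eqref{eq:X} is a semimartingale additive process satisfying Assumptions~\ref{HypD}, \ref{Hyp1bis} and~\ref{Hyp1}. This is where the genuine structural work lives, and it has already been done by combining the Wiener-integral analysis of Section~\ref{ExpWiener} (applied to the factor $X^1$) with the additivity Lemma~\ref{LSum} for the independent sum $X = \tilde X^2 + X^1$, where $\tilde X^2 = m + X^2$. Once these assumptions hold, Theorem~\ref{mainthm} applies verbatim and yields $V_0 = H_0$ together with the implicit expression for $\varphi_t$, so nothing new need be proved at this level; one simply reads off the statement, complemented by Theorem~\ref{propo310} and Proposition~\ref{lemme38} for the FS decomposition.

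For the second part I would start from formula~\eqref{99} of Proposition~\ref{PElec61}, namely $\kappa_t(z) = z m_t + \frac{z^2 \sigma_l^2 t}{2} + \int_0^t \kappa^\Lambda(z \widetilde l_u)\,du$ with $\widetilde l_t = \sigma_s e^{-\lambda(T_d - t)}$, and exploit the fact (also from Proposition~\ref{PElec61}) that each $t \mapsto \kappa_t(z)$ is absolutely continuous with respect to Lebesgue measure. All Radon-Nikodym derivatives with respect to $d\rho_t$ then reduce to quotients of ordinary Lebesgue densities. Concretely, I would compute
\[
\frac{d\rho_t}{dt} = \frac{d}{dt}\big(\kappa_t(2) - 2\kappa_t(1)\big) = \sigma_l^2 + \kappa^\Lambda(2\widetilde l_t) - 2\kappa^\Lambda(\widetilde l_t),
\]
which is the common denominator of all three formulas, and then differentiate $\rho_t(z,1) = \kappa_t(z+1) - \kappa_t(z) - \kappa_t(1)$ to get the numerator of $\gamma(z,t)$: the $m$-terms cancel and the quadratic $\sigma_l^2$-terms contract to $z\sigma_l^2$, yielding exactly the stated $\gamma$. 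Substituting $\gamma$ and the Lebesgue density of $\kappa_{dt}(z) - \gamma(z,t)\kappa_{dt}(1)$ into the definition~\eqref{etaZT} of $\eta(z,dt)$, and of $\kappa_{dt}(1)/d\rho_t$ into~\eqref{alphau}, gives the remaining two expressions.

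The computation is entirely routine; the only point requiring care is the bookkeeping between the measure $\rho_{dt}$ and Lebesgue measure. In particular the symbol $m_t$ appearing in the displayed formulas for $\eta$ and $\lambda$ must be read as the Lebesgue density $\frac{dm_t}{dt}$ of the (absolutely continuous) trend $m$, so that $\kappa_{dt}(z)$ has density $z\frac{dm_t}{dt} + \frac{z^2\sigma_l^2}{2} + \kappa^\Lambda(z\widetilde l_t)$. With this reading the three formulas follow immediately, and there is no substantial obstacle beyond the verification already encapsulated in Proposition~\ref{PElec61}.
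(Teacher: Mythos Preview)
Your proposal is correct and matches the paper's approach: the paper gives no explicit proof for this theorem, treating it as an immediate specialization of Theorem~\ref{mainthm} (together with Theorem~\ref{propo310} and Proposition~\ref{lemme38}) once Proposition~\ref{PElec61} has verified Assumptions~\ref{HypD}, \ref{Hyp1bis} and~\ref{Hyp1}, with the displayed formulas obtained by differentiating~\eqref{99}. Your observation that the symbol $m_t$ in the displayed expressions for $\eta$ and $\lambda$ must be read as the Lebesgue density $dm_t/dt$ is a genuine clarification of a notational slip in the statement.
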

\begin{remarque} 
\label{R65Levy}
%%%%%%%%%%%%%%%%%%%%%%
Previous formulae are practically exploitable numerically.
 The last condition to be checked is 
\begin{eqnarray} \label{E731}
2 \sigma_s \in D_\Lambda.
\end{eqnarray}
%where $\mathbb{E}[\exp(C \Lambda_1)]<\infty$. We make use Remark 
%\ref{remark47Bis}
%
%In our classical examples, this is always verified.
\begin{enumerate}
\item $\Lambda_1$ is a Normal Inverse Gaussian random variable; if $ \sigma_s\leq \frac{\alpha-\beta}{2}$ then $(\ref{E731})$ is verified.
\item $\Lambda_1$ is a Variance Gamma random variable then $(\ref{E731})$ is verified; if for instance $\sigma_s<\frac{-\beta+\sqrt{\beta^2+2\alpha}}{2}\ .$
\end{enumerate}
\end{remarque}
%\newpage

%%%%%%%%%%%%%%%%%%%%%%%%%%
\section{Simulations}
%%%%%%%%%%%%%%%%%%%%%%%%%%
We are interested in comparing, in simulations, the Variance Optimal (VO) strategy to
 the Black-Scholes (BS) strategy when hedging a European call, with payoff $(S_T-K)_+$, 
on an underlying stock with log-prices $X_t=\log(S_t)$ that have independent but non Gaussian increments. More precisely, we assume that the underlying is an electricity forward contract 
$S_t=S^{0,T_d}_t=e^{-r(T_d-t)}(F_t^{T_d}-F_0^{T_d})$ 
with  delivery date $T_d$ equal to the maturity of the call
$T_d=T$. \\
First, we consider the case where the log-price
process $X$ is an exponential of a L\'evy process, 
continuing the analysis of~\cite{Ka06}, then we consider the non 
stationary case. 
We make use of different simulated data according to the underlying model, 
stationary in one case, non stationary in the second one.

Our simulations investigate two features which were not considered in
\cite{Ka06} (even in the stationary case): first the robustness of the
BS hedging strategy w.r.t.
 the underlying price model, second the sensitivity of the continuous
 VO strategy w.r.t.  to the discreteness of the trading dates. \\
 The VO strategy knows the real incomplete price model (with the real values of parameters) whereas the BS strategy assumes (wrongly) a log-normal price model (with the real values of mean and variance).  Of course, the VO strategy is by definition optimal, w.r.t. the quadratic norm. However, both strategies (VO and BS) are implemented in discrete time, hence our goal is precisely to analyze the hedging error outside of the theoretical framework of a continuously rebalanced portfolio. Moreover, we are interested in interpreting quantitatively the differences between both strategies w.r.t. to some characteristics such as the underlying log-returns distribution or the number of trading dates. 
%Hence, our simulations investigate two features which were not considered in
%\cite{Ka06} (even in the stationary cased), first the robustness of the
%BS hedging strategy w.r.t.
% the underlying price model, second the sensitivity of the continuous
% VO strategy w.r.t.  to the discreteness of the trading dates.
%
\\
The time unit is the year and the interest rate is zero in all our simulations. The initial value of the underlying is $s_0=100$ Euros.  The maturity of the option is $T=0.25$ i.e. three months from now. 

\subsection{Exponential L\'evy}

In this subsection, we simulate 
 the log-price process $X$ as a NIG L\'evy process with 
$X_1\,\sim\, NIG(\alpha,\beta,\delta,\mu)$. 
%We consider the problem of pricing a European call, with payoff $(S_T-K)_+$, where the underlying process $S$ is given as the exponential of a NIG L\'evy process i.e. for all $t\in [0,T]$, 
%$$
%S_t= s_0 e^{X_t}\ ,\quad \textrm{where $X$ is a L\'evy process with }
% \  X_1\,\sim\, NIG(\alpha,\beta,\delta,\mu)\ .
%$$
%%
%The time unit is the year and the interest rate is zero in all our simulations. The initial value of the underlying is $s_0=100$ Euros.  The maturity of the option is $T=0.25$ i.e. three months from now. 
Five different sets of parameters for the NIG distribution have been considered, going from the case of \textit{almost Gaussian} returns corresponding to standard equities, 
%($C=2$)
 to the case of \textit {highly non Gaussian} returns. 
 % corresponding to short-term electricity forwards. 
%$(C=0.08)$. 
The standard set of parameters 
%$(C=1)$ 
is  estimated on the \textit{Month-ahead} \textit{base} forward prices of the French Power market in 2007:  
\begin{equation}
\label{eq:para;levy}
\alpha=38.46 \,,\  \beta= -3.85\,,\ \delta = 6.40\,,\  \mu= 0.64 \ .
\end{equation}
Those parameters imply a zero mean, a standard deviation of $41\%$, a
skewness (measuring the asymmetry) of $-0.02$ and an excess kurtosis
(measuring the \textit{fatness} of the tails) of $0.01$. The other sets
of parameters are obtained by multiplying the parameter  $\alpha$ by a
coefficient $C$, ($\beta,\delta,\mu$) being such that the first three
moments are unchanged. Note that when $C$ grows to infinity the tails of
the NIG distribution get closer to the tails of the Gaussian
distribution. For instance, Table~\ref{tab:kurtosis} shows how the
excess kurtosis (which is zero for a Gaussian distribution) is modified
with the five values of $C$ chosen in our simulations. 
\begin{table}[htbp]
\begin{center}
\begin{tabular}{|c||c|c|c|c|c|c|}
\hline
 \textrm{Coefficient} & $C=0.08$ & $C=0.14$ & $C=0.2$ & $C=1$ &$C=2$ \\
\hline
\hline
 $\alpha$&  $3.08$&  $5.38$&  $7.69$&  $38.46$&  $76.92$ \\
\hline
 Excess kurtosis&  $1.87$&  $0.61$&  $0.30$&  $0.01$&  $4. \,10^{-3}$ \\
\hline
\end{tabular}
\end{center}
\vspace{-.1cm}
\caption{{\small Excess kurtosis of $X_1$ for different values of $\alpha$,  $(\beta, \delta, \mu)$ insuring the same three first moments.}}
\label{tab:kurtosis}
\end{table}
%
%We have compared on simulations the Variance Optimal strategy (VO) using the real NIG incomplete market model with the real values of parameters to the Black-Scholes strategy (BS) assuming Gaussian returns with the real values of mean and variance.  Of course, the VO strategy is by definition theoretically optimal in continuous time, w.r.t. the quadratic norm. However, both strategies are implemented in discrete time which can induce hedging performances surprisingly different from the theoretical continuous rebalancing framework. Moreover, we are interested in quantifying and analysing the differences between both strategies. 

\subsubsection{Strike impact on the initial capital and the hedging ratio}
%%%%%%%%%%%%%%%%%%%%%%%%%%%%
Figure~\ref{fig:strike} shows the initial capital (on the left graph) and the initial hedge ratio (on the right graph) produced by the VO 
and the BS strategies as functions of the strike, for three different sets of parameters $C=0.08\,,\ C=1\,,\ C=2$. We consider $N=12$  trading dates, which corresponds to operational practices on electricity markets, for an option expiring in three months. One can observe that BS results are very similar to VO results for $C\geq 1$ i.e. for \textit{almost Gaussian returns}.  
However, for small values of $C$, for $C=0.08$,  corresponding to highly non Gaussian returns,  BS approach under-estimates \textit{out-of-the-money} options and over-estimates \textit{at-the-money} options (for $K=99$ Euros the BS  initial capital is equal to $8.65$ Euros i.e. $122\%$ of the VO initial capital, while for $K=150$, it vanishes to $23$ Cents i.e.  only  $57\%$ of the  VO initial capital). 
%Moreover, the hedging strategy differs noticeably for $C=0.08$, while it is quite similar to BS's ratio for $C\geq 1$.  
%For instance, on Figure~\ref{tab:IC}, one can observe that for $K=99$ Euros the Black-Scholes  Initial Capital ($\textrm{IC}_{BS}$) represents $122\%$ of the Variance Optimal Initial Capital ($\textrm{IC}_{VO}$), while for $K=150$ it represents only  $57\%$ of the  variance optimal initial capital. Moreover, the hedging strategy differs noticeably for $C=0.08$, while it is quite similar to BS's ratio for $C\geq 1$.  
%
\begin{figure}[htbp]
\begin{center}
   \epsfig{figure=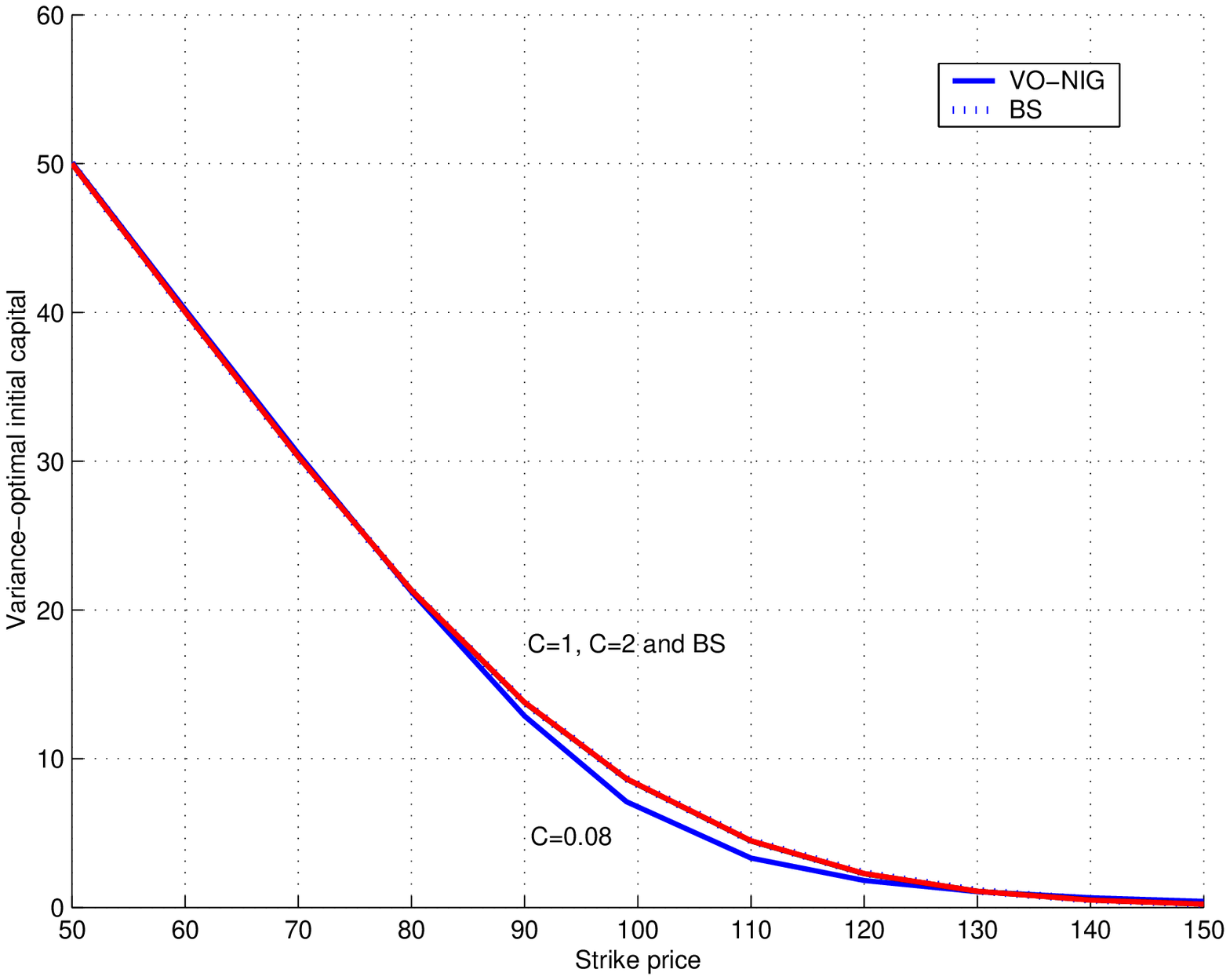,height=5cm, width=7.5cm}
   $\quad$
   \epsfig{figure=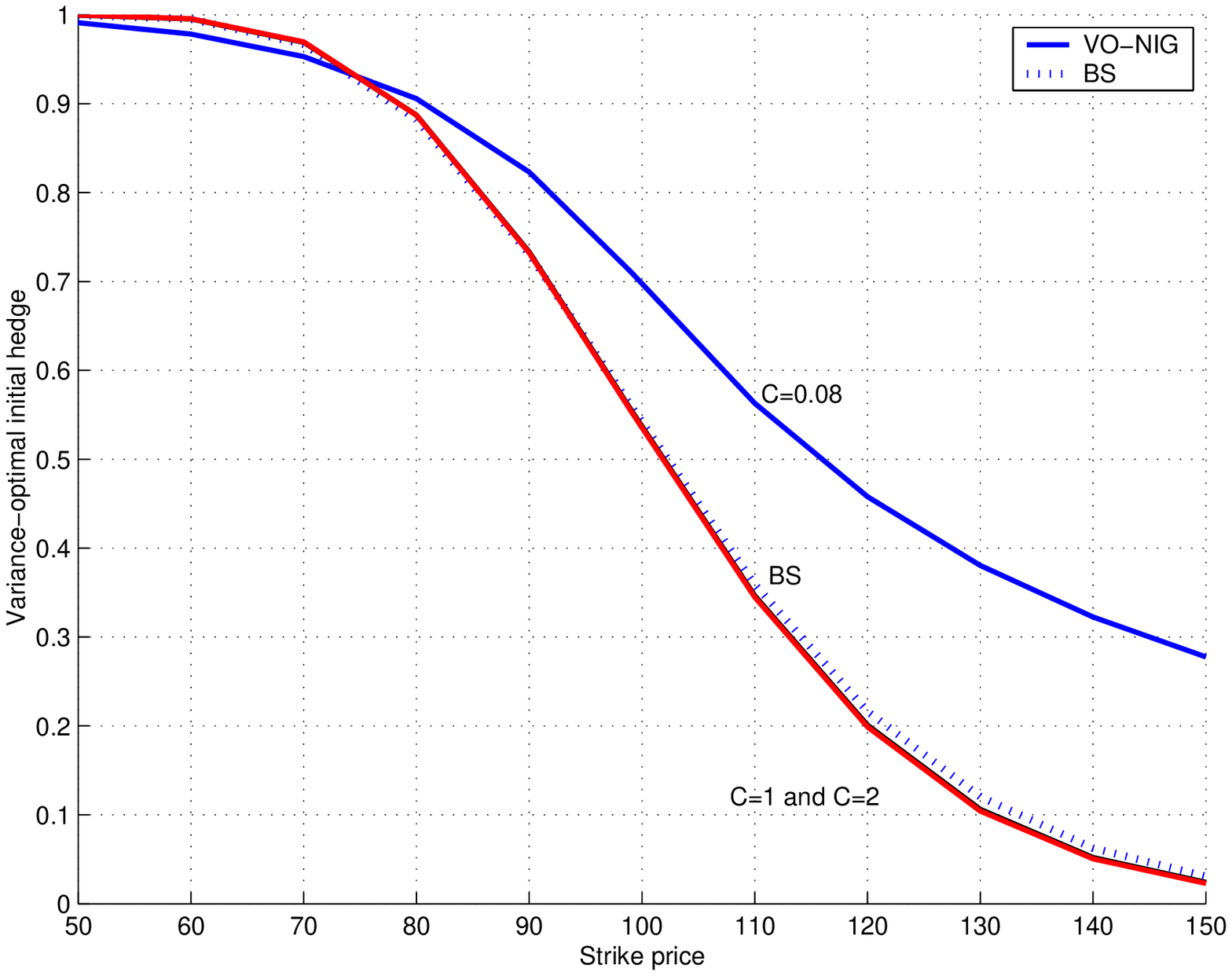,height=5cm, width=7.5cm}
\end{center}
\vspace{-.5cm}
\caption{{\small Initial capital (on the left) and hedge ratio (on the right) w.r.t. the strike, for $C=0.08\,,\ C=1\,,\ C=2$.  }}
\label{fig:strike}
\end{figure}
%
%\begin{figure}[htbp]
%\begin{center}
%\begin{tabular}{|c||c|c|c|c|}
%\hline
% \textrm{Strikes} & $K=50$ & $K=99$ & $K=150$\\
%\hline
%\hline
%$\textrm{IC}_{VO}$  &  $50.08$ &  $7.11$ &  $0.40$  \\
%\hline
%  $\textrm{IC}_{BS}$ (vs $\textrm{IC}_{VO}$) &  $50.00$ ($99.56\%$)&  $8.65$ ($121.73\%$)&  $0.23$ ($57.30\%$) \\
%%\hline
%%$\textrm{IC}_{VO}$ $C=1$ &  $50.00$ ($100\%$)&  $8.64$ ($100.20\%$)&  $0.22$ ($104.52\%$) \\
%% \hline
%% $\textrm{IC}_{BS}/\textrm{IC}_{VO}$&   &  &    \\
%\hline
%\end{tabular}
%\end{center}
%\caption{{\small Initial Capital of VO strategy ($\textrm{IC}_{VO}$) vs Initial Capital of BS strategy ($\textrm{IC}_{BS}$) for $C=0.08$. }}
%\label{tab:IC}
%\end{figure}

\subsubsection{Hedging error and number of trading dates}
%%%%%%%%%%%%%%%%%%%%%%%%%%%%%%%%%%%%%%%%%%%%%%%%%%%%
Figure~\ref{fig:trading:dates} considers the hedging error (the
difference between the terminal value of the hedging portfolio and the
payoff) w.r.t. the number of trading dates, for a strike
$K=99$ Euros (at the money) and for five different sets of parameters $C$
given on Table~\ref{tab:kurtosis}. 
The bias (on the left graph) and standard deviation (on the right graph) of the hedging error have been estimated by Monte Carlo method on $5000$ runs. Note that we could have used the formula stated in Theorem~\ref{thm34} to compute the variance of the error, but this would have given us the limiting error which does not take into account the additional error due to the finite number of trading dates. 

In terms of standard deviation, the VO strategy seems to outperform noticeably the BS strategy, for small values of $C$ (
%one can observe on Figure~\ref{tab:hedging:error}, 
for $C=0.08$ the VO strategy allows to reduce $10\%$ of the standard deviation of the error).  
As expected, one can observe that the VO error converges to the BS error when $C$ increases. This is due to the convergence of  NIG log-returns to Gaussian log-returns when $C$ increases (recall that the simulated log-returns are almost symmetric). 
On Figure~\ref{fig:trading:dates}, the hedging error (both for BS and VO) decreases with the number of trading dates and seems to converge to a limiting error. Here, it is interesting to distinguish two sources of incompleteness, the \textit{rebalancing error} due to the finite number of trading dates and the \textit{intrinsic error} due to the price model incompleteness. 
%For $C=1$ and for a small number of trading dates $N\leq 5$, the \textit{rebalancing error} represents the most part of the hedging error, then it  seems to vanish over $N=30$ trading dates, where the \textit{intrinsic error} is predominant. 
For instance, one can observe that for small values of $C\leq 0.2$, even for small numbers of trading dates,  the \textit{intrinsic error} seems to be predominant so that it seems useless to increase the number of trading dates over $N\geq 12$ trading dates. 
Moreover, surprisingly one can observe that for a small number of trading dates $N\leq 12$ and for large values of $C\geq 1$,  BS seems to outperform the VO strategy, in terms of standard deviation.
 This can be interpreted as a consequence of the central limit theorem. Indeed, when the time between two trading dates increases the corresponding increments of the L\'evy process converge to a Gaussian variable. 
 %Hence, the model error committed by the BS strategy decreases when the number of trading dates decreases. 
Similarly to the observation of \cite{DGKMP}, section 5., in term of hedging
errors, BS strategy seems to be quite close to VO strategy.
The same kind of conclusions were obtained in the discrete time setting
by \cite{AH09}.

In term of bias, the over-estimation of at-the-money options (observed
for $C=0.08$, on Figures~\ref{fig:strike}) seems to induce
a positive bias for the BS error (see Figure~\ref{fig:trading:dates}),
whereas the bias of the VO error is negligible (as expected from the
theory). 

\begin{figure}[htbp]
\begin{center}
   \epsfig{figure=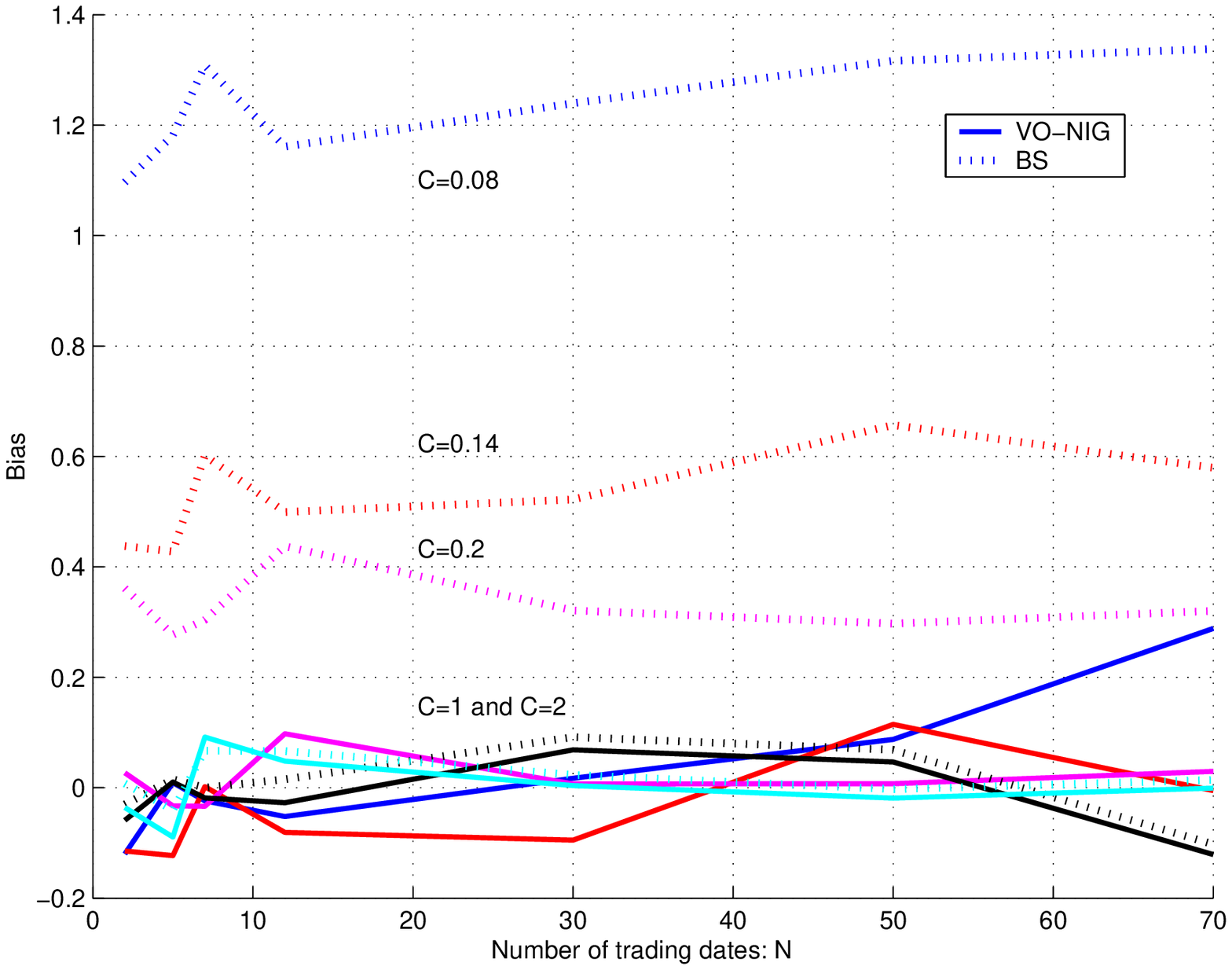,height=5cm, width=7.5cm}
   $\quad$
   \epsfig{figure=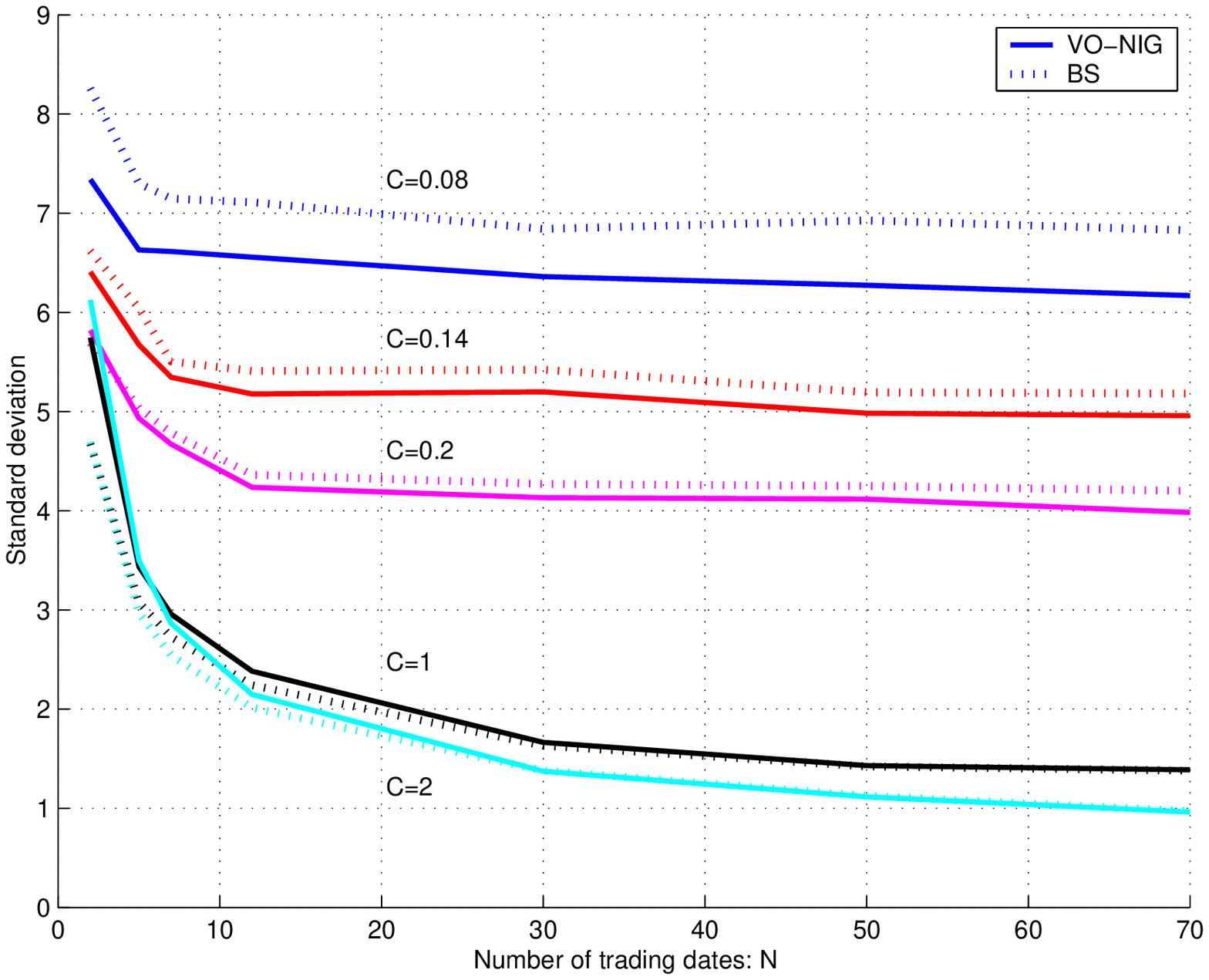,height=5cm, width=7.5cm}
\end{center}
\vspace{-.5cm}
\caption{{\small Hedging error w.r.t. the number of trading dates for different values of $C$ and for $K=99$ Euros (bias, on the left 	and standard deviation, on the right).}}
\label{fig:trading:dates}
\end{figure}
\subsection{Exponential of additive processes}
%%%%%%%%%%%%%%%%%%%%%%%%%%%%%%%%%
%%%%%%%%%%%%%%%%%%%%%%%%%%%%%%%%%
%We consider the problem of hedging and pricing a European call on an electricity forward, with a maturity $T=0.25$ of three month. The  maturity is equal to the delivery date of the forward contract $T=T_d$. As stated in Section~\ref{sec:elec},  the natural hedging instrument is the corresponding forward contract with value $S^0_t=e^{-r(T-t)}(F_t^{T}-F_0^T)$ for all $t\in [0, T]$, 
%where $F^T=F$ is supposed to follow the  NIG one factor model: 
%$$
%F_t=e^{X_t}\ ,\quad \textrm{where $X_t=\int_0^t \sigma_s e^{-\lambda(T-u)}d\Lambda_u$} \quad  \textrm{where $\Lambda$ is a LÂï¿œvy process with  }\quad \Lambda_1\,\sim\, NIG(\alpha,\beta,\delta,\mu)\ .
%$$
%
In this subsection, we simulate the log-price process $X$ as
 an additive process such that  
$$
X_t=\int_0^t \sigma_s e^{-\lambda(T-u)}d\Lambda_u \quad  \textrm{where $\Lambda$ is a L\'evy process with  }\quad \Lambda_1\,\sim\, NIG(\alpha,\beta,\delta,\mu)\ .
$$
The standard set of parameters $(C=1)$ for the distribution of $\Lambda_1$ is  estimated on the same data as in the previous section (\textit{Month-ahead} \textit{base} forward prices of the French Power market in 2007):  
$$\alpha=15.81 \,,\  \beta= -1.581\,,\ \delta = 15.57\,,\  \mu= 1.56 \ .$$
Those parameters correspond to  a standard and centered NIG distribution with a skewness of $-0.019$.
The estimated annual  short-term volatility and mean-reverting rate are $\sigma_s=57.47\%$ and $\lambda=3$. 
The other sets of parameters considered in simulations are obtained by multiplying parameter  $\alpha$ by a coefficient $C$, ($\beta,\delta,\mu$ being such that the first three moments are unchanged). 

%Figure~\ref{fig:trading:datesPAI} shows the bias and standard deviation of the hedging error as a function of the number of trading dates estimated by Monte Carlo method on $5000$ runs. 
The results are comparable to those obtained in the case of the L\'evy
process, on Figure~\ref{fig:trading:datesPAI}. 
%However, one can notice that the BS strategy does no more outperform the VO strategy for small numbers of trading dates as observed in the L\'evy case. This is due to the fact that $X_t$ is no more a sum of i.i.d. variables.  
%
\begin{figure}[htbp]
\begin{center}
   \epsfig{figure=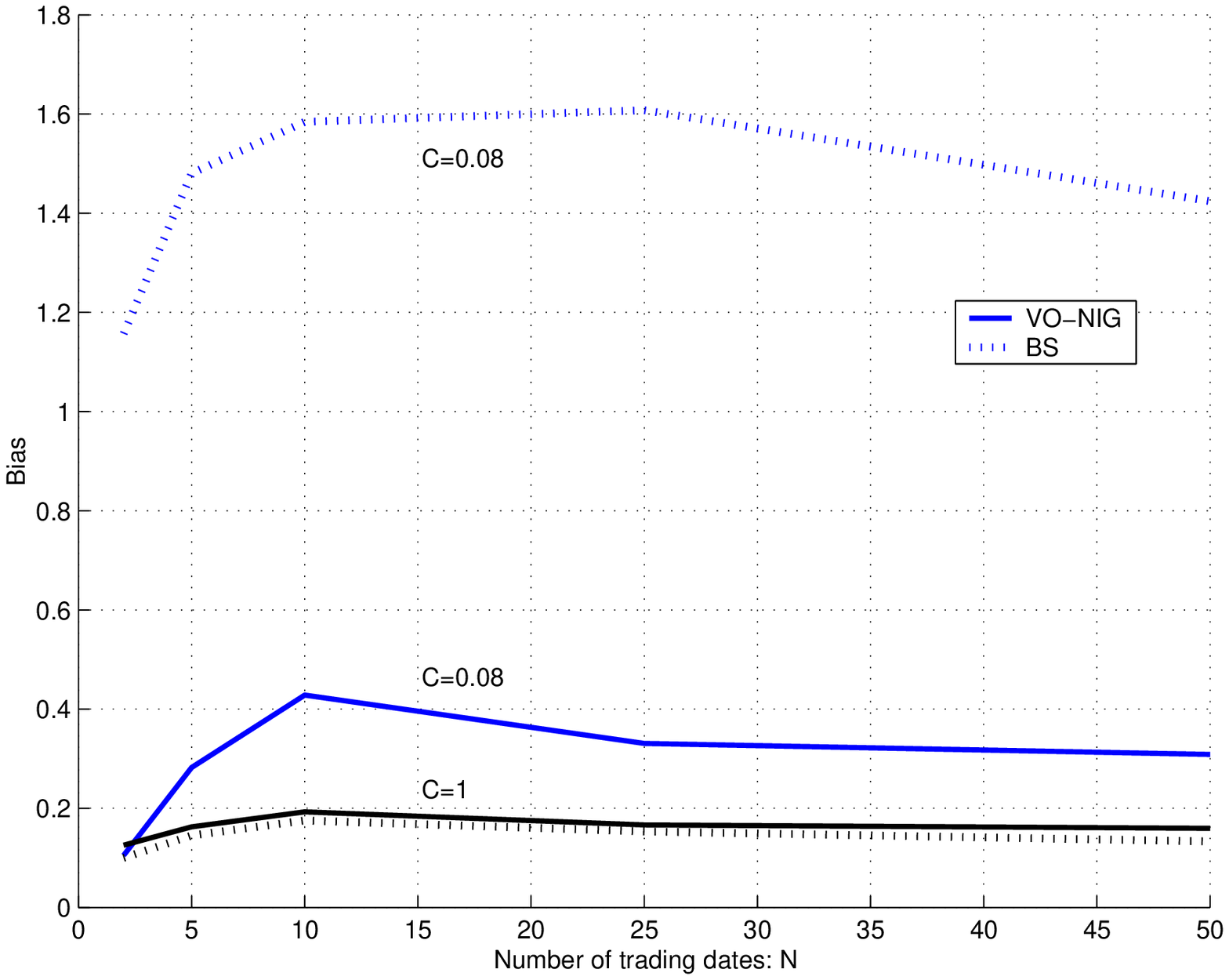,height=5cm, width=7.5cm}
   $\quad$
   \epsfig{figure=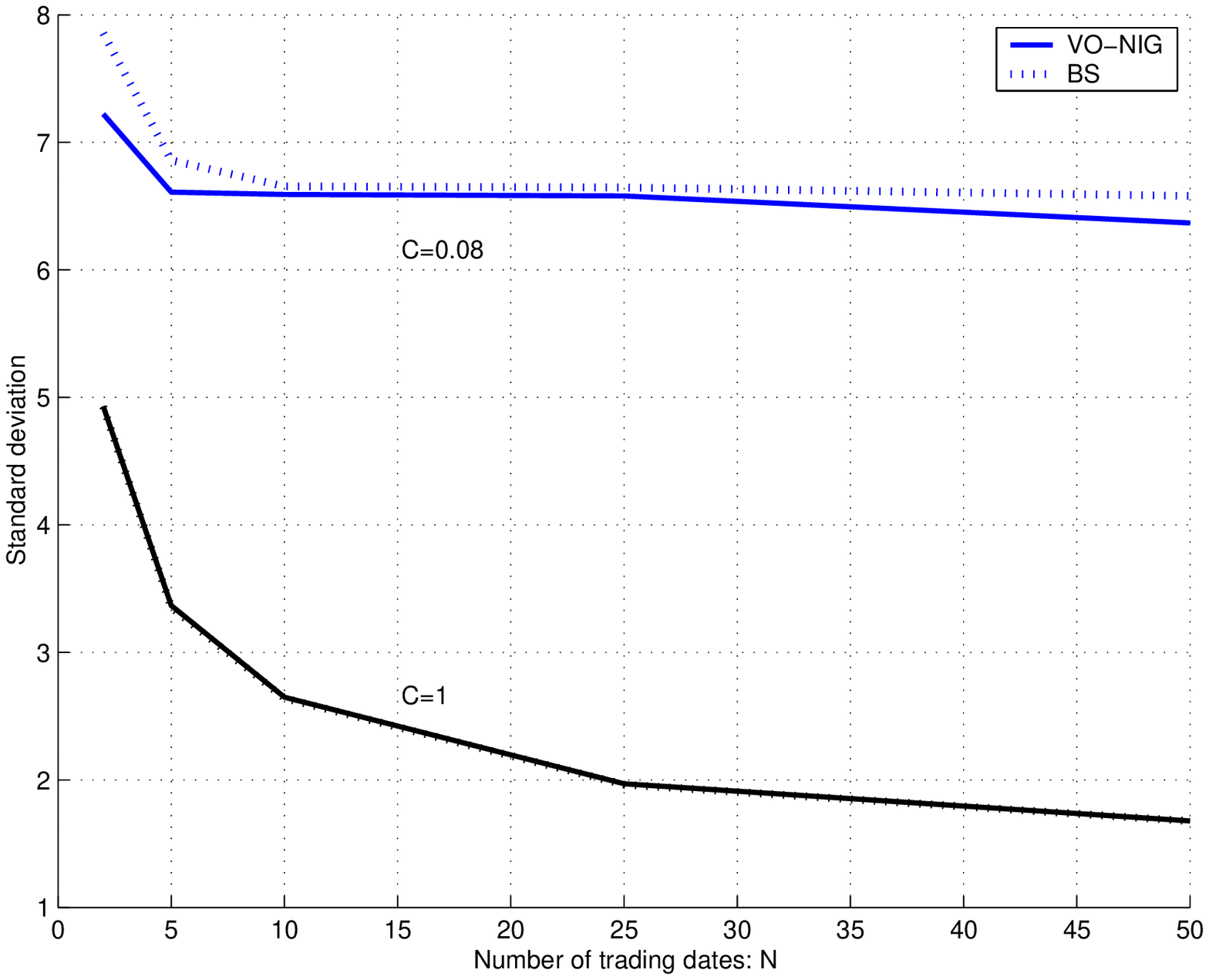,height=5cm, width=7.5cm}
\end{center}
\vspace{-.5cm}
\caption{{\small Hedging error w.r.t. the number of trading dates for $C=0.08$ and $C=1$, for $K=99$ Euros (bias, on the left 	and standard deviation, on the right).}}
\label{fig:trading:datesPAI}
\end{figure}
%
%\begin{figure}[htbp]
%\begin{center}
%\begin{tabular}{|c||c|c|c|c|c|c|c|}
%\hline
% \textrm{Moments} & Mean & Standard deviation & Skewness & Kurtosis  \\
%\hline
%\hline
% VO& $0.43$&  $6.59$&  $-2.89$&  $16.24$ \\
%\hline
% BS& $1.58$& $6.65$&  $-3.79 $&  $25.53$ \\
%\hline
%\end{tabular}
%\end{center}
%\caption{{\small Empirical moments of the hedging error for $C=0.08$, $N=10$ and $K=99$ Euros. }}
%\label{tab:hedging:error:PAI}
%\end{figure}
%

%\newpage

\bigskip
{\bf ACKNOWLEDGEMENTS:} The first named author was partially founded
by Banca Intesa San Paolo. 
%The second named author was supported by FiME, Laboratoire de Finance des
% March\'es de l'Energie (Dauphine, CREST, EDF R\&D) www.fime-lab.org. \\
The research of the third named author was partially 
supported by the ANR Project MASTERIE 2010 BLANC-0121-01. 
All the authors are grateful to F. Hubalek for useful advices to
improve the numerical computations of Laplace transforms performed in
simulations. They also thank the Referee  for
reading carefully the manuscript and making useful comments.

\bigskip

%%%%%%%%%%%%%%%%%%%%%%%%%%%%%%%%%%%%%%%%%
%BIBLIO
%%%%%%%%%%%%%%%%%%%%%%%%%%%%%%%%%%%%%%%%

\end{document}